\documentclass[a4paper,10pt]{article}

\usepackage[pages=all, color=black, position={current page.south}, placement=bottom, scale=1, opacity=1, vshift=5mm]{background}
\SetBgContents{
} 
\usepackage[margin=0.92in]{geometry}

\usepackage{amsmath}
\usepackage{amsthm}
\usepackage{amssymb}
\usepackage{soul}

\usepackage[utf8]{inputenc}
\usepackage{hyperref}
\hypersetup{
	unicode,
	pdfauthor={Author One, Author Two, Author Three},
	pdftitle={A simple article template},
	pdfsubject={A simple article template},
	pdfkeywords={article, template, simple},
	pdfproducer={LaTeX},
	pdfcreator={pdflatex}
}

\usepackage[sort&compress,numbers,square]{natbib}

\theoremstyle{plain}
\newtheorem{theorem}{Theorem}[section]
\newtheorem{corollary}[theorem]{Corollary}
\newtheorem{lemma}[theorem]{Lemma}

\newtheorem{proposition}[theorem]{Proposition}

\theoremstyle{definition}
\newtheorem{definition}[theorem]{Definition}
\newtheorem{example}[theorem]{Example}
\newtheorem{remark}[theorem]{Remark}

\usepackage{graphicx, color}
\graphicspath{{fig/}}

\usepackage{algorithm, algpseudocode} 
\usepackage{mathrsfs}

\definecolor{vert}{rgb}{0.42, 0.557, 0.137}

\title{Lifts of quantum CSS codes}
\author{Virgile Guemard$^{1, 2}$}

\date{
	$^1$Aix Marseille Université, I2M, UMR 7373, 13453 Marseille, France\\%
        $^2$Inria Paris, France \\[2ex]%
	\today
}

\usepackage{dcolumn}
\usepackage{bm}
\usepackage[english]{babel}
\usepackage[OT1]{fontenc}

\usepackage[colorinlistoftodos, color=green!40, prependcaption]{todonotes}

\usepackage{amsthm}
\usepackage{mathtools}
\usepackage{physics}
\usepackage{xcolor}
\usepackage{graphicx}
\usepackage{tikz-cd}
\usepackage{bbold}

\usepackage{adjustbox}
\usepackage{placeins}

\usepackage{csquotes}
\usepackage{booktabs}

\usepackage[caption=false]{subfig}
\captionsetup[table]{position=bottom} 
\captionsetup[figure]{justification=justified, singlelinecheck=off}

 \makeatletter
\renewcommand{\fnum@figure}{FIG. \thefigure}
\makeatother
\usepackage{tikz}
\usetikzlibrary{decorations.markings}

\usepackage{amsmath,mleftright}
\usepackage{xparse}

\NewDocumentCommand{\evalat}{sO{\big}mm}{%
  \IfBooleanTF{#1}
   {\mleft. #3 \mright|_{#4}}
   {#3#2|_{#4}}%
}

\usepackage{amsthm}

\usepackage{extpfeil}
\usepackage{tikz-cd}
\usepackage{svg}

\usepackage{titlesec}
\titleformat{\section}
  {\normalfont\bfseries}{\thesection}{1em}{}
\titleformat{\subsection}
  {\normalfont\bf}{\thesubsection}{1em}{}{\normalfont\bfseries}
\titleformat{\subsubsection}
  {\normalfont\bfseries}{\thesubsubsection}{1em}{}
\renewcommand{\thesubsection}{\thesection.\arabic{subsection}}
\renewcommand{\thesection}{\arabic{section}}
\renewcommand{\thesubsubsection}{\thesubsection.\arabic{subsubsection}} 

\usetikzlibrary{arrows}

\usepackage{leftindex}
\makeatletter
\newtheorem*{rep@theorem}{\rep@title}
\newcommand{\newreptheorem}[2]{%
\newenvironment{rep#1}[1]{%
 \def\rep@title{#2 \ref{##1}}%
 \begin{rep@theorem}}%
 {\end{rep@theorem}}}
\makeatother
\newreptheorem{proposition}{Proposition}

\begin{document}
	\maketitle

\begin{abstract}

We propose a notion of lift for quantum CSS codes, inspired by the geometrical construction of Freedman and Hastings  \cite{Freedman2020CSS_Manifold}. It is based on the existence of a canonical complex associated to any CSS code, that we introduce under the name of Tanner cone-complex, and over which we generate covering spaces. As a first application, we describe the classification of lifts of hypergraph product codes (HPC) and demonstrate the equivalence with the lifted product code (LPC) of Panteleev and Kalachev \cite{Panteleev2021}, including when the linear codes, factors of the HPC, are Tanner codes. As a second application, we report several new non-product constructions of quantum CSS codes, and we apply the prescription to generate their lifts which, for certain selected covering maps, are codes with improved relative parameters compared to the initial one.
\end{abstract}
\tableofcontents

\section{Introduction}

\subsection{Context}

Recent years have witnessed growing effort in constructing low-density parity-check (LDPC) quantum error correcting codes, as well as decoding and implementing them fault tolerantly. This has led, in particular, to the findings of LDPC codes with asymptotically good performances, which are families of quantum CSS codes with constant rate and relative distance. Apart from being good candidates for implementation in full scale fault-tolerant quantum computer architectures, they also constitute inspiration to devise new codes of moderate length. \par

The historical development leading to these codes reveals various methods. First, with the work of Hastings, Haah and O'Donnell \cite{Hastings2020}, the CSS codes relied on an algebraic construction inspired by the definition of fiber bundles in geometry. It was then understood that the tensor product of modules over Abelian group rings, introduced by Panteleev and Kalachev  \cite{Panteleev2020,Panteleev2021degeneratequantum} under the name of lifted product, could reach a wide parameter space, which includes short length codes of high rate and distances. Then, the related tensor product of modules over non-commutative rings, or balanced product, was proposed by Breuckmann and Eberhardt  \cite{Breuckmann2021} as a solution to reach potentially better code parameters. Independently, Panteleev and Kalachev \cite{Panteleev2021}, built a family of codes with asymptotically good parameters using the balanced product of two linear codes. Using the operation of lift of linear codes, these two codes are built as modules over a non-commutative group ring. A different strategy was soon after undertaken by Leverrier and Z\'emor \cite{Leverrier2022,leverrier2022decoding}, generalizing the classical Tanner construction \cite{Tanner1981ARA}, and simplifying the previous approach of \cite{Panteleev2021}.\par

In these constructions, the methodology relies on using closely related underlying components, which are geometrical complexes, called left-right square complexes  \cite{Dinur2021}, built from the product of two graphs or an algebraic analogue designated as balanced product complex, combined with local linear codes. For instance, in \cite{Panteleev2021}, these two graphs are generated as the lift of linear codes, endowing them with a chosen group action. This set of constituents, still under studies, appears as a cornerstone of quantum coding theory, with several follow-up constructions and analysis \cite{LinHsie2022,Dinur2023,panteleev2024maximally}.\par 

In classical coding theory, the lift of a code \cite{Thorpe2003, Pusane2011} is a practical and natural method for generating families of LDPC codes with dimension and distance increasing linearly in the length of the code, from a single input code represented by its Tanner graph, often called a protograph in this setting. Geometrically, lifting a linear code is equivalent to taking a covering of its Tanner graph representation. A consequence of this method is that the Tanner graphs of the input and output codes are locally homeomorphic. Asymptotically good families of linear codes built from geometrical lifts exist. For example, the expander codes of Sipser and Spielman \cite{Sipser1996} can be constructed from Cayley graphs, which are geometrical coverings of a simple base graph. This produces, in the best case, a family of expander graphs or even Ramanujan expander graphs \cite{Lubotzky1988}, which are regular graphs with respectively strong and optimal connectivity. For all these reasons, the lift is a central tool for building new codes from old ones in classical coding theory.\par 

Up until now, quantum CSS codes did not enjoy a general operation taking as input any CSS code and outputting another code of increased length, such that the Tanner graphs of the input and output CSS codes are locally homeomorphic. The orthogonality constraint of the two linear codes, required to describe a quantum CSS code, makes a lift of a CSS code much harder to define than for a linear code. Trying to apply the classical recipe on the Tanner graph representation does not always yield a valid quantum CSS code: the orthogonality constraint of the two linear codes can be lost. Only in some particular cases of CSS constructions, a notion of lift is already defined. Firstly, for product codes, such as hypergraph product codes (HPC) of Tillich and Z\'emor \cite{Tillich2009} and their generalization, lifted products codes (LPC) \cite{Panteleev2021}, a lift is an operation that consists in applying the classical lift on each of the linear factor codes. Secondly, for a code defined by the cellulation of a surface, a lift is naturally obtained via a covering of that surface. It is interesting to ask if these two examples can be found under a unified framework.

\subsection{Contributions}\label{Section Contributions and outline}

This article introduces the concept of lift of a quantum CSS code. The key ingredient to define the lift of a quantum CSS code is a geometrical object faithfully representing the code, similarly to the fact that a lift of a linear code can be obtained from a covering of its Tanner graph. Any CSS code admits a geometrical representation provided by a complex that we introduce under the name of \textit{Tanner cone-complex}, in Definition \ref{definition Tanner cone-complex 1}. The lift is then obtained from a covering of that complex. \par
A non-trivial lift of a CSS code only exists if its Tanner cone-complex admits a non-trivial fundamental group. When this is satisfied, the lift of a code enjoys the following properties:
\begin{enumerate}
    \item     For an input CSS code of length $n$, the lift is a code of size given by an integer multiple of $n$,
    \item The maximum weight of rows and columns of the lifted check matrices is unchanged compared to that of the input code,
    \item   Applied to a classical code, it coincides with the lift of linear codes,
    \item Applied to an HPC, it coincides with the LPC construction described in \cite{Panteleev2021}.
\end{enumerate}
Our definition of lift of CSS codes, based on covering maps of the Tanner cone-complex, is therefore a valid generalization of known constructions. Furthermore, we show how the lift can be related, in some cases, to a certain type of fiber bundle code \cite{Hastings2020} and we draw another parallel with the balanced product code \cite{Breuckmann2021}.\par

Originally, the idea of lift of a quantum CSS code came from the study of an article by Freedman and Hastings \cite{Freedman2020CSS_Manifold}, which provides a manifold representation to any CSS code. A natural way to lift a code is then given by generating coverings of this manifold. In a separate note \cite{Guemard2023Unpublished}, we show how this is related to a covering of the Tanner cone-complex.\par

The parameters, dimension and distance, of a lifted CSS code are, in general, hard to determine. One can only hope to do a case by case analysis. As a first step towards understanding the evolution of code parameters, we apply the construction to an arbitrary HPC. We show, in that case, that the lift is more general than simply lifting the two linear codes, factors of the HPC, independently, and we provide a complete classification of lifts of HPCs based on the notion of Goursat quintuples \cite{Johnson1990}, which gives a correspondence between subgroups of group product and quintuples composed of subgroups of the factor group. By analyzing this construction, we find out that the underlying structure is similar to a left-right square complex at the center of the state-of-the-art CSS codes of \cite{Panteleev2021} and good locally testable codes of Dinur et al. \cite{Dinur2021}. This is formulated in Proposition \ref{Proposition Antidiagonal action Goursat}. In particular, Corollary \ref{Corollary linear parameters} shows that the asymptotic parameters of a family of codes generated by the lift of a single input code can, in principle, be linear.\par

Secondly, the lift of a quantum code can be expressed as a left or a right module over a certain group ring, when the associated covering of its Tanner cone-complex is regular. As a result, it is possible to apply the tensor product of modules on two such codes. This constitutes a systematic way of generating balanced product of quantum CSS codes, as first pointed out in \cite{Breuckmann2021}. This is formulated in Definition \ref{Definition balanced product of CSS codes}. The hope of such a construction is to improve on the tensor product operation of quantum codes studied in \cite{Zeng2018,Audoux2015}. In this article, we are not able to claim this result. We therefore leave this analysis for future work.\par

Finally, in Section \ref{section New constructions and lifts}, we introduce three new families of codes which are specifically designed to be lifted. For clarity, we refer to a code built from 2D cellular chain complex as of $\operatorname{E}$-type, since the qubits are assigned to edges. The new type of code that we introduce here are referred to as of $\operatorname{V}$-type, because the qubits, as well as $X$ and $Z$-checks are assigned to the vertices. For each one of these codes, the associated Tanner cone-complex has a fundamental group isomorphic to a predetermined infinite group.\par
The first family, although greatly inspired by topology, cannot be given as the cellular chain complex of any 2D cell complex. The second family is a generalization of the first, having nevertheless a less straightforward connection to topology. The third family is built from similar ideas, but the assignment of qubits and checks to vertices is done according to a different configuration. For each of them, the row and column weight of the parity check matrices can be made strictly greater than 2, so that cells cannot be seen as edges of the 1-skeleton, i.e. a graph, of a 2D complex. The lift of these abstract CSS codes, is hence a relevant and practical way to build a diversity of codes with increased length and parameters, and endowed with a certain group action. For some chosen input codes with low enough weight parity-check matrices, we generate all possible lifts of degree (the degree of the associated covering) 1 to 59 for the first family,  and only lifts of specific degrees for the second and third. This produces codes with moderate length (less than 2000 qubits) and we only report the ones with interesting parameters. The criteria that we choose to compare our codes is the quantity $\frac{kd^2}{n}$, where $n$ is the length of the code, $k$ its dimension and $d$ its distance. An upper bound for the distance is calculated in GAP \cite{GAP4} with the package $\mathtt{QDistRnd}$ \cite{Pryadko2022}. We leave a more rigorous analysis of these new families for future work.\par

\subsection{Outline}

The article is organized as follows. Section \ref{section Preliminaries} reviews generalities on linear and quantum coding theory, where the emphasis is put on their chain complex formulation. In Section \ref{section geometrical lifts of linear code}, we set the methodology for lifting a graph and linear codes with voltage assignments. \par

Section \ref{section Lift of a quantum CSS code} is the core of our paper, where we dive into the new object introduced under the name of Tanner cone-complex, and define the lift of a CSS code. The explicit construction of a lifted code is exposed in Section \ref{section Lift of a CSS code: explicit construction}. Sections \ref{section relation to fiber bundle codes} and \ref{section relation to balanced product codes} give alternative formulations of the lift related to fiber bundle and balanced product codes.\par

Section \ref{section Applications} presents two applications of the lift. The first one is a classification result for lifts of HPCs, addressed in Section \ref{section classification of lifts of HPC}, which also contains a detailed study showing the link between the lift of CSS codes and the LPC constructions. In Section \ref{section balanced product of quantum CSS codes} we present how to obtain the balanced product of two quantum CSS codes.

Lastly, Section \ref{section New constructions and lifts} is where we introduce newly discovered CSS codes and generate lifted codes. This section is divided in three parts. The first presents the general procedure to build the codes and analyze their parameters numerically. The second and third part introduce explicit examples.

\section{Preliminaries}\label{section Preliminaries}

\subsection{Chain complexes}\label{section chain complexes}

In this work, a \textit{chain complex} $C$ is defined as a sequence $(C_{\bullet },\partial_{\bullet })$ of $\mathbb{F}_2$-vector spaces and linear maps, 

\begin{equation}
C:= \cdots \longrightarrow 
C_{i+1} \stackrel{\partial_{n+1}}{\longrightarrow}
C_i \stackrel{\partial_n}{\longrightarrow}
C_{i-1} \stackrel{\partial_{n-1}}{\longrightarrow}
\cdots,
\end{equation} 
such that the composition of any two consecutive maps satisfies $ \partial_i \circ \partial_{i+1}=0$ for all $n$. Objects in $C_i$ are called $i$-\textit{chains} and the linear maps $\partial_i$ are called the \textit{boundary maps}. The index $i$ of a $i$-chain is referred to as the degree. \par

The spaces $C_i$ may have extra structure. For example, they may be modules over a common ring $R$. In that case, the boundary maps must preserve this extra structure: they must be module homomorphisms. We will, for example, encounter situations in which $R$ is a group algebra $\mathbb{F}_2[G]$. \par
Unless otherwise stated, each space $C_i$ has finite dimension $n_i$. It is always given with a preferred basis $\mathcal{B_i}$ and can be identified with $ \mathbb{F}_2^{n_i}$. We can also represent each boundary map $\partial_i$ by a matrix $\operatorname{Mat}_\mathcal{B_i}(\partial_i)$, and for simplicity we identify it with this representation. Moreover, each space is endowed with a symmetric bilinear form, $\langle,\rangle: C_i\times C_i\to \mathbb{F}_2$ given on two $i$-chains by $\langle c,c'\rangle= c_1 c_1'+c_2c_2'+\dots +c_{n_i}c_{n_i}' $, where addition and multiplication are in $\mathbb{F}_2$.\par

For each degree $i$, we define two subspaces of $C_i$:
\begin{itemize}
\item  $Z_i = \operatorname{Ker} \partial_i$, whose elements are called the \textit{cycles},
\item $ B_i =\operatorname{Im}  \partial_{i+1}$, whose elements are called the \textit{boundaries}.
\end{itemize}
From the composition property of consecutive boundary maps, the following embedding relation $ B_i \subseteq Z_i \subseteq C_i$ holds for all $i$.  The $i$-th \textit{homology group} is defined as the vector space, 
\begin{equation}
H_i(C) = Z_i / B_i = \operatorname{Ker} \partial_i / \operatorname{Im}  \partial_{i+1}.
\end{equation}
\par
The \textit{cochain complex} is the dual of a chain complex and defined as a sequence $(C^{\bullet },\delta_{\bullet })$ of dual vector spaces, whose objects are called \textit{cochains}, and linear maps called \textit{coboundary maps}. It is obtained from a chain complex $C$ by replacing each vector space $C_i$ by its dual $ C^{i}:=\operatorname{Hom} (C_{i},\mathbb{F}_2)$, and $\partial_{i}$ by its dual map $\delta_{i}:C^{i-1}\to C^{i}$,
\begin{equation}
C^\bullet= \cdots \longleftarrow 
C^{i+1} \stackrel{\delta_{i+1}}{\longleftarrow }
C^i \stackrel{\delta_i}{\longleftarrow }
C^{i-1} \stackrel{\delta_{i-1}}{\longleftarrow }
\cdots.
\end{equation}
The composition of any two consecutive coboundary maps is the zero map: $ \delta_{i} \circ \delta_{i-1}=0$ .  Finite dimensional vector spaces and their dual spaces are isomorphic, and here an isomorphism can be constructed from the bilinear form $\langle,\rangle$ on $C_i$: this is the linear map given on a $i$-chain by $c\mapsto \langle c, \rangle \in \operatorname{Hom}(C_i,\mathbb{F}_2)$. For each dual space $C^i$, we fix a basis given by $\{\langle c, \rangle \:|\: c\in \mathcal{B_i}\}$ so that we have the identification $C_i=C^i$, and endow it with the same symmetric bilinear form as $C_i$. This way, the matrix representations of the boundary and coboundary maps of same degree are transposed of each other, i.e. $\delta_i=\partial^T_i$.\par

For each degree $i$, we define two subspaces of $C^i$:
\begin{itemize}
\item  $Z^i = \operatorname{Ker} \delta_{i+1}$ whose elements are called the \textit{cocycles},
\item $ B^i = \operatorname{Im}    \delta_{i}$ whose elements are called the \textit{coboundaries}.
\end{itemize}
From the composition  property of consecutive coboundary maps, the embedding relation $ B^i \subseteq Z^i \subseteq C^i$ holds for all $i$ and from commutativity we can define the $i$-th \textit{cohomology group} $H_i(C)$ by 
\begin{equation}
H^i(C) = Z^i / B^i = \operatorname{Ker} \delta_{i}/\operatorname{Im}   \delta_{i-1}.
\end{equation}
\par
When the complex is a sequence of finite dimensional vector spaces, which is the case unless otherwise stated, the $i$-th homology and cohomology groups have the same dimension and are therefore isomorphic as vector spaces, i.e. $H_i(C)\cong H^i(C)$\footnote{ This is not true in the general case.}.\par

Lastly, there exists a product operation for pairs of chain complexes. Let $C$ and $C'$ be two chain complexes. The homological product $C\otimes C'$ is the chain complex $(A_\bullet,\partial^A_\bullet)$ such that
\begin{equation}
A_i=\bigoplus_j C_j\otimes C_{i-j}'.
\end{equation}
It has a boundary operator $\partial_i^A$ acting on $A_i$, such that its restriction on a summand $C_j\otimes C_{i-j}'$ is given by
\begin{equation}\label{equation boundary of tensor product}
\partial^A_{i}|_{j,i-j}=\partial_j\otimes \operatorname{id}_{C'} + (-1)^{i-j} \operatorname{id}_C\otimes\partial_{i-j}'.
\end{equation}
In our case, vector spaces are defined over $\mathbb{F}_2$ and the alternating sign in the second term can be omitted. \par 
The tensor product combines a $j$-cycle of $C$ and an $(i-j)$-cycle of $C'$ to create an $i$-cycle in $C\otimes C'$. The Künneth theorem for field coefficients relates the homology of the product complex to the homology of $C$ and $C'$:
\begin{equation}
 H_{i}(C\otimes C')\cong    \bigoplus _{j}H_{j}(C)\otimes H_{i-j}(C'). 
\end{equation}
Finally, product complexes can also be defined for pairs of chain complexes which are sequences of modules over a group algebra $R$, using the tensor product of modules over that algebra. In that case, the complex is sometimes called the balanced product, and is written $C\otimes_{R }C'$ \cite{Brown1982}. Its boundary maps are defined similarly to Equation \eqref{equation boundary of tensor product}.

\subsection{Linear codes}\label{section linear codes}

In this section, we recall some definitions and set up notations related to linear codes, as they play a key role in the construction of quantum CSS codes. \par
A linear $[n,k]$ code $C$ is a $k$-dimensional subspace of a $\mathbb{F}_2$-space $E$.  In this work, $E$ is always given with a basis $\mathcal{B}$, and is identified with $\mathbb{F}_2^n$. Its parameters $n$ and $k$ are respectively called its length and dimension. The Hamming distance between two vectors in $E$ is equal to the number of coordinates on which they differ and the Hamming weight, or norm, of a vector $c\in E$, noted $|c|$, is equal to its number of non-zero coordinates. The minimal weight of a non-zero code word of $C$ is called the distance of the code, $d(C)=\min\limits_{c\in C\setminus \{0\} }|c|$. The parameters of $C$ are written $[n, k, d]$.\par

The space $E$ is endowed with a symmetric bilinear form $\langle ,\rangle: E\times E\rightarrow \mathbb{F}_2 $, given by $\langle c,c'\rangle= c_1 c_1'+c_2c_2'+\dots +c_nc_n' $, with operations in $\mathbb{F}_2$. Then a linear code can be defined by the image of a linear map, represented by a generator matrix $G$, or the kernel of a linear map, represented by a parity-check matrix $H$, i.e. $C = \{Gy \: | \: y \in \mathbb{F}_2^k\}= \{ c\in \mathbb{F}_2^n \: | \: Hc=0\}$.  The generator and parity-check matrices satisfy $H G^T=0$. For any linear code $C$, the generator matrix of $C$ is the parity-check matrix of its orthogonal complement, called the dual $C^\perp=\{e\in E\: |\: \forall c\in C, \langle e,c\rangle=0\}$. The parameters of the dual code are written $[n^\perp,k^\perp,d^\perp]$. We adopt the convention $d=0$ whenever $k=0$.

Linear codes can be described in the language of chain complex. We adopt here standard notations used in homological algebra, and refer to \cite{Audoux2015, Breuckmann2021review} for a review. We can describe a linear code $C$ by a chain complex, 
\[  C:=C_1\xrightarrow{\partial_1} C_0. \]
The correspondence with the former formulation is given by the following identification: $C_0=\mathbb{F}_2^{m}$ is called the check space, $C_1=\mathbb{F}_2^n $ represents $E$, $\operatorname{Mat}_\mathcal{B}(\partial_1)=H$ and the code space is given by the space of cycles or the homology group,  $\operatorname{ker}(\partial_1)=Z_1(C)=H_1(C)$. The space $C_1$ is also endowed with a symmetric bilinear form $\langle ,\rangle: C_1\times C_1\rightarrow \mathbb{F}_2 $. In this work, we shall use the same symbol, $C$, to either talk about the linear code or its chain complex description, since both can be specified completely by a parity-check matrix. The dual chain complex,
\[C^*=C^1\xleftarrow[]{\partial_1^T}C^0,\]
represents the code with checks and bits interchanged. We can also make the identification $C_i=C^i$, $i=1,2$ due to the isomorphism $x\mapsto \langle x, \rangle \in \operatorname{Hom}(C_i,\mathbb{F}_2) $ and the choice of basis on $C^i$ induced by this map. The code $C^*$ is called the transposed code, following \cite{Tillich2009}. We can also write another chain complex using the generating map of a code, namely $D=D_2\xrightarrow{\partial_2} D_1$, where $\operatorname{Mat}_\mathcal{B}(\partial_2)= G$ and $D_1=\mathbb{F}_2^n $. In that case, the code space corresponds to the space of boundaries $B_1=\operatorname{Im}(\partial_2)$, and $H_1(D)=\operatorname{coker}(\partial_2)$.

\subsection{Quantum CSS codes}\label{section quantum CSS codes}

CSS codes are instances of stabilizer quantum error correcting codes. They first appeared in the seminal work of Calderbank, Shor and Steane \cite{Calderbank1996,Steane1996,Stean1996Multiple}. These quantum codes are defined by two linear codes $C_X$ and $C_Z$ respecting the orthogonality condition $C_X^\perp\subseteq C_Z$. It was later understood that this condition can be formulated in the language of chain complexes \cite{Kitaev2003,Freedman2001}.

\begin{definition}[CSS code]\label{Def_CSS stabilizer code}
Let $C_X$ and $C_Z$ be two linear codes with parity-check matrices $H_X$ and $H_Z$, such that $C_X^\perp\subseteq C_Z \subset \mathbb{F}_2^{n}$ (or equivalently $C_Z^\perp\subseteq C_X$). The CSS code $\text{CSS}(C_X,C_Z)$ is the subspace of $(\mathbb{C}_2)^{\otimes n}$ given by\[\operatorname{Span}\left \{\sum_{u\in C_Z^\perp }\ket{c+u}\text{ }|\text{ } c\in C_X \right \}.\]
Its parameters, noted $[[n,k,d]]$, are:
\begin{itemize}
\item the \textit{code length} $n$,
\item the \textit{dimension} $k = \dim(C_X / C_Z^\perp)=n-\operatorname{rank} H_X-\operatorname{rank} H_Z$,
\item the \textit{distance} $d = \min(d_X, d_Z)$, with
\begin{align*}
d_X=&\min\limits_{c\in C_Z\setminus C_X^\perp}|c|,\\
d_Z=&\min\limits_{c\in C_X\setminus C_Z^\perp}|c|.
\end{align*}
\end{itemize}
A family of CSS codes is said to be $good$ when its parameters are asymptotically $[[ n, \Theta(n), \Theta(n)]]$. The maximum row weight and column weight in the parity check matrix $H_X$, respectively $H_Z$, are noted $w_X,q_X$, respectively $w_Z, q_Z$. A family is called \textit{Low Density Parity Check} (LDPC) if $w_X,q_X,w_Z$ and $q_Z$ are upper bounded by a constant.
\end{definition}
When we know both the $X$ and $Z$ distances, we note the parameters $[[n,k,(d_X,d_Z)]]$. Our convention is to set $d=0$ whenever $k=0$\footnote{This is to avoid ambiguity when we later compute the quantity $kd^2/n$.}. A potential objective, when designing quantum CSS codes, is to obtain the largest possible dimension and distance for a given number of physical qubits $n$. Indeed, the dimension corresponds to the number of logical qubits while the distance is related to the number $t$ of correctable errors by $t=\lfloor (d-1)/2 \rfloor$.\par

Physically, the rows of $H_X$ induce $X$-type operators called parity-checks, or $X$-checks, and the rows of $H_Z$ induce $Z$-checks. The orthogonality property of $C_X$ and $C_Z$ , equivalent to $H_X H_Z^T=0$, is the necessary property for the syndrome of the two codes to be obtained independently, i.e. by commuting operators.\par

We denote a preferred basis for the $Z$-checks, the qubits and the $X$-checks as $Z$, $Q$ and $X$, where the support of an element in $Z$ or $X$ corresponds respectively to a row of $H_Z$ or $H_X$. The orthogonality condition, $H_X H_Z^T=0$, is analogous to the composition property of two boundary maps in a chain complex. Because the data of two parity-check matrices is sufficient to construct a CSS code, such a code is therefore naturally defined as a chain complex or its dual complex,
\begin{equation}
\begin{split}
& \mathbb{F}_2^{m_Z}\xrightarrow{\partial_{2}=H_Z^T}\mathbb{F}_2^{n}\xrightarrow{\partial_{1}=H_X} \mathbb{F}_2^{m_X} , \\
& \mathbb{F}_2^{m_Z}\xleftarrow{\delta_{2}=H_Z}\mathbb{F}_2^{n}\xleftarrow{\delta_{1}=H_X^T} \mathbb{F}_2^{m_X},
\end{split}
\end{equation}
where $n=|Q|$, while $m_X$ and $m_Z$ are the number of $X$ and $Z$-checks. It will also be appropriate to define the chain complex of the code in terms of abstract cells taken directly from the sets of checks and qubits, \[C=\mathbb{F}_2Z\xrightarrow[]{\partial_2} \mathbb{F}_2Q\xrightarrow[]{\partial_1} \mathbb{F}_2X,\]
where $\mathbb{F}_2S:=\bigoplus_{s\in S}\mathbb{F}_2 s $ denotes formal linear combination, called \textit{chains}, of abstract basis cells in the sets $S=X,Q$ or $Z$ and the boundary map is defined by $\partial_i s=\sum_{t\in\operatorname{supp}(s)} t$. In this context, a check and a qubit are identified with the abstract cells representing them. Consequently, for a check $s$, $\operatorname{supp}(s)$ refers to the support of the row vector representing the check in the corresponding parity-check matrix, which can be identified with $\operatorname{supp(\partial_i s})$, the support of the chain $\partial_i s$. \par
Reciprocally, we can extract quantum CSS codes from based chain complexes with coefficient in $\mathbb{F}_2$. Such a chain complex $C$ is said to be a \textit{3-term complex} as it contains three vector spaces (or modules) related by two boundary maps, hence of the form $C=
C_{i+1}\xrightarrow{\partial_{i+1}}C_i\xrightarrow{\partial_i}C_{i-1}$. Note that any chain complex $C $ of length greater than three can be truncated into a 3-term one. The parameters of a CSS code are related to the homology group elements of the corresponding chain complex by the following.

\begin{proposition}\label{Proposition 3term complex}
Any 3-term complex $C:=
C_{i+1}\xrightarrow{\partial_{i+1}}C_i\xrightarrow{\partial_i}C_{i-1}$, given with a basis, defines a $[[ n, k, d]]$ CSS code $C$, with $n=\dim(C_i)$,
\begin{align*}
    k=&\dim(H_i(C))=\dim(H^i(C)),\\
    d=&\min \left \{|c|: [c]\in H_i(C)\sqcup H^i(C), [c]\neq0\right \}.
\end{align*} 
\end{proposition}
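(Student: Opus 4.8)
The plan is to unwind the definitions, reducing the statement to the dictionary between a CSS code and its chain complex set up in Section~\ref{section quantum CSS codes}. First I would set $H_X := \operatorname{Mat}_{\mathcal{B}_i}(\partial_i)$ and $H_Z := \operatorname{Mat}(\partial_{i+1})^{T} = \operatorname{Mat}(\delta_{i+1})$, using the fixed bases on $C_{i+1}, C_i, C_{i-1}$ and the identification $C_j = C^j$, $\delta_j = \partial_j^{T}$. The defining relation $\partial_i \circ \partial_{i+1} = 0$ of a chain complex then translates directly into $H_X H_Z^{T} = 0$, which is exactly the orthogonality condition $C_X^\perp \subseteq C_Z$ of Definition~\ref{Def_CSS stabilizer code}. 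Hence $\text{CSS}(C_X, C_Z)$ is a well-defined CSS code on $n = \dim(C_i) = |Q|$ qubits, and by construction it coincides with the code attached to the 3-term complex $C$ through the correspondence described right before the statement.

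Next I would translate the homological objects into the coding-theoretic ones. Since $C_X = \ker H_X = \ker \partial_i = Z_i(C)$ and $C_Z^\perp = \operatorname{Im} H_Z^{T} = \operatorname{Im} \partial_{i+1} = B_i(C)$, one gets $C_X/C_Z^\perp = Z_i(C)/B_i(C) = H_i(C)$, so $k = \dim H_i(C)$; the equality $\dim H_i(C) = \dim H^i(C)$ is the finite-dimensional duality recalled in Section~\ref{section chain complexes}. As a consistency check one can also compute $k = n - \operatorname{rank} H_X - \operatorname{rank} H_Z = \dim Z_i(C) - \dim B_i(C)$, which gives the same number and matches the rank formula in Definition~\ref{Def_CSS stabilizer code}. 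Dually, $C_Z = \ker H_Z = \ker \delta_{i+1} = Z^i(C)$ and $C_X^\perp = \operatorname{Im} H_X^{T} = \operatorname{Im} \delta_i = B^i(C)$.

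For the distance I would substitute these identifications into the definitions of $d_X$ and $d_Z$. The set $C_X \setminus C_Z^\perp = Z_i(C) \setminus B_i(C)$ is precisely the set of cycles representing a nonzero class of $H_i(C)$, so $d_Z = \min\{\,|c| : [c] \in H_i(C),\ [c] \neq 0\,\}$; likewise $C_Z \setminus C_X^\perp = Z^i(C) \setminus B^i(C)$ consists of the cocycles representing a nonzero class of $H^i(C)$, so $d_X = \min\{\,|c| : [c] \in H^i(C),\ [c] \neq 0\,\}$. Taking $d = \min(d_X, d_Z)$ yields the stated minimum over $H_i(C) \sqcup H^i(C)$. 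The one point needing a word is the degenerate case $k = 0$: then homology and cohomology both vanish, the minimum ranges over the empty set, and one invokes the convention $d = 0$ whenever $k = 0$ stated in Definition~\ref{Def_CSS stabilizer code}.

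I do not expect a genuine obstacle here; the content is bookkeeping. The only place to be careful is keeping the four matrices $H_X, H_X^{T}, H_Z, H_Z^{T}$ and the four subspaces $Z_i, B_i, Z^i, B^i$ straight — in particular that it is $H_X$ (not $H_Z$) that realizes $\partial_i$, and that the identification $C_i = C^i$ is used consistently so that $\ker \delta_{i+1} = \ker H_Z$ rather than $\ker H_Z^{T}$ — so that the homology side lines up with $C_X$ and $d_Z$, and the cohomology side with $C_Z$ and $d_X$.
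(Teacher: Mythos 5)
Your proposal is correct and follows exactly the dictionary the paper itself invokes (the paragraph after the statement identifying $\partial_i$ with $H_X$, $\partial_{i+1}$ with $H_Z^T$, $C_X=\ker\partial_i$ and $C_Z^\perp=\operatorname{Im}\partial_{i+1}$); you simply carry out the bookkeeping for $k$, $d$, and the $k=0$ convention more explicitly than the paper does. No gaps.
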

For an element of $c\in C_i$,  $[c]$ is the standard notation for the homology or cohomology class of $c$. The correspondence of Proposition \ref{Proposition 3term complex} is direct. Assuming that all vector spaces $C_i$ are given with a basis, the boundary maps can be interpreted as matrices. They play the role of the parity-check matrices $H_X$ and $H_Z$ or their transpose, and the linear codes are the subspaces $C_X=\operatorname{ker}(\partial_{i})$ and $C^{\perp}_Z=\operatorname{Im} (\partial_{i+1})$. \par

We end this section with an example of a CSS code, called hypergraph product codes (HPC) of Tillich and Z\'emor \cite{Tillich2009}, which will be studied thoroughly later in Section \ref{section Applications}. These quantum CSS codes are constructed from two classical codes, combined with the tensor product operation on chain complexes described in Section \ref{section chain complexes}.

\begin{definition}[Hypergraph product code]
    Let $C$ and $D$ be two classical codes. The hypergraph product code of $C$ and $D$ is the CSS code $C\otimes D^*:=(C_1\xrightarrow[]{\partial_1^C} C_0) \otimes ( D^1\xleftarrow[]{\delta_1^D}D^0 )$. 
\end{definition}
Explicitly, this CSS code is represented by the chain complex $C_1\otimes D^0\xrightarrow[]{\partial_2} (C_1\otimes D^1 \oplus C_0\otimes D^0 )\xrightarrow[]{\partial_1} C_0\otimes D^1$, with boundary maps $\partial_{i-j+1}|_{i,j}=\partial_i^ C\otimes \operatorname{id}_{D^*} + \operatorname{id}_C\otimes \delta_{j+1}^D$ for $i-j+1=1,2$. 

\subsection{Covering maps and fundamental group}

Both notions of lift of linear and quantum codes rely on the concept of covering maps from topology. We now review basic facts about coverings that we will need throughout this article, starting with Section \ref{section geometrical lifts of linear code}. Every result mentioned here can be found in \cite{HatcherTopo, Lyndon2001} and applies to topological spaces in general. However, for our usage, we only need to consider two-dimensional regular cell complexes, which are topological spaces obtained by successively gluing cells with gluing maps being homeomorphisms\footnote{Examples of gluing maps that are not homeomorphisms: the map $\partial \mathbb{I} \to p$, mapping both end-points of an interval to the same point, yielding a 1D-sphere; the map $\partial \mathbb{D}^2\to p$, mapping the boundary of the disk, a circle, to a point, yielding a 2D-sphere. }. For that reason, we skip certain topological definitions and refer to the texts above for more background.\par

We first recall the definition of a covering map. Given a topological space $K$, a \textit{covering} of $K$ consists of a space $K'$, together with a map $p:K'\to K$, such that for each point $x\in K$ there exists an open neighborhood $U$ of $x$, and a discrete space $S_x$, such that the inverse image of $U$ by $p$ is a disjoint union of open sets, $p^{-1}(U)=\bigsqcup_{i\in S_x} U_i$, where each set $U_i$ is called a \textit{sheet} and is mapped homeomorphically onto $U$ by $p$. Then, $K'$ is said to be a \textit{covering space} of the \textit{base space} $K$. The inverse image of a point $x$ by $p$ is the discrete space $S_x$ called the \textit{fiber} at $x$, and it is homeomorphic to the fiber at any other point. Its cardinality $|S_x|$ is hence the same for every point $x$ and called the $degree$ of the covering. A \textit{finite} covering map is one for which the degree is finite. A \textit{connected} cover is one for which $K'$ is a path connected space. A \textit{trivial covering } of $K$ is one for which $K'=\sqcup_{i} K_i $, where the restriction of $p|_{K_i}$ on each $K_i$ is a homeomorphism of $K$. \par

For example, if $p:K'\to K$ is a covering of a graph $K$, then $K'$ is a graph. A vertex $x'$ and its projection $x$ have the same degree. Moreover, an edge $e'$, with end-points $u'$ and $v'$, projects onto an edge $e$ with end-points $u=p(u')$ and $v=p(v')$.\par
A central property of covering maps that we will use in Section \ref{section Lift of a CSS code: definition} appears when we restrict the domain of covering map to certain subspaces.
\begin{lemma}\label{lemma restriction covering map}
    Let $p:K'\to K$ be a covering map and $A$ be a subspace of $K$. Let $A'=p^{-1}(A)$, the inverse image of $A$ by $p$ in $K'$. Then the restriction of $p$ to $A'$, namely, $p|_{A'}: A'\to A$ is a covering map.
\end{lemma}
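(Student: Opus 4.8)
The plan is to verify the local-triviality condition of a covering map directly for the restriction $p|_{A'}$, using the corresponding condition for $p$ and intersecting the evenly-covered neighborhoods with $A$. First I would fix a point $x\in A\subseteq K$. Since $p\colon K'\to K$ is a covering, there is an open neighborhood $U\subseteq K$ of $x$ and a discrete space $S_x$ with $p^{-1}(U)=\bigsqcup_{i\in S_x}U_i$, each $U_i$ open in $K'$ and mapped homeomorphically onto $U$ by $p$. The natural candidate evenly-covered neighborhood of $x$ in $A$ is $V:=U\cap A$, which is open in the subspace topology on $A$, and contains $x$.

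Next I would identify the sheets over $V$. Set $V_i:=U_i\cap A'=U_i\cap p^{-1}(A)$. Each $V_i$ is open in $A'$ (being the intersection of the open set $U_i$ of $K'$ with $A'$, in the subspace topology). I would then check the two things required: (i) that $(p|_{A'})^{-1}(V)=\bigsqcup_{i\in S_x}V_i$ as a disjoint union, and (ii) that $p|_{V_i}\colon V_i\to V$ is a homeomorphism for each $i$. For (i), a point $y\in A'$ lies in $(p|_{A'})^{-1}(V)$ iff $p(y)\in U\cap A$; since $y\in A'$ already gives $p(y)\in A$, this is equivalent to $p(y)\in U$, i.e.\ $y\in p^{-1}(U)=\bigsqcup_i U_i$, and combined with $y\in A'$ this says exactly $y\in\bigsqcup_i V_i$; disjointness of the $V_i$ is inherited from disjointness of the $U_i$. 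For (ii), the homeomorphism $p|_{U_i}\colon U_i\to U$ restricts to a continuous bijection $U_i\cap p^{-1}(A)\to U\cap A$, i.e.\ $V_i\to V$; its inverse is the restriction of the continuous inverse $(p|_{U_i})^{-1}\colon U\to U_i$ to $V$, which lands in $V_i$ precisely because $(p|_{U_i})^{-1}(U\cap A)=U_i\cap p^{-1}(A)$. Restrictions of continuous maps are continuous for the subspace topology, so $p|_{V_i}$ is a homeomorphism. This shows $V$ is evenly covered by $p|_{A'}$, and since $x\in A$ was arbitrary, $p|_{A'}\colon A'\to A$ is a covering map.

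I do not expect a serious obstacle here; the statement is essentially a point-set topology exercise. The only mild subtlety is bookkeeping with the subspace topology: one must be careful that ``open in $A'$'' means open in the topology $A'$ inherits from $K'$ (equivalently, from $A$, since these agree), and likewise ``open in $A$'' is the topology induced from $K$, so that the sets $V$ and $V_i$ genuinely are open in the relevant spaces and the restricted maps are genuinely continuous with continuous inverses. One might also note that the fiber of $p|_{A'}$ over $x\in A$ is $p^{-1}(x)$, the same discrete set $S_x$ as for $p$, so over each connected component of $A$ the degree of $p|_{A'}$ equals that of $p$; this observation is not needed for the lemma but will be the point of interest when the lemma is applied in Section~\ref{section Lift of a CSS code: definition}.
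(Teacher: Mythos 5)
Your argument is correct and is the standard textbook verification: intersect an evenly covered neighborhood $U$ of $x$ with $A$ and check that the sets $U_i\cap p^{-1}(A)$ are the sheets over $U\cap A$; the bookkeeping with subspace topologies is handled properly. The paper itself gives no proof of this lemma, citing it as a basic fact from the theory of covering spaces, so there is nothing to compare against beyond noting that your write-up supplies exactly the omitted standard argument.
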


A \textit{deck transformation} is an automorphism $d:K' \rightarrow K'$ such that $p\circ d=p$. The set of deck transformations, endowed with the operation of composition of maps, forms a group noted $\operatorname{Deck}(p)$. It is called the \textit{group of deck transformations} and acts on the left on $K'$. A \textit{regular covering} is a covering enjoying a left, free and transitive action of $\operatorname{Deck}(p)$ on the fiber. For any regular covering, $p:K'\rightarrow K$, it can be shown that $\operatorname{Deck}(p)\setminus K'\cong K$. \\

We now restrict to connected covering of \textit{well-behaved}\footnote{The literature on covering maps always starts by defining the notion of path-connected, locally path-connected, and semilocally simply-connected spaces. These are the requirement for classification results to apply. By well-behaved, we mean a space with these properties. In particular, any graph of finite degree and any finite 2D cell complex meet the requirements.} topological spaces. This is the type of spaces that we will consider later on. Moreover, all disconnected covering spaces can be obtained by disjoint union of connected ones, so we do not lose generality when we focus our study on connected spaces and their connected coverings, as we do now. There are a lot more results on coverings that can be stated in this context.\par
We first recall some definitions. Given a basepoint $v$ on a connected topological space $K$, the fundamental group $\pi_1(K,v)$ is the group of homotopy classes of loops based at $v$, which has for group operation the concatenation of loops. Later, we omit to mention the choice of basepoint and write it $\pi_1(K)$. As an example, for a connected graph $T=(V,E)$, the fundamental group is a free group of rank\footnote{The rank of a group $G$ is the smallest cardinality of a generating set of $G$.} $\#E-\#V+1$. A space is called \textit{simply connected} when its fundamental group is trivial. A result that we will need in Section \ref{section New constructions and lifts} is the Hurewicz Theorem, which relates homotopy and homology groups. For simplicity, we only state this theorem partially.
\begin{theorem}[Hurewicz]\label{theorem Hurewicz}
 Let $K$ be a path connected space. There exists an isomorphism \[ h:\pi_1(K)/[\pi_1(K),\pi_1(K)]\rightarrow H_1(K,\mathbb Z).\]
\end{theorem}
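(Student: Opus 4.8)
The plan is to prove the statement by exhibiting an explicit isomorphism built from singular homology, following the classical argument for the abelianized Hurewicz theorem in degree one. Fix a basepoint $x_0\in K$; since $K$ is path connected, the conjugation isomorphisms relating $\pi_1(K,x_0)$ for different basepoints descend to a canonical isomorphism after abelianization, so $\pi_1(K)/[\pi_1(K),\pi_1(K)]$ is basepoint independent. A loop $\gamma$ at $x_0$ is a singular $1$-simplex with $\partial\gamma=x_0-x_0=0$, hence a $1$-cycle; provisionally send $[\gamma]$ to the class of $\gamma$ in $H_1(K,\mathbb Z)$. First I would record the standard chain-level lemmas: (i) if $p\simeq q$ rel endpoints then $p-q$ is a boundary (prism operator: triangulate $\Delta^1\times[0,1]$ into two singular $2$-simplices via $F$ and compute their boundary); (ii) for composable paths, $p\cdot q-p-q=-\partial\sigma$ for the singular $2$-simplex $\sigma=(p\cdot q)\circ\pi$ with $\pi:\Delta^2\to[0,1]$ the affine collapse $e_0\mapsto 0,e_1\mapsto\tfrac12,e_2\mapsto 1$; (iii) $\overline p\sim -p$ (apply (ii) to $p,\overline p$ and use $p\cdot\overline p\simeq c_{p(0)}$ rel endpoints); (iv) the constant $1$-simplex $c_x$ is a boundary, being $\partial$ of the constant $2$-simplex at $x$. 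Lemma (i) makes the provisional map well defined on $\pi_1(K,x_0)$, Lemma (ii) makes it a homomorphism to $H_1(K,\mathbb Z)$, and since $H_1$ is abelian it kills commutators, hence factors through a homomorphism $h:\pi_1(K)/[\pi_1(K),\pi_1(K)]\to H_1(K,\mathbb Z)$, which is the map we must show is an isomorphism.

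Next I would build an inverse. Using path connectedness, choose for every point $v\in K$ a path $\lambda_v$ from $x_0$ to $v$, with $\lambda_{x_0}$ the constant path. For a path $p$ from $u$ to $v$ write $\widetilde p:=\lambda_u\cdot p\cdot\overline{\lambda_v}$, a loop at $x_0$, and define $\Phi:C_1(K,\mathbb Z)\to\pi_1(K)/[\pi_1(K),\pi_1(K)]$ on the basis of singular $1$-simplices by $\Phi(p)=[\widetilde p]$, extended linearly. The crucial point is that $\Phi$ kills boundaries: for a singular $2$-simplex $\sigma$ with edges $a=\sigma|_{[e_1,e_2]}$, $b=\sigma|_{[e_0,e_2]}$, $c=\sigma|_{[e_0,e_1]}$ one has $\partial\sigma=a-b+c$, and in the abelian group $\pi_1(K)/[\pi_1(K),\pi_1(K)]$ the element $\Phi(a)-\Phi(b)+\Phi(c)$ equals, after telescoping the $\overline{\lambda}\lambda$ cancellations, the class of $\lambda_{w_0}\cdot(c\cdot a\cdot\overline b)\cdot\overline{\lambda_{w_0}}$ with $w_0=\sigma(e_0)$; but $c\cdot a\cdot\overline b$ is the restriction of $\sigma$ to $\partial\Delta^2$ (traversed $e_0\to e_1\to e_2\to e_0$), which is null-homotopic in $K$ because $\partial\Delta^2\hookrightarrow\Delta^2$ is null-homotopic and $\sigma$ extends over $\Delta^2$. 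Hence $\Phi\circ\partial_2=0$, so $\Phi$ descends to $\psi:H_1(K,\mathbb Z)\to\pi_1(K)/[\pi_1(K),\pi_1(K)]$.

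Finally I would verify that $\psi$ and $h$ are mutually inverse. For a loop $\gamma$ at $x_0$, $\psi(h[\gamma])=\Phi(\gamma)=[\lambda_{x_0}\cdot\gamma\cdot\overline{\lambda_{x_0}}]=[\gamma]$ since $\lambda_{x_0}$ is constant, so $\psi\circ h=\operatorname{id}$. Conversely, given a $1$-cycle $z=\sum_i n_i\sigma_i$, Lemmas (ii)–(iv) yield $\widetilde\sigma_i\sim\lambda_{\sigma_i(0)}+\sigma_i-\lambda_{\sigma_i(1)}$ as $1$-chains, so $\sum_i n_i\widetilde\sigma_i\sim z+\zeta$, where $\zeta$ is the $1$-chain whose coefficient on each $v$ equals (up to sign) the coefficient of $v$ in $\partial z$; since $\partial z=0$ we get $\zeta=0$, hence $h(\psi[z])=h\big(\sum_i n_i[\widetilde\sigma_i]\big)=\big[\sum_i n_i\widetilde\sigma_i\big]=[z]$ in $H_1(K,\mathbb Z)$, i.e. $h\circ\psi=\operatorname{id}$. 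Therefore $h$ is an isomorphism.

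I expect the main obstacle to be the middle step: organizing $\Phi$ with a globally consistent choice of conjugating paths and verifying $\Phi\circ\partial_2=0$, which is exactly where one uses that every singular $2$-simplex supplies a null-homotopy of its boundary loop. The chain-level lemmas (i)–(iv), while routine, also demand the explicit prism and collapse $2$-simplices and must be stated carefully; once they are in place, well-definedness, the homomorphism property, the factorization through the abelianization, and the two inverse checks (the last using $\partial z=0$ to cancel the conjugating paths in pairs) are all bookkeeping.
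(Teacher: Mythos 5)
The paper offers no proof of this statement: it is quoted as the classical (degree-one, abelianized) Hurewicz theorem, stated only partially and implicitly referred to the standard literature (e.g.\ Hatcher, Theorem 2A.1), so there is no ``paper route'' to compare against. Your argument is the standard textbook proof and it is correct: the chain-level lemmas (i)--(iv) are the usual prism/collapse constructions and do exactly what you claim; they give well-definedness, the homomorphism property, and the factorization of $h$ through the abelianization. Your variant of the second half --- building an explicit inverse $\psi$ from a chosen system of paths $\lambda_v$ and checking $\Phi\circ\partial_2=0$ via the null-homotopy of the boundary loop $c\cdot a\cdot\overline{b}$ of a singular $2$-simplex --- is a legitimate alternative to Hatcher's direct surjectivity/injectivity argument, and your two inverse checks (in particular using $\partial z=0$ to cancel the conjugating paths in $\zeta$) are sound. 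The only point worth stating explicitly in a polished write-up is that $\psi$ depends on the chosen paths $\lambda_v$ only up to the identification already made, which is harmless since you verify $\psi\circ h=\operatorname{id}$ and $h\circ\psi=\operatorname{id}$ for that fixed choice.
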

Here $[\pi_1(K),\pi_1(K)]$ denotes the commutator subgroup of $\pi_1(K)$. Hence, the domain of this map is the abelianization of $\pi_1(K)$.\par

The properties of covering maps over a space are closely related to that of its fundamental group. The next proposition will be crucial throughout our work.
\begin{proposition}
    Let $K$ be a well-behaved topological space. For every subgroup $H\leq \pi_1(K) $ there exists a covering $p:K_H\to K$, mapping basepoint to basepoint, and inducing an injective homomorphism $p_*:\pi_1(K',v')\to \pi_1(K,v)$, such that $p_*\pi_1(K_H , v') = H$.
\end{proposition}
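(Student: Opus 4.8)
The plan is to follow the classical path-space construction of covering spaces (as in \cite{HatcherTopo}). Fix the basepoint $v\in K$. First I would build the universal cover $\widetilde{K}$ as the set of homotopy classes, rel endpoints, of paths $\gamma\colon[0,1]\to K$ with $\gamma(0)=v$, topologized by declaring the following sets basic open: for each class $[\gamma]$ and each path-connected open $U\ni\gamma(1)$ small enough that $\pi_1(U,\gamma(1))\to\pi_1(K)$ is trivial (this is where semilocal simple connectivity enters), put $U_{[\gamma]}=\{[\gamma\cdot\eta]:\eta\text{ a path in }U,\ \eta(0)=\gamma(1)\}$. Then $\widetilde p\colon\widetilde K\to K$, $[\gamma]\mapsto\gamma(1)$, is a covering: local path-connectedness gives a neighborhood basis of such small $U$, and over each such $U$ the preimage $\widetilde p^{-1}(U)$ is the disjoint union of the sheets $U_{[\gamma]}$, each mapped homeomorphically onto $U$.

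Next, for an arbitrary subgroup $H\leq\pi_1(K,v)$, I would define on $\widetilde K$ the relation $[\gamma]\sim_H[\gamma']$ iff $\gamma(1)=\gamma'(1)$ and $[\gamma\cdot\overline{\gamma'}]\in H$, where $\overline{\gamma'}$ is the reversed path and $\cdot$ is concatenation; this is an equivalence relation precisely because $H$ is a subgroup (reflexivity from $1\in H$, symmetry from closure under inverses, transitivity from closure under products). Set $K_H=\widetilde K/\sim_H$ with the quotient topology and let $p\colon K_H\to K$ be the map induced by $\widetilde p$, which is well defined since $\sim_H$-related classes share an endpoint. Because $\sim_H$ respects the sheet decomposition — two points of one sheet $U_{[\gamma]}$ are identified iff the corresponding points of any other sheet are — the map $p$ is again a covering. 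Take $v'=[c_v]\in K_H$, the class of the constant path at $v$, which lies over $v$.

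It then remains to check that $p_*$ is injective and that $p_*\pi_1(K_H,v')=H$. Injectivity is a general feature of covering maps: if a loop $\widetilde f$ at $v'$ has $p\circ\widetilde f$ null-homotopic in $K$ via a homotopy $F$, the homotopy lifting property lifts $F$ to a homotopy in $K_H$ with one end $\widetilde f$ and the other end a lift of the constant loop, hence constant, so $[\widetilde f]=1$. For the image, given a loop $\omega$ at $v$ in $K$, its unique lift $\widetilde\omega$ to $K_H$ starting at $v'$ takes at time $t$ the value $[\omega|_{[0,t]}]$ in the quotient; this lift is a loop (ends at $v'=[c_v]$) iff $[\omega|_{[0,1]}]=[\omega]\in H$. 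Hence $\omega$ lifts to a loop based at $v'$ exactly when $[\omega]\in H$, which is precisely the statement $p_*\pi_1(K_H,v')=H$.

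The main obstacle is the first step: defining the topology on $\widetilde K$ (equivalently on $K_H$) and verifying that $\widetilde p$ is genuinely a covering map. This is the only place the semilocal simple connectivity hypothesis is used — it is what makes the basic sets $U_{[\gamma]}$ well behaved and disjoint across distinct sheets — and once it is in place the remaining arguments are formal manipulations with unique path lifting and the homotopy lifting property. For the $2$D regular cell complexes actually used in this paper one could instead bypass the point-set construction, building $K_H$ cell by cell over the $1$-skeleton (a graph, where the construction is elementary, cf.\ Section~\ref{section geometrical lifts of linear code}) and then attaching $2$-cells along the lifted boundary loops; but the path-space argument has the advantage of working uniformly for all well-behaved $K$.
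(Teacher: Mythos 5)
Your argument is correct. The paper does not prove this proposition at all --- it is stated as a standard fact imported from the covering-space literature (Hatcher; Lyndon--Schupp) --- and what you have written is precisely the canonical proof from that source: construct the universal cover as the space of homotopy classes of paths from the basepoint, quotient by the relation $[\gamma]\sim_H[\gamma']$ iff $\gamma(1)=\gamma'(1)$ and $[\gamma\cdot\overline{\gamma'}]\in H$, and then verify injectivity of $p_*$ by homotopy lifting and compute the image via the loop-lifting criterion. You also correctly identify the one place the ``well-behaved'' hypotheses (local path-connectedness and semilocal simple connectivity) are actually needed, namely in making the sheets $U_{[\gamma]}$ into a basis for a covering-space topology; the remark that for the finite $2$D cell complexes used in this paper one could instead build $K_H$ combinatorially over the $1$-skeleton and attach lifted $2$-cells is apt, and is in fact closer to how the paper later constructs these covers explicitly.
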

Since the map $p_*$ is injective, this proposition also says that the fundamental group of $K_H$ is isomorphic to $H$. Most of the work of Section \ref{section Lift of a CSS code: explicit construction} consists in describing an explicit construction of $K_H$ for the cases of 1D and 2D cell complexes. Section \ref{section classification of lifts of HPC} also heavily makes use of this proposition. \par
A covering $p:K_H\to K$ is regular exactly when $H$ is a normal subgroup of $\pi_1(K,v)$. In that case, we also call this covering a \textit{normal covering}. It can be shown that, for such a covering map, we have the following isomorphism, $\operatorname{Deck}(p)\cong\pi_1(K,v)/H$.\par

The most important result on coverings, when we restrict to well-behaved spaces, is the classification theorem known as the Galois correspondence. This will be of central importance in Section \ref{section Lift of a CSS code: definition} and Section \ref{section classification of lifts of HPC}.
\begin{theorem}[Galois correspondence]\label{Theorem Galois correspond}
Let $K$ be a well-behaved topological space. There is a bijection between the set of basepoint-preserving isomorphism classes of path-connected covering spaces $p:(K',v')\to (K,v)$ and the set of subgroups of $\pi_1 (K, v )$, obtained by associating the subgroup $p_*\pi_1 (K',v' )$ to the covering space $(K',v')$. \par 
Given a subgroup $H\leq \pi_1(K)$ the degree $d$ of the covering is given by the index\footnote{Here, for groups $H\leq G$, $[G:H]$ is the standard notation for the index of $H$ in $G$.} $d=[\pi_1(K):H]$.
\end{theorem}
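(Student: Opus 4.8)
The plan is to establish the correspondence in three parts — well-definedness, surjectivity, and injectivity of the assignment $(K',v')\mapsto p_*\pi_1(K',v')$ — and then to read off the degree. Surjectivity is essentially already in hand: the Proposition immediately preceding this theorem produces, for each subgroup $H\leq\pi_1(K,v)$, a path-connected covering $p\colon(K_H,v')\to(K,v)$ with $p_*\pi_1(K_H,v')=H$, so every subgroup lies in the image. Well-definedness is also routine: if $\phi\colon(K_1,v_1)\to(K_2,v_2)$ is a basepoint-preserving isomorphism of covers over $K$, then $p_{1*}=p_{2*}\circ\phi_*$ with $\phi_*$ an isomorphism, hence $p_{1*}\pi_1(K_1,v_1)=p_{2*}\pi_1(K_2,v_2)$. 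The substance of the argument is therefore injectivity, and the tool for it is the \emph{lifting criterion}: a map $f\colon(Y,y)\to(K,v)$ from a path-connected, locally path-connected space lifts through a covering $p\colon(K',v')\to(K,v)$ to a map $\tilde f\colon(Y,y)\to(K',v')$ if and only if $f_*\pi_1(Y,y)\subseteq p_*\pi_1(K',v')$, and such a lift is unique once the image of the basepoint is fixed. Both the criterion and the unique path-lifting and homotopy-lifting properties on which it rests are standard for well-behaved spaces \cite{HatcherTopo}.

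For injectivity, suppose $p_1\colon(K_1,v_1)\to(K,v)$ and $p_2\colon(K_2,v_2)\to(K,v)$ are path-connected covers with $p_{1*}\pi_1(K_1,v_1)=p_{2*}\pi_1(K_2,v_2)=:H$. Since $K$ is locally path-connected and covers are local homeomorphisms, $K_1$ and $K_2$ are locally path-connected, so the inclusion $p_{1*}\pi_1(K_1,v_1)\subseteq p_{2*}\pi_1(K_2,v_2)$ lets us lift $p_1$ through $p_2$ to a basepoint-preserving map $g\colon(K_1,v_1)\to(K_2,v_2)$ with $p_2\circ g=p_1$; symmetrically we obtain $h\colon(K_2,v_2)\to(K_1,v_1)$ with $p_1\circ h=p_2$. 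Then $h\circ g$ and $\mathrm{id}_{K_1}$ are both lifts of $p_1$ through $p_1$ fixing $v_1$, so uniqueness of lifts gives $h\circ g=\mathrm{id}_{K_1}$, and likewise $g\circ h=\mathrm{id}_{K_2}$. Hence $g$ is a basepoint-preserving isomorphism of covering spaces, which is exactly what injectivity of the assignment demands.

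It remains to compute the degree. Fix a covering $p\colon(K_H,v')\to(K,v)$ in the class corresponding to $H$ and consider the fiber $F=p^{-1}(v)$, whose cardinality is the degree $d$ by definition. Unique path lifting gives the monodromy action of $\pi_1(K,v)$ on $F$: for $[\gamma]\in\pi_1(K,v)$ and $w\in F$, lift $\gamma$ to the unique path starting at $w$ and send $w$ to the endpoint of that lift; homotopy lifting shows the result depends only on $[\gamma]$, and this defines an action of $\pi_1(K,v)$ on $F$. Because $K_H$ is path-connected, any $w\in F$ is joined to $v'$ by a path in $K_H$ whose image is a loop at $v$, so the action is transitive; and the stabilizer of $v'$ consists precisely of those $[\gamma]$ whose lift at $v'$ is again a loop, i.e.\ of $p_*\pi_1(K_H,v')=H$. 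The orbit-stabilizer theorem then yields $d=|F|=[\pi_1(K,v):H]$.

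The genuinely hard input — the construction of $K_H$, which ultimately rests on the existence of the universal cover of a semilocally simply connected space — has already been absorbed into the preceding Proposition, so the main remaining obstacle is the lifting criterion together with unique path and homotopy lifting. If one wished to prove these from scratch rather than quote them, the crux would be the explicit pointwise construction of a lift by lifting paths out of the basepoint and checking that the choice of path is immaterial, which is exactly where the hypothesis $f_*\pi_1(Y,y)\subseteq p_*\pi_1(K',v')$ is used; local path-connectedness of $Y$ is then what makes the resulting map continuous.
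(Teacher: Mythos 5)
The paper does not prove this theorem; it quotes it as a standard fact and refers the reader to \cite{HatcherTopo,Lyndon2001}. Your proof is correct and is precisely the standard argument found there (lifting criterion plus uniqueness of lifts for injectivity, the preceding existence proposition for surjectivity, and the monodromy action with orbit–stabilizer for the degree), so it agrees with the source the paper relies on.
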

Ignoring basepoints, there is a bijection between isomorphism classes of path-connected covering spaces $p:K'\to K$ and conjugacy classes of subgroups of $\pi_1(K)$. When $K$ is simply connected, a consequence of Theorem \ref{Theorem Galois correspond} is that all of its coverings are trivial. Indeed, its connected covering must be the basepoint preserving homeomorphism onto itself, and all other covering spaces and maps can be obtained via disjoint union of connected ones.\par

As a result of this classification, we also have the following lemma that we will use in Section \ref{section classification of lifts of HPC}.
\begin{lemma}\label{Lemma covering of covering}
If $X_i$, $i=1,2$ are well-behaved connected coverings of $X$ associated to groups $G_i=p_{i*}\pi_1(X_i)$ and if $G_1\leq G_2$ then $X_1$ is a covering of $X_2$ of degree $[G_2:G_1]$.
\end{lemma}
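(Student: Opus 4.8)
The plan is to produce the covering $q\colon X_1\to X_2$ by the lifting criterion for covering maps, then to check directly that $q$ is itself a covering map, and finally to read off its degree from the Galois correspondence (Theorem \ref{Theorem Galois correspond}). Fix a basepoint $v\in X$ and basepoints $v_i\in X_i$ with $p_i(v_i)=v$, so that $G_i=p_{i*}\pi_1(X_i,v_i)$. Since $X_1$ is well-behaved (in particular connected and locally path-connected) and $p_{1*}\pi_1(X_1,v_1)=G_1\le G_2=p_{2*}\pi_1(X_2,v_2)$, the lifting criterion applied to the continuous map $p_1\colon X_1\to X$ and the covering $p_2\colon X_2\to X$ yields a unique continuous map $q\colon X_1\to X_2$ with $p_2\circ q=p_1$ and $q(v_1)=v_2$.

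Next I would verify that $q$ is a covering map. From $p_2\circ q=p_1$ and the fact that $p_1,p_2$ are local homeomorphisms, $q$ is a local homeomorphism: near $x_1\in X_1$ one can choose an open $W\ni q(x_1)$ on which $p_2$ is a homeomorphism onto its image and a small open $V\ni x_1$ with $q(V)\subseteq W$ on which $p_1$ is a homeomorphism onto its image, and then $q|_V=(p_2|_W)^{-1}\circ p_1|_V$ is a homeomorphism onto an open set. In particular $q(X_1)$ is open. It is also closed: if $y\notin q(X_1)$, pick an open $U\ni p_2(y)$ evenly covered by \emph{both} $p_1$ and $p_2$ and let $W$ be the sheet of $p_2^{-1}(U)$ containing $y$; each sheet $V$ of $p_1^{-1}(U)$ is connected, so $q(V)$ lies in a single sheet $W'$ of $p_2^{-1}(U)$ and $q|_V=(p_2|_{W'})^{-1}\circ p_1|_V$ maps $V$ homeomorphically onto $W'$, whence no sheet of $p_1^{-1}(U)$ can meet $W$ (else $y\in q(X_1)$), so $W\subseteq X_2\setminus q(X_1)$. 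As $X_2$ is connected, $q$ is surjective. The same sheet-by-sheet description shows that for every such $U$ and every sheet $W$ of $p_2^{-1}(U)$, the set $q^{-1}(W)$ is a disjoint union of (some of the) sheets of $p_1^{-1}(U)$, each mapped homeomorphically onto $W$ by $q$; hence every point of $X_2$ has an evenly covered neighbourhood, and $q$ is a covering map. Since $X_1$ is path-connected and $X_2$ is again well-behaved, this is a connected covering of $X_2$.

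Finally I would compute the degree. By the Galois correspondence applied to $q\colon X_1\to X_2$, the degree of $q$ equals the index $[\pi_1(X_2,v_2):q_*\pi_1(X_1,v_1)]$. Now $p_{2*}\colon\pi_1(X_2,v_2)\to\pi_1(X,v)$ is injective with image $G_2$, and $p_{2*}\bigl(q_*\pi_1(X_1,v_1)\bigr)=(p_2\circ q)_*\pi_1(X_1,v_1)=p_{1*}\pi_1(X_1,v_1)=G_1$, so $p_{2*}$ restricts to an isomorphism $\pi_1(X_2,v_2)\xrightarrow{\ \sim\ }G_2$ carrying $q_*\pi_1(X_1,v_1)$ onto $G_1$. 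Therefore $[\pi_1(X_2,v_2):q_*\pi_1(X_1,v_1)]=[G_2:G_1]$, and $X_1$ is a covering of $X_2$ of degree $[G_2:G_1]$, as claimed.

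The main obstacle is the middle step: everything reduces to standard machinery once one knows $q$ is a covering, but establishing that requires the surjectivity argument (where connectedness of $X_2$ is essential) together with the bookkeeping over neighbourhoods simultaneously evenly covered by $p_1$ and $p_2$. The construction of $q$ and the degree count are then immediate from the lifting criterion and the Galois correspondence already recorded above.
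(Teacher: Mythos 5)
Your proof is correct and is exactly the standard covering-space argument (lifting criterion to produce $q$, direct verification that $q$ is a covering, Galois correspondence for the degree) that the paper implicitly invokes: the lemma is stated there without proof, as a consequence of the classification of coverings. The only cosmetic point is that in the closedness step you should take the simultaneously evenly covered neighbourhood $U$ to be connected (always possible for well-behaved spaces) so that "each sheet of $p_1^{-1}(U)$ is connected" is justified.
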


To end this section, we mention the existence of a special type of covering of (well-behaved) connected spaces, called the \textit{universal covering}. This is the covering associated to the trivial subgroup of the fundamental group. Therefore, this is a normal covering and the associated covering space is unique up to homeomorphism. We will describe and use it in Section \ref{section relation to balanced product codes}.

\subsection{Linear codes, graphs and lift}\label{section geometrical lifts of linear code}

A classical linear code $C$ can be represented by its \textit{Tanner graph} $\mathcal{T}(C)$, which is a bipartite graph with one set of vertices representing bit variables and the other set representing check variables. There is an edge between a bit vertex and a check vertex when the bit is in the support of the check. In other terms, if the code is given by a parity check matrix $H$, the Tanner graph admits for adjacency matrix
\[
\renewcommand\arraystretch{1.3}
A=\mleft[
\begin{array}{c|c}
  0 & H  \\
  \hline
  H^T & 0 \\
\end{array}
\mright].
\]
Whenever it is clear from context, that we associate a certain Tanner graph to a code $C$, we write it $\mathcal{T}$ instead of $\mathcal{T}(C)$. \par
A lift \cite{Thorpe2003} is an operation of great interest to produce families of LDPC codes with dimension and distance linear in the length of the code, or with various other properties \cite{Pusane2011}. It corresponds geometrically to a covering of its Tanner graph. Initially, the notion of lift refers to general procedures applicable to any graph. Among them, permutation voltage and group voltage can generate respectively all possible graph coverings and all regular covers. The first appeared in \cite{Panteleev2020} as practical tool for the theory of quantum CSS codes. Here we give an account of group voltage, with a modification to the definition given in \cite{Panteleev2021}, and then permutation voltage following \cite{GROSS1977273}. We also show the effect of a lift on the Tanner graph of a linear code.\par

We start by setting the notation relative to graphs used throughout this article. Let $T:=(V,E)$ be a graph with set of edges $E$ and vertices $V$. An \textit{edge} of $E$ between two vertices $v$ and $v'$ is an unordered sequence of two vertices $\{u,v\}=\{v,u\}$. In this work, graphs have no self-loops, i.e edges of the form $\{u,u\}$. An \textit{oriented edge} is an edge for which the order matters, written $e=[u,v]$ for an edge running from $u$ to $v$. The inverse of an edge is defined as $[u,v]^{-1}:=[v,u]$. We note $E^*$ the set of oriented edges. Its cardinal is twice the one of $E$. A path in the graph is a sequence of oriented edges $\alpha=[u,v].[v,w]\dots $. \par
Next, let $\Gamma$ be a finite group. A \textit{voltage assignment}\footnote{The definition given here is different to the one in \cite{Panteleev2021}, which is a map $\nu:E\to G$, which must be given together with an orientation map.} with voltage group $\Gamma$ is a map
\begin{equation}\label{Equation voltage assignment}
     \nu:E^*\rightarrow \Gamma,
\end{equation} 
such that $\nu(e^{-1})=\nu(e)^{-1}$.  It is central to the following definition. 
\begin{definition}[Lift of a graph]\label{definition lift of graph}
  Let $T:=(V,E)$ be a graph and $\nu$ a voltage assignment. The \textit{right derived graph} $D(T,\mathcal{\nu})$ of $T$, also called \textit{right $\Gamma$-lift}, is a graph with set of vertices $V\times \Gamma$ and set of edges in bijection with $E\times \Gamma$, so that a vertex or edge, is written $(c,g)$ with $c\in T$ and $g\in \Gamma$. An oriented edge $(e,g)$, with $e=[u,v]$, in the graph $D(T,\mathcal{\nu})$, runs from $(u,g)$ to $(v, g \nu(e))$, where the multiplication by $\nu(e)$ is always on the right.
\end{definition}
With this definition of the voltage assignment, we have $(e,g)^{-1}=(e^{-1},g\nu(e))$. A left derived graph is similar, but the multiplication by $\nu(e)$ is on the left. In this work, we only consider right derived graphs unless otherwise stated. This is why we usually omit to mention it in the notation of the derived graph.\par

For every derived graph $D(T,\mathcal{\nu})$, it is possible to define a covering map $p:D(T,\nu)\rightarrow T$, which projects a cell $(c,g)\in D(T,\nu)$ to $c\in T$. It is also defined on the set $E^*$ of oriented edges, and preserves orientation. It was shown in \cite{GROSS1977273} that such group voltages can produce all regular (connected and disconnected) cover of $T$. For the right derived graph $D(T,\nu)$, the voltage group $\Gamma$ acts freely and transitively on the fiber, by left multiplication, and we even have $\operatorname{Deck}(p)\cong G$. Indeed, for any vertex, $u'=(u,g)$ we can define the left multiplication by $h\in \Gamma$ as $h.u'=(u,hg)$. Then, for an oriented edge $(e,g)=[(u,g),(v,g\nu(e))]$, multiplication by a group element $h$ satisfies $h(e,g)=(e,hg)$, since $(e,hg)=[(u,hg),(v,hg\nu(e))]$ is an edge in the same fiber as $(e,g)$.\par

\begin{remark}
For any left multiplicative action, there is an associated right group action, which is multiplication on the left by $g^{-1}$.
\end{remark}

Since any linear code can be represented by its Tanner graph, we can exploit the lift to define new codes.

\begin{definition}[lift of a linear code]\label{definition lift of linear code}
    Let $C$ be a linear code with Tanner graph $\mathcal{T}:=(A\cup B, E)$, where $A$ is the set of parity-check vertices and $B$ the set of bit vertices. Let $\nu:E^*\to \Gamma$ be a voltage assignment to some finite group $\Gamma$. When applying a right $\Gamma$-lift to the Tanner graph $\mathcal{T}$ we obtain a Tanner graph $D(\mathcal{T},\nu)$ with set of check and bit vertices respectively $A\times\Gamma $ and $B\times \Gamma$. We call the code $C^\Gamma$ associated to $D(\mathcal{T},\nu)$ a \textit{right $\Gamma$-lift }of $C$.
\end{definition}

Keeping notation as in Definition \ref{definition lift of linear code}, the code $C$ is identified with a chain complex $C_1\xrightarrow {\partial_1} C_0$. If we identify basis vectors in $C_0$ and $C_1$ with, respectively, the vertices of $A$ and $B$, we can define them by formal sums $C_0=\mathbb{F}_2A$ and $C_1=\mathbb{F}_2B$, with $\partial_1=H$. After applying a $\Gamma$-lift, the lifted code, noted $C^\Gamma$, has for Tanner graph $D(\mathcal{T},\nu)$. Therefore, $C^\Gamma$ has $|\Gamma|.\dim C_1$ bit variables and $|\Gamma|.\dim C_0$ parity-checks. We also refer to the resulting chain complex $C^\Gamma:=C_1^\Gamma\xrightarrow[]{\partial_{\Gamma,1}}C_0^\Gamma$ as a right $\Gamma$-lift of $C$. It follows from the property of covering maps, that $D(\mathcal{T},\nu)$ is locally homeomorphic to $\mathcal{T}$, so that the degree of each vertex in the derived graph $D(\mathcal{T},\nu)$ is equal to the degree of its projection in the base space $\mathcal{T}$. For that reason, the LDPC property of the code is conserved by the lift.\par

It is also well known from the theory of covering spaces that the chain complex of a normal cover with $\operatorname{Deck}(p)\cong \Gamma$ is a free $\mathbb{F}_2[\Gamma]$-module. While the chain complex of $C$ is not the cell complex of the graph $\mathcal{T}$, it nevertheless inherits the following module formulation.

\begin{proposition}
    Let $C:=C_1\xrightarrow {\partial_1} C_0$ be a linear code with Tanner graph $\mathcal{T}$ and $C^\Gamma$ be the $\Gamma$-lifted code associated to the Tanner graph $D(\mathcal{T},\nu)$. Then $C^\Gamma$ is identified with the chain complex

\begin{equation}
C^\Gamma:=\:\:\:\:C_1\otimes \mathbb{F}_2[\Gamma] \xrightarrow[]{\partial^\Gamma_1}C_0\otimes \mathbb{F}_2[\Gamma],
\end{equation} 
where the boundary map is defined on basis vectors by 
\begin{equation}
\partial^\Gamma_1(b\otimes g):= \sum_{a_i\in \operatorname{supp}\partial_1 b} a_i \otimes g\nu([b,a_i]), 
\end{equation} 
and extended by linearity over chains. 
\end{proposition}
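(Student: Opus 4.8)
The plan is to observe that the proposition is essentially a matter of unwinding Definition \ref{definition lift of linear code} together with Definition \ref{definition lift of graph} and then matching incidence data. Since a linear code is completely specified by a parity-check matrix, equivalently by its Tanner graph, the chain complex $C^\Gamma$ is entirely determined by the derived graph $D(\mathcal{T},\nu)$. So first I would fix bases: with $C_0=\mathbb{F}_2 A$, $C_1=\mathbb{F}_2 B$ and $\partial_1=H$, the derived graph $D(\mathcal{T},\nu)$ has bit vertices $B\times\Gamma$ and check vertices $A\times\Gamma$, hence $C_1^\Gamma=\mathbb{F}_2(B\times\Gamma)$ and $C_0^\Gamma=\mathbb{F}_2(A\times\Gamma)$. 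There is an evident $\mathbb{F}_2$-linear isomorphism $\varphi_i\colon C_i^\Gamma\to C_i\otimes\mathbb{F}_2[\Gamma]$ sending a basis vertex $(c,g)$ to $c\otimes g$; these identify $C_i\otimes\mathbb{F}_2[\Gamma]$ with the chains on the fibers of $p$.

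Next I would compute the boundary map $\partial_{\Gamma,1}$ of $C^\Gamma$ directly from the incidence structure of $D(\mathcal{T},\nu)$. By Definition \ref{definition lift of graph}, for $b\in B$ and $g\in\Gamma$, the bit vertex $(b,g)$ is joined, for every check $a_i$ adjacent to $b$ in $\mathcal{T}$, to the vertex $(a_i,g\,\nu([b,a_i]))$, and to no other check vertex; one checks that the two orientations of the underlying edge produce the same incident vertex, using $\nu(e^{-1})=\nu(e)^{-1}$. Since the checks adjacent to $b$ are exactly the elements of $\operatorname{supp}\partial_1 b$, the parity-check matrix read off from $D(\mathcal{T},\nu)$ gives $\partial_{\Gamma,1}(b,g)=\sum_{a_i\in\operatorname{supp}\partial_1 b}(a_i,g\,\nu([b,a_i]))$. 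Transporting this through $\varphi_0,\varphi_1$ yields precisely the claimed formula for $\partial_1^\Gamma$ on basis vectors, and extending by $\mathbb{F}_2$-linearity over chains finishes the identification of the two chain complexes.

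Finally I would record the promised module structure. Equip $C_i\otimes\mathbb{F}_2[\Gamma]$ with the left $\mathbb{F}_2[\Gamma]$-action $h\cdot(c\otimes g)=c\otimes hg$; under $\varphi_i$ this is the action $h\cdot(c,g)=(c,hg)$ of $\operatorname{Deck}(p)\cong\Gamma$ on chains on the cover, so $C_i\otimes\mathbb{F}_2[\Gamma]$ is free over $\mathbb{F}_2[\Gamma]$ with basis $\{c\otimes 1 : c\in\mathcal{B}_i\}$. The one computation worth doing carefully is that $\partial_1^\Gamma$ is a homomorphism of left $\mathbb{F}_2[\Gamma]$-modules: since the voltage is multiplied on the \emph{right}, $\partial_1^\Gamma\bigl(h\cdot(b\otimes g)\bigr)=\sum_{a_i} a_i\otimes (hg)\,\nu([b,a_i])=h\cdot\partial_1^\Gamma(b\otimes g)$ by associativity in $\Gamma$. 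This is consistent with the general fact, recalled just before the statement, that the chain complex of a normal cover with $\operatorname{Deck}(p)\cong\Gamma$ is a free $\mathbb{F}_2[\Gamma]$-module.

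I do not expect a genuine obstacle here; the only delicate point is bookkeeping, namely keeping the orientation convention for edges of $\mathcal{T}$ consistent when evaluating $\nu$, and keeping the deck action (on the left) and the voltage multiplication (on the right) on opposite sides so that $\partial_1^\Gamma$ comes out equivariant rather than merely $\mathbb{F}_2$-linear.
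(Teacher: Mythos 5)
Your proposal is correct and follows essentially the same route as the paper, which justifies the proposition by the identifications $C_i^\Gamma=\mathbb{F}_2(\cdot\times\Gamma)\cong C_i\otimes\mathbb{F}_2[\Gamma]$ and by reading the boundary map off the adjacency of the derived graph $D(\mathcal{T},\nu)$; your extra checks (consistency of the two edge orientations via $\nu(e^{-1})=\nu(e)^{-1}$, and left $\mathbb{F}_2[\Gamma]$-equivariance of $\partial_1^\Gamma$ thanks to the right-multiplication convention for voltages) are correct and only make explicit what the paper leaves implicit.
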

Indeed, the tensor product is over $\mathbb{F}_2$ and $C_0=\mathbb{F}_2(A\times \Gamma) \cong \mathbb{F}_2A\otimes \mathbb{F}_2[\Gamma]$ and $C_1=\mathbb{F}_2(B\times \Gamma) \cong \mathbb{F}_2B\otimes \mathbb{F}_2[\Gamma]$. It is appropriate to identify any vertex $(v,g)$ in $D(\mathcal{T},\nu)$ with a basis vector $v\otimes g$ in $C_i^\Gamma$, $i=1,2$ and depending on the two formulations $\partial_1^\Gamma$ is represented either as a $|A|\times |B|$ matrix with coefficients in $\mathbb{F}_2 [\Gamma]$ or as a $|A|.|\Gamma|\times |B|.|\Gamma|$ matrix with values in $\mathbb{F}_2$, called the left regular representation. \\ 

We end this section by defining \textit{permutation voltages}. Let $n\in \mathbb{N}$ and $S_n$ be the symmetric group on the set of objects $X=\{1,\dots,n\}$. Let $T=(E,V)$ be an oriented graph. A permutation voltage on $T$ starts with a permutation-voltage assignment on the set of edges, $\mu:E^*\rightarrow S_n$, such that $\mu(e^{-1})=\mu(e)^{-1}$. The right derived graph $D(T,\mu)$ has set of edges $E\times X$ and set of vertices $V\times X$. A cell $c$, vertex or edge, is written $(c,i)$ for $i\in X$. An edge $(e,i)$ in the graph $D(T,\mu)$, where $e=[v,v']$, connects $(v,i)$ and $(v',\mu(e)^{-1}i )$. For any permutation voltage, it is also possible to define a covering map $p:D(T,\mu)\rightarrow T$ sending a cell $(c,i)\mapsto c$. The graph $D(T,\mu)$ is then a $n$-sheeted cover of $T$. It was also shown in \cite{GROSS1977273} that for any $n$-sheeted graph covering $p:T'\rightarrow T$, there exists a permutation voltage assignment of $T$ in $S_n$ such that the derived graph is isomorphic to $T'$. Therefore, the set of graphs obtained from group voltage on $T$  is a subset of the one obtained by permutation voltage. As before, applying this procedure to the Tanner graph of a linear code yields a lifted code in a natural way.

\begin{remark}
A lift of a graph obtained by group voltage with $\Gamma=S_n$ is a $n!$-sheeted regular cover, while a lift obtained by permutation voltage in $S_n$ is an arbitrary $n$-sheeted cover.
\end{remark}

\section{Lift of a quantum CSS code}\label{section Lift of a quantum CSS code}

\subsection{Tanner graph of a quantum CSS code}\label{section The Tanner cone-complex and lift}

Let $C:=\mathbb{F}_2Z\xrightarrow[]{\partial_2=H_Z^T} \mathbb{F}_2Q\xrightarrow[]{\partial_1=H_X} \mathbb{F}_2X$ be a quantum CSS code. Recall that we always fix a basis for each space and dual space. We begin with the basic definition of the Tanner graph of a CSS code. Throughout this article, graphs have no self-loops or multi-edges unless otherwise stated.

\begin{definition}[Tanner graph]\label{Quantum Tanner graph}
Let $C:=\operatorname{CSS}(C_X,C_Z)$ be a CSS code. The \textit{Tanner graph} of $C$ is the bipartite graph $\mathcal{T}(C)=((Z\cup X)\cup Q,E_{QZ}\cup E_{QX})$, in which $(Z\cup Q, E_{QZ})=\mathcal{T}(C_Z)$ and $(X\cup Q, E_{QX})=\mathcal{T}(C_X)$. Whenever it is clear from context, that we associate a Tanner graph to a code $C$, we write it $\mathcal{T}$ instead of $\mathcal{T}(C)$.
\end{definition}
That is, we now identify the sets of vertices in $\mathcal{T}(C)$ with the abstract cells in the chain complex $C$. For a check $s\in X\cup Z$, $\operatorname{supp}(s)$ denotes the support of the row vector corresponding to this check in the corresponding parity-check matrix. According to Definition \ref{Quantum Tanner graph}, this graph admits for adjacency matrix \par
\[
\renewcommand\arraystretch{1.3}
A=\mleft[
\begin{array}{c|c|c}
  0 & H_Z & 0 \\
  \hline
  H_Z^T & 0 & H_X^T\\
  \hline 
  0 & H_X & 0
\end{array}
\mright].
\] \par
The Tanner graph of a CSS code is an example of a bipartite graph $T$ with vertex set $V\cup V_0$, in which one of the subsets of vertices, say $V$, is subdivided into two sets, $V=V_1\cup V_2$, so that $T$ is also seen as a tripartite graph. For such a graph, we denote $E_{ij}$ the edge set between vertex sets $V_i$ and $V_j$. We recall that the distance between two vertices in a graph is defined as the number of edges in a shortest path connecting them. The following definition will appear to be central throughout this article.

\begin{definition}[Induced subgraph]\label{Definition induced subgraph}
    Let $T:=(V\cup V_0, E_{01}\cup E_{02})$, with $V=V_1\cup V_2$, be a bipartite graph. For any $u\in V_1$, let $T_u$ denote the subgraph of $(V_2\cup V_0, E_{02})$ composed of all the vertices of $V_0$ adjacent to $V_1$, all the vertices of $V_2$ of distance $2$ to $u$ in $T$, and all the edges of $E_{02}$ between these vertices. We name $T_u$ the \textit{subgraph induced} by $u$ in $(V_2\cup V_0, E_{02})$, or simply the subgraph induced by $u$. Exchanging the role of $V_1$ and $V_2$, for any $v\in V_2$, we call $T_v$ the subgraph induced by $v$ in $(V_1\cup V_0, E_{01})$.
\end{definition}
\begin{figure}[t]
  \centering
  \includegraphics[scale=0.9]{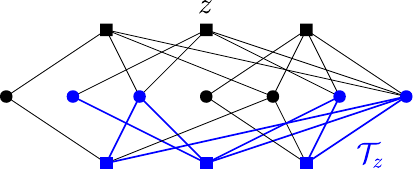}
  \caption{ Tanner graph of Steane's 7 qubit code \cite{Steane1996}. The upper row of vertices represents $Z$-checks, while the middle and lower rows are respectively qubits and $X$-checks. The induced subgraph of a $Z$-check $z$ is colored in blue.}
\end{figure}
We illustrate it for the Tanner graph $\mathcal{T}:=\mathcal{T}(C)$ with $C=\text{CSS}(C_X,C_Z)$. For any $z\in Z$, $\mathcal{T}_z$ denotes the subgraph of $\mathcal{T}({C_X})$ composed of all the qubits $q\in \text{supp}(z)$ and all the $X$-checks $x$ such that $\text{supp}(x)\cap \text{supp}(z)\neq \emptyset$. Then $\mathcal{T}_z$ is the subgraph induced by $z$ in $\mathcal{T}(C_X)$. Exchanging the role of $X$ and $Z$ checks, $\mathcal{T}_x$ the subgraph induced by $x$ in $\mathcal{T}(C_Z)$.

\begin{proposition}\label{proposition valid induced subgraph}
A bipartite graph of the form $T=(V\cup V_0, E_{01}\cup E_{02})$, with a partition $V=V_1\cup V_2$,  defines a valid Tanner graph for a CSS code iff for every $u\in V_1$ and every $v\in V_2$, $T_{v}$, respectively $T_u$, is a bipartite subgraph of even degree at the $V_2$ vertices, respectively the $V_1$ vertices.
\end{proposition}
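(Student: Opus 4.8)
The plan is to reduce the statement to the CSS orthogonality condition $H_X H_Z^{\top}=0$ and then reinterpret that condition combinatorially on the induced subgraphs. First I would make precise what ``valid Tanner graph for a CSS code'' means: given $T=(V\cup V_0,E_{01}\cup E_{02})$ with $V=V_1\cup V_2$, and following Definition \ref{Quantum Tanner graph}, one sets $H_Z$ to be the $|V_1|\times|V_0|$ biadjacency matrix of $(V_1\cup V_0,E_{01})$ and $H_X$ the $|V_2|\times|V_0|$ biadjacency matrix of $(V_2\cup V_0,E_{02})$; since graphs here are simple, these are $0/1$ matrices, and $T$ is the Tanner graph of $\operatorname{CSS}(C_X,C_Z)$ with parity-check matrices $H_X,H_Z$ precisely when $H_X H_Z^{\top}=0$ holds (no other hypothesis on $T$ is hidden: the tripartite incidence pattern is exactly what the $E_{01}\cup E_{02}$ assumption encodes). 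So it suffices to show $H_X H_Z^{\top}=0$ is equivalent to the even-degree condition, and symmetrically to the one with $V_1,V_2$ exchanged, which is the same equation transposed.

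Next I would compute the degree of a $V_2$-vertex in an induced subgraph. Writing $N(c)\subseteq V_0$ for the neighbourhood of a vertex $c$, Definition \ref{Definition induced subgraph} (as illustrated in the text) says that for $u\in V_1$ the vertex set of $T_u$ is $N(u)$ together with those $x\in V_2$ with $N(x)\cap N(u)\neq\emptyset$, its edges being the edges of $E_{02}$ among them; hence $\deg_{T_u}(x)=|N(x)\cap N(u)|$ for every $x\in V_2$ occurring in $T_u$, while $|N(x)\cap N(u)|=0$ (also even) for the $x$ not occurring. On the other hand $(H_XH_Z^{\top})_{x,u}=\sum_{q\in V_0}(H_X)_{x,q}(H_Z)_{u,q}\bmod 2 = |N(x)\cap N(u)|\bmod 2$. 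Therefore ``$T_u$ has even degree at all its $V_2$-vertices, for every $u\in V_1$'' is literally the assertion that $|N(x)\cap N(u)|$ is even for all $x\in V_2$ and $u\in V_1$, i.e. $H_XH_Z^{\top}=0$ over $\mathbb F_2$. The identical computation with $V_1,V_2$ interchanged gives $\deg_{T_v}(u)=|N(u)\cap N(v)|$, so the $V_2$-side condition is the same scalar condition, equivalent to $H_Z H_X^{\top}=(H_X H_Z^{\top})^{\top}=0$. Finally, $T_u$ is a subgraph of the bipartite graph $(V_2\cup V_0,E_{02})$ and $T_v$ a subgraph of $(V_1\cup V_0,E_{01})$, so both are automatically bipartite; bipartiteness in the statement carries no content and only the even-degree clause does any work.

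Assembling these equivalences gives: validity $\Leftrightarrow H_XH_Z^{\top}=0 \Leftrightarrow$ every $T_u$ $(u\in V_1)$ has even $V_2$-degrees $\Leftrightarrow$ every $T_v$ $(v\in V_2)$ has even $V_1$-degrees, which is exactly the claim. I do not anticipate a genuine obstacle: this is a translation exercise. The only points that need care are (i) stating cleanly that ``valid Tanner graph'' is precisely the orthogonality constraint, with simplicity of $T$ (standing assumption) being what forces $H_X,H_Z$ to be binary, and (ii) the elementary but essential passage between the integer statement ``$|N(x)\cap N(u)|$ is even'' and the $\mathbb F_2$-identity ``$(H_XH_Z^{\top})_{x,u}=0$''.
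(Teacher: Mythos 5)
Your proof is correct and follows the same route as the paper, which simply identifies $V_0=Q$, $V_1=Z$, $V_2=X$ and observes that the even-degree condition is the relation $H_XH_Z^{T}=0$; you have merely spelled out the computation $\deg_{T_u}(x)=|N(x)\cap N(u)|\equiv (H_XH_Z^{T})_{x,u}\pmod 2$ that the paper leaves implicit. No issues.
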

\begin{proof}
We make the following identification: $V_0=Q$, $V_1=Z$, $V_2=X$. Then, this is equivalent to the relation $H_X.H_Z^T=0$.    \end{proof} 
\subsection{The Tanner cone-complex }\label{section The Tanner cone-complex}
In this section, we introduce a crucial element to lift a code $C:=\mathbb{F}_2Z\xrightarrow[]{\partial_2} \mathbb{F}_2Q\xrightarrow[]{\partial_1} \mathbb{F}_2X,$ which is a 2D geometrical complex that we call the \textit{Tanner cone-complex} of $C$. This is a canonical object which is associated to any CSS code. Later, a lift will be defined via a covering of this complex.\par
To describe any 2D complex, we first introduce the notion of a face, which is an element $\{v_1,v_2,v_3\}$ such that $[v_1,v_2].[v_2,v_3].[v_3,v_1]$ is a closed path in the graph.

\begin{definition}[Tanner cone-complex]\label{definition Tanner cone-complex 1}
Let  $\mathcal{T}(C)=(V,E)$, be the Tanner graph of the code $C=\text{CSS}(C_X,C_Z)$, with $V=X\cup Z\cup Q$ and $E=E_{QZ}\cup E_{QX}$. The \textit{Tanner cone-complex} $\mathcal{K}(C)$ is the 2D simplicial complex\footnote{Here, by simplicial complex, we mean the geometrical realization of the associated abstract simplicial complex. This is because we will consider its fundamental group and do geometrical operations. In the abstract definition, a simplicial complex is a set of sets of objects called "vertices",  which is closed under taking subsets. Every abstract simplicial complex defines a geometrical one, in which sets of 2 and 3 "vertices" are respectively edges and triangular faces. The converse is also true. } with 1-skeleton the graph $(V,E\cup E_{XZ})$, where \[E_{XZ}:=\{\{x,z\}, x\in X, z\in Z, \text{supp}(x)\cap \text{supp}(z)\neq \emptyset\},\]
and with the set of triangular faces $F:=\{\{x,q,z\}, q\in\text{supp}(x)\cap \text{supp}(z)\} $. In other words, there is at most a single edge between any pair of vertices corresponding to $(x,z)\in X\times Z$ when their support has non-empty intersection. When it is clear from context that a Tanner cone-complex is associated to a code $C$, we write it $\mathcal{K}$ instead of $\mathcal{K}(C)$.
\end{definition}
Let $I_{ZX}$ represent the $m_Z\times m_X$-matrix with rows indexed by the $X$-checks and columns by the $Z$-checks, such that $(I_{ZX})_{i,j}=1$ whenever the $i$-th $Z$-check and the $j$-th $X$-check have a common support. Then the 1-skeleton of $\mathcal{K}(C)$ admits for adjacency matrix
\[
\renewcommand\arraystretch{1.3}
A=\mleft[
\begin{array}{c|c|c}
  0 & H_Z & I_{ZX} \\
  \hline
  H_Z^T & 0 & H_X^T\\
  \hline
  I_{ZX}^T & H_X & 0
\end{array}
\mright].
\]Note that according to this definition, the Tanner cone-complex of any classical code $D$ satisfies $\mathcal{K}(D)=\mathcal{T}(D)$.\par

While the construction of the Tanner cone-complex given in \ref{definition Tanner cone-complex 1} is symmetric in $X$ and $Z$, we can also obtain it using the following method, which serves as an alternative definition. While Definition \ref{definition Tanner cone-complex 1} is very simple, Proposition \ref{Proposition Tanner cone-complex} highlights a property which will be central to show that the lift of a CSS code represents a valid code.

\begin{definition}\label{definition cone}
Let $X$ and $Y$ be spaces, $X$ being a subspace of $Y$, and $I$ denote the unit interval. The \textit{mapping cone }of the inclusion map $i:X\hookrightarrow Y$ is the space obtained by attaching the \textit{cone on} $X$, \[ \operatorname{C}X:=X\times I /(X\times \{0\}),\] to $Y$ according to the equivalence $(x,1)\sim i(x)$, $\forall x\in X$. It is noted $Y\cup_i \operatorname{C}X$.
\end{definition}

\begin{proposition}\label{Proposition Tanner cone-complex}
Let $C=\text{CSS}(C_X,C_Z)$. For any $z\in Z$, let $\mathcal{T}_z$ denote the subgraph induced by $z$ in $\mathcal{T}(C_X)$. Then, for all $z\in Z$ we can consider the mapping cones  $\mathcal{T}(C_X)\cup_{i_z} \operatorname{C}\mathcal{T}_z$ of the inclusion maps $i_z:\mathcal{T}_z\hookrightarrow \mathcal{T}(C_X)$, where $\operatorname{C}\mathcal{T}_z$ denotes the cone on $\mathcal{T}_z$. 
\begin{enumerate}
    \item The Tanner cone-complex is obtained by the attachment of all the cones on $\mathcal{T}_{z}$ for $z\in Z$, to $\mathcal{T}(C_X)$ at once, \[\mathcal{K}(C)\cong\mathcal{T}(C_X)\bigcup\limits_{z\in Z, i_z} \operatorname{C}\mathcal{T}_z  .\]
    \item Exchanging the role of $X$ and $Z$ checks, we also have \[\mathcal{K}(C)\cong\mathcal{T}(C_Z)\bigcup\limits_{x\in X, i_x} \operatorname{C}\mathcal{T}_x, \] where $\mathcal{T}_x$ is the subgraph induced by $x$ in $\mathcal{T}(C_Z)$.

\end{enumerate}
\end{proposition}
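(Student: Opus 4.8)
The plan is to compute, cell by cell, the simplicial complex obtained by attaching all the cones, and to check that it coincides with $\mathcal{K}(C)$ as given in Definition \ref{definition Tanner cone-complex 1}. The first step is to pin down the simplicial structure of a mapping cone over a graph. If $L$ is a graph (a $1$-dimensional simplicial complex) sitting inside a simplicial complex $Y$ as a subcomplex, then the cone $\operatorname{C}L$ of Definition \ref{definition cone} carries the standard simplicial structure whose vertices are those of $L$ together with one new apex, whose edges are those of $L$ together with one edge joining the apex to each vertex of $L$, and whose $2$-simplices are the triangles spanned by the apex and an edge of $L$. Forming the mapping cone $Y\cup_i \operatorname{C}L$ then amounts, combinatorially, to adjoining to $Y$ a single new vertex $\ast$, an edge $\{\ast,v\}$ for every vertex $v$ of $L$, and a triangle $\{\ast,u,v\}$ for every edge $\{u,v\}$ of $L$; no further identifications occur, since $\ast\notin Y$.

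Next I would apply this to $Y=\mathcal{T}(C_X)$ and $L=\mathcal{T}_z$ for all $z\in Z$ simultaneously, taking the new apices to be pairwise distinct and labelling the apex of the $z$-th cone by $z$ itself. Because the apices are distinct from one another and from the vertices of $\mathcal{T}(C_X)$, the attachments do not interfere, and the complex $\mathcal{L}:=\mathcal{T}(C_X)\bigcup\limits_{z\in Z, i_z}\operatorname{C}\mathcal{T}_z$ has vertex set $X\cup Q\cup Z$, edge set $E_{QX}\cup\bigcup_{z\in Z}\{\,\{z,v\}: v\in V(\mathcal{T}_z)\,\}$, and face set $\bigcup_{z\in Z}\{\,\{z,u,v\}: \{u,v\}\in E(\mathcal{T}_z)\,\}$. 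It then remains to rewrite $V(\mathcal{T}_z)$ and $E(\mathcal{T}_z)$ using Definition \ref{Definition induced subgraph}: the induced subgraph $\mathcal{T}_z\subseteq\mathcal{T}(C_X)$ has for vertices the qubits $q\in\operatorname{supp}(z)$ and the $X$-checks $x$ with $\operatorname{supp}(x)\cap\operatorname{supp}(z)\neq\emptyset$, and for edges the pairs $\{q,x\}$ with $q\in\operatorname{supp}(x)\cap\operatorname{supp}(z)$. Substituting, the edges incident to $z$ in $\mathcal{L}$ are exactly the $\{z,q\}$ with $q\in\operatorname{supp}(z)$ (i.e.\ the edges $E_{QZ}$ of $\mathcal{T}(C_Z)$) together with the $\{z,x\}$ with $\operatorname{supp}(x)\cap\operatorname{supp}(z)\neq\emptyset$ (i.e.\ the edges $E_{XZ}$), and the faces are exactly the triangles $\{x,q,z\}$ with $q\in\operatorname{supp}(x)\cap\operatorname{supp}(z)$. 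Comparing with Definition \ref{definition Tanner cone-complex 1} and the adjacency matrix of its $1$-skeleton, $\mathcal{L}$ has the same vertices, the same edges, and the same $2$-simplices as $\mathcal{K}(C)$, so $\mathcal{L}=\mathcal{K}(C)$; in particular the ``at most one edge between $(x,z)$'' stipulation is automatic, since $\mathcal{T}_z$ is a simple graph and each apex is used once. This proves item~1.

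For item~2 I would observe that Definition \ref{definition Tanner cone-complex 1} is symmetric under interchanging the roles of $X$- and $Z$-checks: the adjacency matrix of the $1$-skeleton of $\mathcal{K}(C)$ is unchanged by swapping its first and third block rows and columns together with $H_X\leftrightarrow H_Z$, while the face set $F=\{\{x,q,z\}\}$ is literally symmetric in $x$ and $z$; hence $\mathcal{K}(\operatorname{CSS}(C_X,C_Z))=\mathcal{K}(\operatorname{CSS}(C_Z,C_X))$. Applying item~1 to $\operatorname{CSS}(C_Z,C_X)$ then gives $\mathcal{K}(C)\cong\mathcal{T}(C_Z)\bigcup\limits_{x\in X, i_x}\operatorname{C}\mathcal{T}_x$, with $\mathcal{T}_x$ the subgraph induced by $x$ in $\mathcal{T}(C_Z)$, as claimed.

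The argument is essentially a bookkeeping exercise, so I do not expect a serious obstacle. The only place that calls for genuine care is the first step: making the simplicial structure of $\operatorname{C}\mathcal{T}_z$ fully explicit, and verifying that the simultaneous attachment of the cones over all $z\in Z$ introduces neither spurious identifications nor multi-edges, so that the resulting cone complex really is a \emph{simple} $2$-complex matching Definition \ref{definition Tanner cone-complex 1} on the nose rather than merely up to subdivision.
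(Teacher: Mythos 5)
Your proposal is correct and follows essentially the same route as the paper's proof: label the apex of $\operatorname{C}\mathcal{T}_z$ by $z$, observe that the cone contributes exactly the edges $\{z,q\}$ for $q\in\operatorname{supp}(z)$ and $\{z,x\}$ for $\operatorname{supp}(x)\cap\operatorname{supp}(z)\neq\emptyset$ plus the triangles $\{x,q,z\}$, and deduce item 2 from the $X$--$Z$ symmetry of Definition \ref{definition Tanner cone-complex 1}. Your version merely makes the simplicial structure of the cone and the non-interference of the simultaneous attachments more explicit than the paper does.
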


\begin{proof}
\begin{enumerate}
    \item We can label the vertex at the tip of $\operatorname{C}\mathcal{T}_z $ by $z$. Then, since every vertex $x$ in $\mathcal{T}_z$ gives rise to an edge in $\operatorname{C}\mathcal{T}_z $ when $\text{supp}(x)\cap \text{supp}(z)\neq \emptyset$, we obtain all the edges connecting $z$ with qubits on which it has support and vertices of $X$ which have overlapping supports. Moreover, each edge $[q,x]$ gives rise to a face $f$ which has for boundary the path of edges $[q,x].[x,z].[z,q]$. 
    \item This is direct from Definition \ref{definition Tanner cone-complex 1} of the Tanner cone-complex, which is symmetric in the sets of vertices $X$ and $Z$.\qedhere
\end{enumerate}
\end{proof}
\begin{figure}[t]
  \centering
  \includegraphics[scale=0.9]{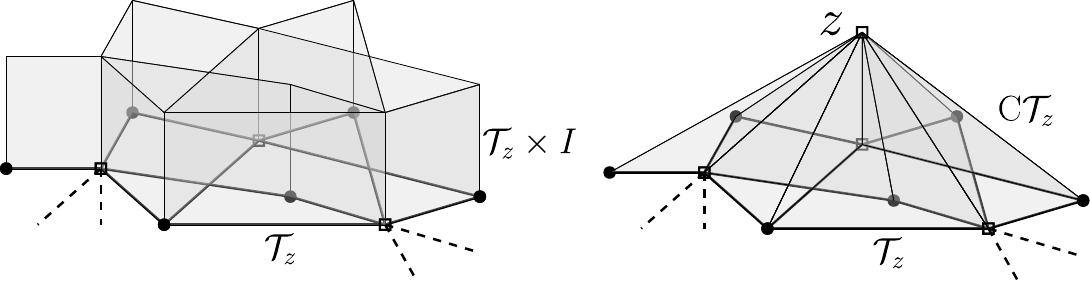}
  \caption{Cone on an induced subgraph $\mathcal{T}_z\hookrightarrow T(C_X)$, obtained by attaching $\mathcal{T}_z\times I$ to $\mathcal{T}(C_X)$ on one end, and then collapsing the other end to a point labeled $z$. Thick lines are edges of $\mathcal{T}_z$, dots are qubit vertices and boxes are check vertices. Dashed lines represent other edges in $\mathcal{T}(C_X)$. Thin lines are edges in $\mathcal{T}_z\times I$ or $\operatorname{C}\mathcal{T}_z$.}\label{Figure Tanner cone-complex}

\end{figure}
In Figure \ref{Figure Tanner cone-complex}, we illustrate the construction of the Tanner cone-complex of Proposition \ref{Proposition Tanner cone-complex}. Note that for any vertex $v$ representing a $X$ or $Z$-check, $\operatorname{C}\mathcal{T}_v$ is a contractible complex. This is a basic property of the cone, Definition \ref{definition cone} on a subcomplex. \par

\begin{remark}
The Tanner cone-complex has the following noticeable features.
   \begin{itemize}
       \item For any vertex $v$ representing a $X$ or $Z$ stabilizer, its link\footnote{For any vertex $v$ of a regular cell complex $K$, the link of $v$ in $K$ is defined as $\operatorname{link}(v,K):=\{ \sigma \in K, \sigma \cap v=\varnothing,\: v\cup \sigma\in X\}$, where by $v\cup \sigma$ we mean the cell defined by appending $v$ to the set of vertices defining $\sigma$.} is $\operatorname{link}(v)=\mathcal{T}_v$.
       \item For any vertex $q$ representing a qubit, $\operatorname{link}(q)$ is the complete bipartite graph composed of all the $X$ and $Z$ vertices having support on $q$ (and all the edges between them).
   \end{itemize}
\end{remark}
\begin{remark}
The Tanner cone-complex is the clique complex\footnote{The clique complex $K$ of a graph $T$ is a simplicial complex in which every $n$-vertex clique (every subset of $n$ vertices in which each pair of vertices are adjacent in $T$) of $T$ defines a $(n-1)$ dimensional cell in $K$} of its 1-skeleton. This is because any cycle of length 3 in its $1$-skeleton is specified by a triple $\{x,q,z\}$ with $q\in \text{supp}(x)\cap \text{supp}(z)$, which also defines a face in the Tanner cone-complex.
\end{remark}
This definition of Tanner cone-complex can be generalized to chain complex over $\mathbb{F}_2$-coefficients of any length. This is not needed in this work, but we state the definition for completeness. 
\begin{definition}
Consider the $\mathbb Z_2$-chain complex $C:=\: C_N\xrightarrow[]{\partial_N}\dots\xrightarrow[]{\partial_2} C_1\xrightarrow[]{\partial_1}C_0$, and denote the  $i$-th dual map $\delta_i$. The Tanner graph $\mathcal{T}(C)$ is the graph with set of nodes $V=\cup_0^N V_i$, where elements of $V_i$ are identified with basis vectors of $C_i$, and sets of edges $E=\cup_1^N E_{i,i-1}$, where there is an edge $e\in E_{i,i-1}$ between $u\in V_i$ and $v\in V_{i-1}$, iff $v\in \operatorname{supp}(\partial_{i} u)$. \par
The Tanner cone-complex $\mathcal{K}(C)$ is the 2D simplicial complex with set of edges $E\cup_2^N E_{i,i-2}$, where $E_{i,i-2}:=\{\{u,v\}, u\in V_i, v\in V_{i-2}, \text{supp}(\partial_i u)\cap \text{supp}(\delta_{i-1} v)\neq \emptyset\}$, and sets of faces $F=\cup_2^N F_{i,i-2}$, where $F_{i,i-2}:=\{\{u,v,w\}, u\in V_i, v\in V_{i-1},w\in V_{i-2},v\in\text{supp}(\partial_i u)\cap \text{supp}(\delta_{i-1}w).$\end{definition}

It is then possible to give an alternative construction using the cone on induced subgraphs, similarly to Proposition \ref{Proposition Tanner cone-complex}.

\subsection{Lift of a CSS code: definition}\label{section Lift of a CSS code: definition}

In this section, we define the lift of a CSS code. Intuitively, given a CSS code, one can consider covering maps of its Tanner cone-complex. It will first be shown that the associated covering space is the Tanner cone-complex of another CSS code. This new code will be defined as a lifted code. Then, the theory of covering maps can be used to classify all possible lifts.\par

The next proposition shows that a covering of any Tanner cone-complex can be used to define new quantum codes. This is not true in general if we try to do a covering of a Tanner graph directly, or of the 1-skeleton of the Tanner cone-complex. This is because for an arbitrary cover, the new induced subgraphs, see Definition \ref{Definition induced subgraph}, could be of odd degrees at certain check vertices. To understand this statement, consider two copies, noted $\mathcal{T}=(V,E)$ and $\mathcal{T}'=(V',E') $, of the Tanner graph of a CSS code $C$. Select in the first graph an arbitrary vertex $x\in X$, a qubit vertex $q$ adjacent to that check, and the copy of these vertices $x'$, $q'$, in the second graph. Then, consider the graph obtained from the disjoint copies, by adding two edges and removing two others, $T=(V\cup V',  \{ \{x,q'\},\{x',q\}\}\cup (E\cup E') \setminus \{ \{x,q\}, \{x',q'\}\} $. This is a 2-fold covering of the Tanner graph of $C$, obtained by a permutation voltage. However, for any vertex $z\in Z$ such that $x\in \mathcal{T}_z$ in $\mathcal{T}$, the new subgraph induced by $z$ in $G$ is of odd degree at $x'$, so that $T$ does not satisfy Proposition \ref{proposition valid induced subgraph}. We can repeat the same kind of argument, if we consider covering of the 1-skeleton of $\mathcal{K}(C)$. Therefore, to obtain a valid covering of a Tanner graph, another method is needed.

\begin{theorem} \label{Proposition Lift is valid CSS code}
Let $C$ be a CSS code with Tanner graph $\mathcal{T}$ and Tanner cone-complex $\mathcal{K}:=\mathcal{K}(C)$. Any finite covering map $p:K' \to \mathcal{K}$  of the Tanner cone-complex restricts to a covering map $p|_\mathcal{T}:T'\to\mathcal{T}$, for which the covering space $T'$ represents the Tanner graph of a quantum CSS code $C'$, $T'=\mathcal{T}(C')$, with sets of checks $X'=p^{-1}(X)$, $Z'=p^{-1}(Z)$ and qubits $Q'=p^{-1}(Q)$. 
\end{theorem}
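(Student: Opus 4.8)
The plan is to use the cone decomposition of Proposition~\ref{Proposition Tanner cone-complex} together with the restriction lemma for covering maps, Lemma~\ref{lemma restriction covering map}, to transfer the orthogonality condition from $\mathcal{K}$ to $K'$, and then invoke Proposition~\ref{proposition valid induced subgraph} to conclude that $T'$ is a valid Tanner graph. First I would set $T' := p^{-1}(\mathcal{T})$, where $\mathcal{T}$ is viewed as the subspace of $\mathcal{K}$ consisting of the Tanner graph edges $E_{QX}\cup E_{QZ}$ (i.e. $\mathcal{K}$ minus the $E_{XZ}$ edges and the open $2$-faces). Since $\mathcal{T}$ is a subspace of $\mathcal{K}$, Lemma~\ref{lemma restriction covering map} immediately gives that $p|_{T'}\colon T'\to\mathcal{T}$ is a covering map, so $T'$ is a graph, bipartite between $p^{-1}(Q)$ and $p^{-1}(X)\cup p^{-1}(Z)$ because the covering is local and $\mathcal{T}$ has this bipartition; set $Q' = p^{-1}(Q)$, $X' = p^{-1}(X)$, $Z' = p^{-1}(Z)$. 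Also restrict $p$ to the full $1$-skeleton and to the faces: by Lemma~\ref{lemma restriction covering map} again, $p$ restricts to a covering of the $1$-skeleton of $\mathcal{K}$, and each open $2$-face of $K'$ maps homeomorphically onto a $2$-face of $\mathcal{K}$, so $K'$ is itself a $2$-dimensional simplicial (cone-)complex whose faces are triangles $\{x',q',z'\}$ projecting to faces $\{x,q,z\}$ of $\mathcal{K}$.

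The heart of the argument is verifying the hypothesis of Proposition~\ref{proposition valid induced subgraph} for $T'$: for every $z'\in Z'$, the subgraph $T'_{z'}$ induced by $z'$ in $(X'\cup Q', E_{Q'X'})$ must have even degree at each $X'$-vertex, and symmetrically for $x'\in X'$. Fix $z'\in Z'$ with $p(z')=z$. Because $p$ is a covering of the cone-complex, a small neighborhood of $z'$ in $K'$ maps homeomorphically onto a neighborhood of $z$ in $\mathcal{K}$; concretely, the link of $z'$ in $K'$ is mapped isomorphically (as a graph) onto the link of $z$ in $\mathcal{K}$, which by the Remark after Proposition~\ref{Proposition Tanner cone-complex} is exactly $\mathcal{T}_z$, the subgraph induced by $z$ in $\mathcal{T}(C_X)$. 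Equivalently, the cone $\mathrm{C}\mathcal{T}_z$ attached at $z$ lifts, via the sheet of $p^{-1}$ containing $z'$, to a cone $\mathrm{C}(\mathcal{T}_z)$ attached at $z'$ inside $K'$ (one uses here that a cone is contractible, hence simply connected, so any covering restricted to it is trivial over that contractible piece and the lift through the cone point is a homeomorphic copy). The neighbors of $z'$ in $T'$ — the $Q'$-vertices adjacent to $z'$ and, through the faces, the $X'$-vertices sharing a face with $z'$ — therefore form a graph isomorphic to $\mathcal{T}_z$, and $T'_{z'}\cong\mathcal{T}_z$ as graphs. Since $C$ is a valid CSS code, Proposition~\ref{proposition valid induced subgraph} applied to $C$ tells us $\mathcal{T}_z$ has even degree at every $X$-vertex; transporting through the isomorphism, $T'_{z'}$ has even degree at every $X'$-vertex. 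The same reasoning with $X$ and $Z$ exchanged (using part~2 of Proposition~\ref{Proposition Tanner cone-complex} and the symmetric Remark) handles each $x'\in X'$. Hence $T'$ satisfies Proposition~\ref{proposition valid induced subgraph}, so $T'=\mathcal{T}(C')$ for a genuine CSS code $C'$ with check sets $X',Z'$ and qubit set $Q'$.

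I expect the main obstacle to be the step asserting that the link of $z'$ in $K'$ is carried isomorphically onto $\mathcal{T}_z=\mathrm{link}(z,\mathcal{K})$, i.e. that the cone attached at $z$ lifts without being "broken up" at the cone point. The subtlety is that $p$ is a covering of the whole complex $\mathcal{K}$, not of the graph $\mathcal{T}$, and one must argue that no two distinct faces of $\mathcal{K}$ incident to $z$ can map to a single face of $K'$ incident to $z'$, and that every face incident to $z$ is hit — both of which follow from the local homeomorphism property of coverings applied to a neighborhood of $z$, but this needs to be stated carefully for a simplicial complex (using that in a regular cell complex the star of a vertex is an honest neighborhood and the covering is a local homeomorphism there). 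Once this local picture is nailed down, the parity argument is immediate, and the identification of $C'$ as a CSS code is just Proposition~\ref{proposition valid induced subgraph}. A secondary, purely bookkeeping point is to confirm that $p^{-1}(Q)$, $p^{-1}(X)$, $p^{-1}(Z)$ are genuinely disjoint and exhaust the vertex set of $T'$, which is automatic since $p$ is a function and $Q,X,Z$ partition the vertices of $\mathcal{T}$.
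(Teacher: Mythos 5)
Your proposal is correct and follows essentially the same route as the paper: restrict the covering to the Tanner graph via Lemma \ref{lemma restriction covering map}, use the contractibility (hence simple connectivity) of each cone $\operatorname{C}\mathcal{T}_z$ to see that its preimage is a disjoint union of homeomorphic copies, identify the link of each lifted check vertex with $\mathcal{T}_z$, and conclude with Proposition \ref{proposition valid induced subgraph}. The subtlety you flag about the link of $z'$ being carried isomorphically onto $\mathcal{T}_z$ is exactly the point the paper's proof settles by lifting the whole cone at once.
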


\begin{proof}
We consider an arbitrary connected covering $p:K'\rightarrow \mathcal{K}$ of the Tanner cone complex. Since we have the embedding of the Tanner graph $\mathcal{T}\hookrightarrow \mathcal{K}$ as a subcomplex, and by Lemma \ref{lemma restriction covering map}, this covering map restricts to a covering map $p|_\mathcal{T}:T'\to\mathcal{T}$. We will show that $T'$ satisfies the property of Proposition \ref{proposition valid induced subgraph}. \par
Let $z\in Z$. The subcomplex $\operatorname{C}\mathcal{T}_z$ is simply connected, since by construction the cone on a subcomplex can be retracted onto a point. The restriction of a covering map to a subcomplex of $\mathcal{K}$ is also a covering of this subcomplex by Lemma \ref{lemma restriction covering map}. Hence, all the lifts of $\operatorname{C}\mathcal{T}_z$ in $K'$ must be disjoint copies of $\operatorname{C}\mathcal{T}_z$, that we denote as $\operatorname{C}\mathcal{T}_{(z,i)}$, where $i$ is an arbitrary indexing of the elements in the fiber above $z$. Therefore, each of these subcomplex $\operatorname{C}\mathcal{T}_{(z,i)}$, for $1\leq i\leq |S_z|$, corresponds to a new $Z$-generator $(z,i)\in Z\times S_z $, and it has as $\operatorname{link}((z,i))=\mathcal{T}_{(z,i)}$ which is a subgraph of the 1-skeleton of $\operatorname{C}\mathcal{T}_{(z,i)}$. Therefore, $\mathcal{T}_{(z,i)}\cong \mathcal{T}_z$, and in particular, it has even degree at the vertices of $X'$ corresponding to preimage by $p$ of $X$-vertices. We can repeat the same reasoning for all $x\in X$ and subcomplexes $\operatorname{C}\mathcal{T}_x$. As mentioned in Proposition\ref{proposition valid induced subgraph}, this is the only requirement for the 1-skeleton of $K'$ to describe a valid Tanner graph, once we remove edges between check vertices to give $T'=\mathcal{T}(C')$ for some code $C'$. 
\end{proof}

In the rest of this article, we only consider CSS codes with connected Tanner graphs, since a code with a disconnected Tanner graph is a disjoint union of several codes, each of which can be studied independently. This is also a requirement for the fundamental group $\pi_1(\mathcal{K}(C))$ to be uniquely defined, up to isomorphism\footnote{Given a basepoint $v$ on a topological space $K$, recall that the fundamental group is the group of homotopy classes of loops based at $v$, which has for group operation the concatenation of loops. For a space $K=\sqcup_i K_i$, and $v\in K_i$, $\pi_1(K,v)=\pi_1(K_i,v)$.}.

\begin{definition}[lift of a CSS code]\label{definition lift of quantum CSS code}
Let $C$ be a CSS code, with Tanner graph $\mathcal{T}$, and Tanner cone-complex $\mathcal{K}:=\mathcal{K}(C)$. Let $p:K'\rightarrow \mathcal{K} $ be a finite cover of $\mathcal{K}$ and $p|_\mathcal{T}:\mathcal{T'}\to\mathcal{T}$ its restriction to the Tanner graph. The \textit{lift }of $C$ associated to $p$ corresponds to the CSS code $C'$ such that $\mathcal{K}(C')=K'$ (or equivalently with Tanner graph $\mathcal{T}'=\mathcal{T}(C')$), sets of checks $X'=p^{-1}(X)$, $Z'=p^{-1}(Z)$ and qubits $Q'=p^{-1}(Q)$. A \textit{connected} lift of $C$ is a lift defined by a connected cover of $\mathcal{K}$. A \textit{trivial lift }of $C$ corresponds to disjoint copies of this code.
\end{definition}

When a lift of a code $C$ is disconnected, one only obtains a disjoint union of codes, each of which is obtained by a connected lift of $C$. For that reason, we restrict our attention to connected lifts. \par
\begin{remark}
Notice that the fundamental group $\pi_1(\mathcal{K}(C))$ depends on the choice of basis of $X$ and $Z$-checks. For example, for the Toric code this is $\mathbb Z\times \mathbb Z$, but if we replace a basis check $z\in Z$ by $z'+z$, with $z'\in Z$ such that $\operatorname{supp}(z)\cap \operatorname{supp}(z')=\emptyset$, then this becomes the free product $(\mathbb Z\times \mathbb Z)* \mathbb Z$ because, as seen in Proposition \ref{Proposition Tanner cone-complex}, we take the cone on a disconnected subgraph. \par 
\end{remark}
The condition for the existence of a non-trivial lift is the non triviality of $\pi_1(\mathcal{K}(C))$. When this condition is satisfied, the lift of a CSS code possesses the following properties.
\begin{proposition}\label{Proposition Lift properties}
The lift of a quantum CSS code, associated to a degree $d$ covering of its Tanner cone-complex, enjoys the following properties.
\begin{enumerate}
    \item         For an input CSS code of length $n$, it is a CSS code of length given by $d.n$.
    \item     The maximum weight of rows and columns of the lifted check matrices is unchanged compared to that of the input code.
    \item The dimension $k'$ of the new code is lower bounded by $k'\geq d.(|Q|-m_X-m_Z)$.
    \item If built from a finite regular cover, it has a free (linear) right action of the group of deck transformations and its dual complex has a similar left action. 
    \item     Applied to a classical code, it coincides with the geometrical lift of linear codes.
    \item Applied to a hypergraph product code \cite{Tillich2009}, it coincides with the lifted product code \cite{Panteleev2021}.
\end{enumerate}
\end{proposition}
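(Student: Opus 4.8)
The plan is to establish the six claims essentially in order, since the later ones build on the geometric structure exposed by the first two. For item~(1): by Theorem~\ref{Theorem Galois correspond}, a degree-$d$ cover $p:K'\to\mathcal{K}$ has fibers of cardinality $d$ over every point, and by Definition~\ref{definition lift of quantum CSS code} the qubit set of $C'$ is $Q'=p^{-1}(Q)$; since $Q$ consists of $n=|Q|$ vertices, each with a fiber of size $d$, we get $|Q'|=d\cdot n$. For item~(2): a covering map is a local homeomorphism (Lemma~\ref{lemma restriction covering map} applied to $\mathcal{T}\hookrightarrow\mathcal{K}$ gives that $p|_\mathcal{T}:\mathcal{T}'\to\mathcal{T}$ is a covering of graphs), so each vertex $v'\in\mathcal{T}'$ has the same degree as $p(v')\in\mathcal{T}$. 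Translating back through the adjacency-matrix description of Definition~\ref{Quantum Tanner graph}, the degree of a check vertex is precisely the weight of the corresponding row of $H_X$ or $H_Z$, and the degree of a qubit vertex is the column weight; hence $w_X,q_X,w_Z,q_Z$ are preserved, and in particular the LDPC property is preserved.

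For item~(3): by Proposition~\ref{Proposition 3term complex} (or Definition~\ref{Def_CSS stabilizer code}) the dimension of $C'$ equals $|Q'|-\operatorname{rank}H_X'-\operatorname{rank}H_Z'$; using $|Q'|=d|Q|$, $|X'|=d\,m_X$, $|Z'|=d\,m_Z$ and the trivial bounds $\operatorname{rank}H_X'\le d\,m_X$, $\operatorname{rank}H_Z'\le d\,m_Z$, we obtain $k'\ge d(|Q|-m_X-m_Z)$. For item~(4): if the cover is regular, then $\operatorname{Deck}(p)$ acts freely and transitively on each fiber (by the definition of regular covering recalled in the preliminaries), and this action restricts to the subcomplex $\mathcal{T}'$ and permutes $X'$, $Q'$, $Z'$ respecting incidence; passing to $\mathbb{F}_2$-chains this is a free linear right action on the chain complex of $C'$ (after the usual convention turning the left deck action into a right one, as in the Remark following Definition~\ref{definition lift of linear code}), and dualizing gives the corresponding left action on the cochain complex.

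For item~(5): when $C$ is classical, $\mathcal{K}(C)=\mathcal{T}(C)$ by the observation just after Definition~\ref{definition Tanner cone-complex 1} (there are no $Z$-checks, hence no $E_{XZ}$ edges and no faces), so a cover of $\mathcal{K}(C)$ is just a cover of the Tanner graph, which by the results of Section~\ref{section geometrical lifts of linear code} (every connected or disconnected cover of a graph arises from a permutation voltage assignment, and every regular cover from a group voltage) is exactly the geometric lift of linear codes. For item~(6), the HPC case: here I would invoke the forward-referenced classification — the code $\mathcal{K}(C\otimes D^*)$ and its covers are analyzed in Section~\ref{section classification of lifts of HPC}, where the lifts of an HPC are shown to correspond, via Goursat quintuples, to the lifted product construction of~\cite{Panteleev2021}; so item~(6) is really a pointer to Proposition~\ref{Proposition Antidiagonal action Goursat} and the surrounding discussion rather than something proved in place. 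I expect the main obstacle to be item~(6): verifying that a general cover of the Tanner cone-complex of a hypergraph product — not merely the product of two graph covers — still yields precisely the LPC family requires the full Goursat-subgroup bookkeeping and is the genuinely substantial part; items~(1)--(5) are short once the covering-space dictionary of the preliminaries is in hand. A secondary subtlety worth flagging in the write-up of item~(4) is the orientation/left-versus-right convention for the deck action, which must be fixed consistently with Definition~\ref{definition lift of graph} to make the "right module" statement literally correct.
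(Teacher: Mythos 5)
Your proposal is correct and follows essentially the same route as the paper, which disposes of items (1)--(5) by exactly the covering-space facts and rank-counting argument you spell out, and likewise defers item (6) to the Goursat/lifted-product analysis of Section \ref{section classification of lifts of HPC} (Proposition \ref{Proposition Lift equivalence PK}). Your version simply supplies the details the paper leaves implicit.
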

\begin{proof}
   Items 1., 2. and $4.$ follow from directly the theory of covering spaces. Item 3 comes from the cardinality of the fiber in a covering map and from a simple counting argument on the number of qubits and checks in the lifted code. Item 5. comes from the fact that the Tanner cone-complex of a linear code is simply its Tanner graph. Item 6. will be the subject of Section \ref{section Applications}.
\end{proof}

\begin{remark}
We are not aware of any method to compute the dimension and distance of a lifted code in full generality, given the parameters of an input code. This is at least as difficult as determining the 1st homology and the systole of a surface, since the Tanner cone-complex of a surface code is a simplicial refinement of a cellulation. For that reason, the strategy adopted later is a case by case study.
\end{remark}
For example, several codes, such as Shor's \cite{Shor1995} and Steane's \cite{Stean1996Multiple} code, Quantum Reed-Muller codes\footnote{Quantum Reed-Muller codes are non LDPC, and one check has support on every qubit. As a general rule, taking cones over large subgraphs of a graph is likely to trivialize the fundamental group of the whole graph.}, and quantum CSS codes from finite geometry \cite{Audoux2015} cannot be lifted in a non-trivial way because their Tanner cone-complexes have a trivial fundamental group.  Some examples can be lifted in the usual sense: linear codes and codes coming from regular cellulations of surfaces. \par

Because the topological space $\mathcal{K}(C)$ is a connected cell complex, there is a way to classify all possible lifts of CSS codes. This is a direct consequence of the Galois correspondence, Theorem \ref{Theorem Galois correspond}. More precisely, for every subgroup $H$ of $\pi_1(\mathcal{K},v)$ there exists a path-connected covering space $\mathcal{K}'$ such that the $p_*\pi_1 (\mathcal{K}',v' )\cong H$. The new complex $\mathcal{K}'$ is the Tanner cone-complex of a lifted code $C'$, by Proposition \ref{Proposition Lift is valid CSS code}. This statement of existence is essential, but we will not be interested in every subgroup of the fundamental group to create codes. Only subgroups of finite index in $\pi_1(\mathcal{K})$ will give codes with a finite number of qubits.\par
Using this correspondence, it can be appropriate to denote $C_H$ the $H$-lift of a CSS code $C$. This is in contrast with the notation of Definition \ref{definition lift of linear code}, where the $\Gamma$-lift of a linear code $C$ is written $C^\Gamma$, for an arbitrary choice of finite group $\Gamma$, playing the role of the fiber of the covering. For a lifted CSS code, whenever we wish to put emphasis on the subgroup $H \leq\pi_1(\mathcal{K})$, we write $H$ in subscript. This notation is suitable when, for instance, the subgroup isn't normal. However, when $H\trianglelefteq \pi_1(\mathcal{K})$ and we wish to put emphasis on the fiber of the covering $\Gamma=\pi_1(\mathcal{K})/H$, we write the lifted CSS code $C^\Gamma$. This is useful when only a complicated presentation of $H$ is available, while the quotient group $\Gamma$ is well known.\par

\subsection{Lift of a CSS code: explicit construction}\label{section Lift of a CSS code: explicit construction}

In this section, we describe the procedure to construct the lift of a quantum CSS code explicitly. To this end, we first show how to produce a covering of the 1-skeleton of the Tanner cone-complex, which can be completed into a covering of the full complex. Since all the information about the lifted quantum code is contained in the 1-skeleton, we can derive the parity-check matrices of the lifted code from this graph covering. We keep the notations of graphs established in Section \ref{section geometrical lifts of linear code}. Most of the material on covering maps and fundamental group is adapted from \cite{Lyndon2001,HatcherTopo}, therefore the construction is detailed but proofs usually omitted.\par

We consider the CSS code $C:=\mathbb{F}_2Z\xrightarrow[]{\partial_2} \mathbb{F}_2Q\xrightarrow[]{\partial_1} \mathbb{F}_2X$ and denote the dual maps $\delta_i$ for $i=1,2$. Let $\mathcal{T}:=\mathcal{T}(C)$, and $\mathcal{K}:=\mathcal{K}(C)$. Recall that $\mathcal{K}^1=(V,E)$, the 1-skeleton of $\mathcal{K}$, is not isomorphic to $\mathcal{T}$ since $\mathcal{K}^1$ has edges between the $X$ and $Z$ vertices. \par
The topological space $\mathcal{K}$ is a finite connected simplicial complex. From the Galois correspondence, Theorem \ref{Theorem Galois correspond}, connected coverings of the base space $\mathcal{K}$, are in one-to-one correspondence with subgroups of its fundamental group $\pi_1(\mathcal{K},q_0)$, where $q_0$ is an arbitrary choice of basepoint in $\mathcal{K}$ that we now omit. In particular, let $H$ be a subgroup of $\pi_1(\mathcal{K})$. There exists a unique covering map, 
\begin{equation}\label{equation covering}
p_H:\mathcal{K}_H \rightarrow \mathcal{K},
\end{equation}
associated to $H$. It has degree $d=[\pi_1(\mathcal{K}):H]$. The lifted code $C_H$ is the one which satisfies $\mathcal{K}(C_H)=\mathcal{K}_H$, with the assignment of checks and qubits to vertices prescribed in Definition \ref{definition lift of quantum CSS code}. \par
We will first construct the covering of Equation \eqref{equation covering}, by starting with a covering of the 1-skeleton of $\mathcal{K}$. To do so, we have to consider the following subgroups and homomorphisms,
\[
\begin{tikzcd}[column sep=0cm, row sep=0.6cm]
  &  \pi_1(\mathcal{K}) 
  & \geq 
  &   H \arrow{d}{\phi^{-1}}
\\
  & \pi_1(\mathcal{K}^1) \arrow{u}{\phi} 
  & \geq
  & H^1
\end{tikzcd},
\]
where $\phi$ is the natural homomorphism, taking the quotient of $\pi_1(\mathcal{K}^1)$ by the normal closure\footnote{Let $\langle S\:|\:R\rangle$ be a presentation of a group $G$. The normal closure $N$ of $R$ in $F_S$, the free group generated by $S$, is the smallest normal subgroup of $F_S$ containing $R$, $N=\langle grg^{-1}: g\in F_S, r\in R\rangle $. Then, we have $\langle G=F_S/N$.} of the subgroup generated by relations coming from loops which bound a face in $\mathcal{K}$. Moreover, $H^1$ is the preimage of $H$ in $\pi_1(\mathcal{K}^1)$ by this homomorphism. We now use the shorthand notations, $\Gamma:=\pi_1(\mathcal{K} )/H$ and $\Gamma^1:=\pi_1(\mathcal{K}^1)/H^1$. Whenever $H$ is normal, there is an isomorphism $\Gamma\cong\Gamma^1$. Otherwise, this is a bijection between the cosets, after specifying whether the cosets are taken from the left or right. Here, all cosets are taken on the right. \par
To construct the covering map $p_{H}:\mathcal{K}_H\rightarrow \mathcal{K}$, we must begin with a construction of the covering map \begin{equation}\label{Equation covering of 1-skeleton}
p^1_{H^1}:\mathcal{K}^1_{H^1}\rightarrow \mathcal{K}^1,
\end{equation}
which exists and is also unique by Theorem \ref{Theorem Galois correspond}. It can be obtained as the lift of a graph as exposed in Definition \ref{definition lift of graph}, but with a choice of voltage assignment specifically designed to produce connected coverings. We explain the steps to create $p^1_{H^1}$ and $\mathcal{K}^1_{H^1}$, the latter being a graph with sets of vertices $Q\times \Gamma^1\cup X\times \Gamma^1\cup Z\times \Gamma^1$ that we now describe how to connect by edges. Since $\mathcal{K}^1$ is a connected graph, it admits a maximal spanning tree $T$ with basepoint $q_0$, and by definition a unique path from vertex to vertex. We define a specific type of voltage assignment on the set of oriented edges,
\begin{equation}\label{Equation modified voltage assignment}
    \nu^T: E^* \to \pi_1(\mathcal{K}^1)
\end{equation}
mapping an edge $e=[u,v]$ to an element of $\pi_1(\mathcal{K}^1)$ in the following manner. Let a path $\alpha$ from vertex $a$ to $b$ be denoted as a sequence of oriented edges, for example $\alpha=[a,u_1].[u_1,u_2]\dots [u_j,b]$. Moreover, let $((a,b))$ denote the path from $a$ to $b$ in the tree $T$. Then the equivalence class of loops obtained by adding $e$ to $T$ is $\nu^T(e):=[((q_0,u)).e.((v,q_0))]$, where $[\alpha]$ is the standard notation for the homotopy class of a loop $\alpha$ based at $q_0$. When we have an edge $[u,v]$, we note $\nu^T[u,v]:=\nu^T([u,v])$, to make the notation less cluttered. We define a new construction of derived graphs associated to this voltage assignment.

\begin{definition}[Connected lift of a graph]\label{definition modified lift of graph}
Let $G=(V,E)$ be a connected graph, $T$ a spanning tree of $G$, $\nu^T:E^*\to\pi_1(G)$ a voltage assignment on oriented edges as described above, $H$ a subgroup of $\pi_1(G)$ and let $\Gamma:=\pi_1(G)/H$. The right-derived graph $D(G,\nu^T,H)$ is a graph with set of vertices $V\times \Gamma$ and set of edges $E\times \Gamma$, so that a vertex or edge is written $(c,gH)$ with $c\in G$ and $g\in \pi_1(G)$. An oriented edge $e'=(e,gH)$ in the graph $D(G,\nu^T,H)$, with $g\in \pi_1(G)$ and $e=[u,v]$,  connects $(u,gH)$ and $(v, g \nu^T(e)H)$, where the multiplication by $\nu^T(e)$ is always on the right. 
\end{definition}
In general, a graph admits more than one spanning tree. Following this procedure with another choice of spanning tree will modify the parametrization of the vertices and edges, but the graphs will be isomorphic, by the Galois correspondence. \par
When $H^1\trianglelefteq \pi_1(\mathcal{K}^1)$, this definition of  lifts coincides with the one given in Definition \ref{definition lift of graph} for the specific choice of voltage assignment given here. But while Definition \ref{definition lift of linear code} can generate all (connected and disconnected) regular covers, Definition \ref{definition modified lift of graph} can generate all connected regular and non-regular covers.\par

With Definition \ref{definition modified lift of graph} of right derived graph, we claim that $\mathcal{K}^1_{H^1}=D(\mathcal{K}^1,\nu^T,H^1)$. For that, we define $p^1_{H^1}$ as the map which projects any cell of $\mathcal{K}^1_{H^1}$, vertex or edge, to its first coordinate. This is a valid covering map, equivalent to what we have done in Section \ref{section geometrical lifts of linear code}. It is also possible to prove that $p^1_{H^1}$ induces an isomorphism $p^1_{H^1*}:\pi_1(\mathcal{K}^1_{H^1},q_0')\to H^1$, for any choice of basepoint $q_0'$ in the preimage of $q_0$, by studying the image of loops in the cover (see \cite{Lyndon2001}, Chapter III. 3). Therefore, the graph $\mathcal{K}^1_{H^1}=(\{Q\times \Gamma^1,X\times \Gamma^1,Z\times \Gamma^1\}, E\times \Gamma)$  and the map $p^1_{H^1}$ define the covering of Equation \eqref{Equation covering of 1-skeleton}. \par

We could describe how to complete the covering of Equation \eqref{Equation covering of 1-skeleton} into the one of Equation \eqref{equation covering} by geometrical means. But this step is not required to derive the boundary maps of the lifted code. Their expression is summarized in the next proposition.

\begin{proposition}\label{Proposition boundary maps of lifted code}
Let $C:=\mathbb{F}_2Z\xrightarrow[]{\partial_2} \mathbb{F}_2Q\xrightarrow[]{\partial_1} \mathbb{F}_2X$ be a CSS code, and denote the dual maps $\delta_i$ for $i=1,2$. Let $\mathcal{K}:=\mathcal{K}(C)$, $T\subseteq \mathcal{K}^1=(E,V) $ a spanning tree of its 1-skeleton, $\nu^T: E^* \to \pi_1(\mathcal{K}^1)$ a voltage assignment on oriented edges, and $\phi: \pi_1(\mathcal{K}^1)\to\pi_1(\mathcal{K})$ the natural homomorphism. For a given choice of subgroup $H\leq \pi_1(\mathcal{K})$ with finite index, let $C_H$ be the associated lifted CSS code of $C$. We write $\Gamma$ the set of $H$-coset in $\pi_1(\mathcal{
T})$. Then, this code can be written as
\[
 C_H:=\mathbb{F}_2^{m_Z.|\Gamma|} \xrightarrow[]{\partial_{H,2}}\mathbb{F}_2^{n.|\Gamma|}\xrightarrow[]{\partial_{H,1}} \mathbb{F}_2^{m_X.|\Gamma|}.
\]
The boundary maps of the lifted code and its dual (in the sense of chain complex) are defined on basis elements of $C_H$ by
\begin{equation}\label{equation parity-checks of lift 1}
\begin{aligned}
     \delta_{H,1}(x,gH)=&\sum_{q\in \text{supp}(\delta_1x)} (q,g\phi(\nu^T[x,q])H),\\
\partial_{H,1}(q,gH)=&\sum_{x\in \text{supp}(\partial_1 q)}(x,g\phi(\nu^T[q,x])H),
\end{aligned}
\end{equation}
where $g\in \pi_1(\mathcal{K})$ is any group element. Similarly, we have
\begin{equation}\label{equation parity-checks of lift 2}
\begin{aligned}
     \delta_{H,2}(q,gH)=&\sum_{z\in \operatorname{supp}(\delta_2q)} (z,g\phi(\nu^T[q,z])H),\\
\partial_{H,2}(z,gH)=&\sum_{q\in \operatorname{supp}(\partial_2 z)}(q,g\phi(\nu^T[z,q])H).
\end{aligned}
\end{equation}
Their action is extended by linearity over chains.
\end{proposition}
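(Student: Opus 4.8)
The plan is to verify that the boundary maps written in \eqref{equation parity-checks of lift 1} and \eqref{equation parity-checks of lift 2} are exactly the ones read off from the covering $p_H:\mathcal{K}_H\to\mathcal{K}$ constructed in this section, and that they do define a valid chain complex. First I would recall the explicit description already established: the $1$-skeleton $\mathcal{K}^1_{H^1}=D(\mathcal{K}^1,\nu^T,H^1)$ has vertex set $(Q\cup X\cup Z)\times\Gamma^1$, and by Definition \ref{definition modified lift of graph} an oriented edge $(e,gH^1)$ with $e=[u,v]$ runs from $(u,gH^1)$ to $(v,g\nu^T(e)H^1)$. By Theorem \ref{Proposition Lift is valid CSS code}, after deleting the $X$–$Z$ edges, the remaining graph is $\mathcal{T}(C_H)$, with checks $X'=p^{-1}(X)$, $Z'=p^{-1}(Z)$ and qubits $Q'=p^{-1}(Q)$; these preimages are canonically identified with $X\times\Gamma$, $Z\times\Gamma$, $Q\times\Gamma$, using the bijection $\Gamma^1\leftrightarrow\Gamma$ induced by $\phi$ (an isomorphism when $H$ is normal, a coset bijection otherwise — here on the right, consistent with the right action convention of Section \ref{section geometrical lifts of linear code}). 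This gives the spaces $\mathbb{F}_2^{m_Z|\Gamma|}\to\mathbb{F}_2^{n|\Gamma|}\to\mathbb{F}_2^{m_X|\Gamma|}$ of the statement.

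Next I would compute the two boundary maps. By the general recipe for reading a CSS chain complex off a Tanner graph (Section \ref{section quantum CSS codes}: $\partial_i s=\sum_{t\in\operatorname{supp}(s)}t$), the lifted boundary of a qubit vertex $(q,gH)$ is the sum over the $X$-check vertices adjacent to it in $\mathcal{T}(C_H)$. In $\mathcal{K}^1_{H^1}$ the edge $(e,gH^1)$ with $e=[q,x]$, $x\in\operatorname{supp}(\partial_1 q)$, joins $(q,gH^1)$ to $(x,g\nu^T([q,x])H^1)$; pushing through the identification $\Gamma^1\leftrightarrow\Gamma$ via $\phi$ turns the label into $g\,\phi(\nu^T[q,x])\,H$, which is exactly the summand in \eqref{equation parity-checks of lift 1}. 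The coboundary $\delta_{H,1}$ is obtained by the same argument reading the edge in the opposite orientation, using $\nu^T(e^{-1})=\nu^T(e)^{-1}$, which is why the formula for $\delta_{H,1}(x,gH)$ runs over $q\in\operatorname{supp}(\delta_1 x)$ with voltage $\nu^T[x,q]$; this is consistent with $\delta_i=\partial_i^{T}$ and the identification $C_i=C^i$. The same verbatim reasoning on the $Q$–$Z$ edges yields \eqref{equation parity-checks of lift 2}. One should also check that the definition is independent of the choice of coset representative $g$: replacing $g$ by $gh$ with $h\in H$ permutes the summands coherently since $H=\phi(H^1)$ and $H$ acts on the right on the fiber, so the formulas are well defined on $\Gamma$.

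Finally I would record why $\partial_{H,1}\circ\partial_{H,2}=0$, i.e. that $C_H$ is genuinely a CSS code; but this is already guaranteed by Theorem \ref{Proposition Lift is valid CSS code}, since $\mathcal{K}_H=\mathcal{K}(C_H)$ is the Tanner cone-complex of a CSS code, so one only needs to remark that the maps just written are the boundary maps of that code. The step I expect to be the main obstacle is purely bookkeeping rather than conceptual: carefully tracking the passage from $\Gamma^1=\pi_1(\mathcal{K}^1)/H^1$ to $\Gamma=\pi_1(\mathcal{K})/H$ through $\phi$, making sure the right-coset conventions are applied uniformly in all four formulas and that the endpoints of each lifted edge land in the fiber predicted by the voltage $\nu^T$, so that the combinatorial incidences of $\mathcal{T}(C_H)$ match \eqref{equation parity-checks of lift 1}–\eqref{equation parity-checks of lift 2} on the nose. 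Once the identifications are pinned down, each equation is a one-line unwinding of Definition \ref{definition modified lift of graph}.
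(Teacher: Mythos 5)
Your proposal follows essentially the same route as the paper: you read the incidence relations off the derived graph $\mathcal{K}^1_{H^1}=D(\mathcal{K}^1,\nu^T,H^1)$ of the $1$-skeleton, reparametrize the fiber from $H^1$-cosets to $H$-cosets via the natural homomorphism $\phi$, obtain the dimensions from the degree of the cover, and invoke Theorem \ref{Proposition Lift is valid CSS code} for the validity of the resulting parity-check matrices. The extra care you take with coset-representative independence and the $\Gamma^1\leftrightarrow\Gamma$ bookkeeping is a welcome refinement of the same argument, not a different one.
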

\begin{proof}
The dimension of each vector space of the lifted code $C_H$ is multiplied by the degree of the cover $p:\mathcal{K}_H\to\mathcal{K}$ by Proposition \ref{Proposition Lift properties}; they are $m_Z.|\Gamma|$ for the $Z$-checks, $m_X.|\Gamma|$ for the $X$-checks and $n.|\Gamma|$ the qubits.\par
To express the boundary maps, we recall that the graph $\mathcal{K}^1_{H^1}$, with $H^1:=\phi^{-1}(H)$, carries all the data needed to construct the $H$-lifted code of $C$, since the new Tanner graph is just a subgraph of the 1-skeleton. We can determine new incidence matrices, $(\delta_{H,i},\partial_{H,i}), i=1,2$, simply from the adjacency of vertices in $\mathcal{K}^1_{H^1}$. For instance, for $i=1$ they are written as follows on basis vectors, $ \delta_{H,1}(x,gH^1)=\sum_{q\in \text{supp}(\delta_1x)} (q,g\nu^T[x,q]H^1)$ and $\partial_{H,1}(q,gH^1)=\sum_{x\in \text{supp}(\partial_1 q)}(x,g\nu^T[q,x]H^1)$, where $g\in \pi_1(\mathcal{K}^1)$. However, since we start from a choice of group $H$, it is more adequate to use another parametrization of the fiber which depends on $H$. For this, we make use of the natural homomorphism $\phi$. \par 
Proposition \ref{Proposition Lift is valid CSS code} ensures that the lifted boundary maps represent valid parity-check matrices for $C_H$, and that $\mathcal{K}_H=\mathcal{K}(C_H)$. \qedhere
\end{proof}

Finally, the total space of $p_H:\mathcal{K}_H \rightarrow \mathcal{K}$ can be reconstructed from the definition of the Tanner cone-complex, with $\mathcal{K}_H=\mathcal{K}(C_H)$. The map $p_H$ then sends a face $\{x',q',z'\}$ of $\mathcal{K}_H$, to $\{p_H(x'),p_H(q'),p_H(z')\}$. This ends the definition of the covering map.

\subsection{Relation to the fiber bundle code}\label{section relation to fiber bundle codes}

We have detailed a method to lift a CSS code $C$ via a connected covering of its Tanner cone-complex. We now explain how this is related to the fiber bundle code construction of \cite{Hastings2020}.

The fiber bundle formulation of a lifted code is closely related to what we have done so far, but we need to restrict to normal covering maps and translate our formula to highlight the similarities. A normal covering map is also a principal $G$-bundle $\mathcal{F}\hookrightarrow\mathcal{E}\to \mathcal{B}$, over a base space $\mathcal{B}$, with total space $\mathcal{E}$, fiber $\mathcal{F}$, a discrete group, on which a group $G\subset \text{Aut}(\mathcal{F})$ acts freely and transitively. The group $G$, called the structure group, is the group of automorphisms of the fiber. In our case, we consider covering maps over the base space $\mathcal{K}:=\mathcal{K}(C)$. The fiber is the finite group $\Gamma:=\pi_1(\mathcal{K})/H$, which is also the structure group. \par

Taking inspiration of principal $G$-bundles, we can form an algebraic analogue, where the base space and fiber are replaced by two different chain complexes. The chain complex of the base space is the complex of the code $C$. The fiber is the group ring $F_0=\mathbb{F}_2[\Gamma]$, which may be extended into the trivial linear code $F:=F_0\xrightarrow{\partial_0^F}0$. The action of $\Gamma$ on the fiber induces a linear action on the vector spaces of $F$, since we have trivially $\gamma \partial_0^F f=\partial_0^F \gamma f$, for any chain $f\in F_0$, and any $\gamma\in \Gamma$.\par

The Tanner cone-complex $\mathcal{K}_H$, and hence the code $C_H$, inherits the original free transitive linear left action of the group of deck transformations of the covering $\operatorname{Deck}(p_H)\cong \Gamma$\footnote{we recall that the group of deck transformations is the group of homeomorphism with elements $d:\mathcal{K}_H\to \mathcal{K}_H$ such that $p_H\circ d=p_H$} and is hence a free $\Gamma$-module.  Its dual complex $C^{\Gamma*}$ enjoys a similar right action of $\operatorname{Deck}(p_H)$. Taking advantage of that, it is possible to define a fiber bundle code \cite{Hastings2020}, identified with the chain complex $C_H=(C_{H,\bullet},\partial^E_\bullet)$, for which the chain spaces are written
\begin{equation}\label{equation fiber bundle code 1}
  C_{H,n}=\oplus_{r+s=n} C_r\otimes F_s.  
\end{equation}
This is not the cellular chain complex\footnote{The cellular chain complex of a cell complex $K$ is the chain complex indexed by dimension of cells, and whose $n$-chains are formal combination of $n$-dimensional cells, and boundary maps defined on a cell by formal sums of its boundary cells.} of $\mathcal{K}_H$, which would have cells indexed by order of dimension. Equation \eqref{equation fiber bundle code 1} is a simple rewriting of a basis element $(q,Hg)$ into $q\otimes Hg$. To express the new boundary map $\partial_H$ we have to specify a chosen connection of the bundle, which is an assignment of a fiber automorphism element  $\varphi:\{(u,v), v\in \operatorname{supp}(\partial_i u) \} \rightarrow\text{Aut} (\mathcal{F})$. Then, it is possible to make the parallel with the code of \cite{Hastings2020} more precise.

\begin{proposition}
    Suppose $H$ is a normal subgroup of $\pi_1(\mathcal{K})$. Then, the $H$-lift of $C$ is identified with the fiber bundle code \cite{Hastings2020} $C_{H}=(C_{H\bullet},\partial_{H,\bullet})$, where $C_{H,n}=\oplus_{r+s=n} C_r\otimes F_s$. A connection is given on a pair $(u,v)\in C_{H,n}$ by $\varphi(u,v) f:=f \nu^T[u,v]$. On basis elements, the boundary maps read
\begin{equation}\label{equation fiber bundle}
\begin{aligned}
    \partial_{H,2}(z\otimes f)=&\sum_{q\in \operatorname{supp} (\partial_2 z)}q\otimes \varphi(z,q) f,\\
    \partial_{H,1} ( q\otimes f)=&\sum_{x\in \operatorname{supp}(\partial_1 q)} x\otimes \varphi(q,x) f.
\end{aligned}
\end{equation}
\end{proposition}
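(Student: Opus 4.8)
The plan is to show that the boundary-map formulas of Proposition~\ref{Proposition boundary maps of lifted code} for $C_H$ become, after a mere relabelling of basis vectors, the formulas~\eqref{equation fiber bundle}, and that the assignment $\varphi$ is a legitimate connection precisely because $H$ is normal. Since the statement claims an \emph{identification} rather than a fresh construction, essentially all the substance has already been established; the task is to line up the notation.

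First I would use the hypothesis $H\trianglelefteq\pi_1(\mathcal{K})$: then $\Gamma=\pi_1(\mathcal{K})/H$ is a group, its underlying set is the set of cosets $gH=Hg$, and $F_0=\mathbb{F}_2[\Gamma]$ is the associated group algebra with basis $\{gH:gH\in\Gamma\}$. Because the fiber complex $F$ is concentrated in degree $0$, the twisted chain space collapses to $C_{H,n}=\bigoplus_{r+s=n}C_r\otimes F_s=C_n\otimes F_0$, and the bijection $(c,gH)\mapsto c\otimes gH$ is exactly the identification of the lifted basis of Proposition~\ref{Proposition boundary maps of lifted code} with a basis of $C_n\otimes F_0$. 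This yields the asserted form of the chain spaces immediately.

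Next I would check that $\varphi$ is a connection, i.e.\ that each $\varphi(u,v)$ lies in the structure group acting on the fiber. The expression $\varphi(u,v)f=f\,\nu^T[u,v]$ in the statement should be read as right multiplication of $f\in F_0$ by the element $\phi(\nu^T[u,v])H\in\Gamma$, extended $\mathbb{F}_2$-linearly; since $\Gamma$ is a group, right translation by this fixed element is a bijection of $\Gamma$, hence a permutation of the basis of $F_0$, hence an element of $\operatorname{Aut}(\mathcal{F})$. These right translations form a copy of $\Gamma\cong\operatorname{Deck}(p_H)$ acting freely and transitively on the fiber, which is exactly the structure group of the principal bundle $p_H$. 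This is the single place where normality is essential: if $H$ is not normal the set of cosets carries no group structure, $F_0$ is not a group algebra, and there is no structure group of this kind, so the fiber-bundle picture is unavailable.

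Finally I would substitute. From $\partial_{H,2}(z,gH)=\sum_{q\in\operatorname{supp}(\partial_2 z)}(q,g\phi(\nu^T[z,q])H)$ of Proposition~\ref{Proposition boundary maps of lifted code}, rewrite the left-hand side as $\partial_{H,2}(z\otimes gH)$ and each summand on the right as $q\otimes\big(gH\cdot\phi(\nu^T[z,q])H\big)=q\otimes\varphi(z,q)(gH)$; extending by linearity in $f\in F_0$ gives the first line of~\eqref{equation fiber bundle}, and the identical manipulation applied to $\partial_{H,1}$ gives the second. No check of $\partial_{H,1}\partial_{H,2}=0$ or of validity as a CSS code is needed, as this is supplied by Proposition~\ref{Proposition Lift is valid CSS code}. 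I do not expect a genuine obstacle: the only point requiring care is keeping the natural homomorphism $\phi$ and the passage to the quotient straight inside the abbreviated notation $f\,\nu^T[u,v]$, and recognising that normality is precisely what converts the coset relabelling of Proposition~\ref{Proposition boundary maps of lifted code} into a bona fide connection.
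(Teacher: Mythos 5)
Your proposal is correct and follows essentially the same route as the paper, which simply observes that Equation \eqref{equation fiber bundle} is an instance of the general fiber bundle boundary formula of \cite{Hastings2020} and a rewriting of the maps in Equations \eqref{equation parity-checks of lift 1} and \eqref{equation parity-checks of lift 2}. Your additional remarks — spelling out the identification $(c,gH)\mapsto c\otimes gH$, and noting that normality of $H$ is exactly what makes right translation by $\phi(\nu^T[u,v])H$ a well-defined fiber automorphism — are accurate elaborations of the same argument rather than a different approach.
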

\begin{proof}
    Equation \eqref{equation fiber bundle} is a particular case of Equation 6 of \cite{Hastings2020}. They are merely a rewriting of  the boundary maps given in Equations \eqref{equation parity-checks of lift 1} and \eqref{equation parity-checks of lift 2}.
\end{proof}

\subsection{Relation to balanced product codes}\label{section relation to balanced product codes}
In this section, we show how the lift is related to a certain balanced product operation, drawing a parallel with the construction in \cite{Breuckmann2021}. This is inspired by the algebraic approach to local systems of coefficients \cite{Davis2001LectureNI,HatcherTopo}, which already found applications in linear coding \cite{meshulam2018}. Usually, one defines a local coefficient system over the cellular chain complex of a space $K$, with the help of a covering map. Here, we slightly revisit the procedure to define an analogue over an input code $C:=C_2\xrightarrow{\partial_2}C_1\xrightarrow{\partial_1}C_0$. Contrary to Section  \ref{section relation to fiber bundle codes}, this formulation is not restricted to the case of normal covering maps. In this section, we denote $\pi_1:=\pi_1(\mathcal{K})$.\par

The Tanner cone-complex $\mathcal{K}:=\mathcal K (C)$ is a path connected space and has a universal cover, $p:\widetilde{\mathcal{K}}\to \mathcal{K}$ which coincides with the quotient map $\widetilde{\mathcal{K}} \to\pi_1 \setminus\widetilde{\mathcal{K}}$, where $\pi_1$ acts on the left by deck transformation. This complex is related to a specific code.

\begin{definition}
Consider the CSS code $C:=\:\mathbb F_2Z\xrightarrow{\partial_2}\mathbb F_2Q\xrightarrow{\partial_1}\mathbb F_2X$, with $\mathcal{K}:=\mathcal K(C)$ having universal cover $p:\widetilde{\mathcal{K}} \to\pi_1 \setminus\widetilde{\mathcal{K}}$. We call the \textit{universal lift} of $C$, the chain complex $\widetilde C:=\:\mathbb F_2\widetilde Z\xrightarrow{\partial_2}\mathbb F_2\widetilde Q\xrightarrow{\partial_1}\mathbb F_2\widetilde X$, such that $\widetilde{\mathcal{K}}=\mathcal{K}(\widetilde{C})$, $\widetilde X=p^{-1}(X)$, $\widetilde Z=p^{-1}(Z)$ and $\widetilde Q=p^{-1}(Q)$.
\end{definition}
Notice that we are careful not to call $\widetilde C$ a code, unless each of its spaces has finite dimension. For the Toric code, the universal lift is the cellular chain complex of a cellulation of $\mathbb R^2$ by finite-size squares.
   
\begin{proposition}\label{proposition universal lift of code as module}
    For a CSS code $C$, its universal lift $\widetilde{C} $ is a chain complex of left $\mathbb{F}_2[\pi_1]$-modules.
\end{proposition}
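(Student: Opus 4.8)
The plan is to exhibit an explicit free action of $\pi_1 := \pi_1(\mathcal{K})$ on each chain space $\mathbb{F}_2\widetilde{Z}$, $\mathbb{F}_2\widetilde{Q}$, $\mathbb{F}_2\widetilde{X}$ of the universal lift, and then verify that the boundary maps $\partial_1,\partial_2$ commute with it. First I would recall that, by the discussion in Section~\ref{section Lift of a CSS code: explicit construction} (or directly from Proposition~\ref{Proposition Lift is valid CSS code} applied to the trivial subgroup $H = \{1\}$), the universal cover $p:\widetilde{\mathcal{K}}\to\mathcal{K}$ is a \emph{normal} covering, so that $\operatorname{Deck}(p)\cong\pi_1/\{1\}\cong\pi_1$ acts on the left, freely and transitively on each fibre. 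Since $\widetilde X = p^{-1}(X)$, $\widetilde Q = p^{-1}(Q)$, $\widetilde Z = p^{-1}(Z)$ are unions of such fibres (one fibre over each basis cell of $C$), this left action of $\pi_1$ permutes the basis cells of each of $\widetilde X,\widetilde Q,\widetilde Z$, and extends $\mathbb{F}_2$-linearly to an action on $\mathbb{F}_2\widetilde X$, $\mathbb{F}_2\widetilde Q$, $\mathbb{F}_2\widetilde Z$. Because the action on basis cells is free (deck transformations of the universal cover have no fixed points on the $0$-skeleton, the vertices being exactly the lifted checks and qubits), each chain space becomes a free $\mathbb{F}_2[\pi_1]$-module, with rank equal to the number of $\pi_1$-orbits, i.e.\ $m_Z$, $n$, and $m_X$ respectively.

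Next I would check module-homomorphism compatibility of $\partial_1$ and $\partial_2$. The key point is that a deck transformation $d$ of $\widetilde{\mathcal{K}}$ is a cellular automorphism: it sends a face $\{x',q',z'\}$ to another face $\{dx',dq',dz'\}$, hence it sends an edge $[q',x']$ of the lifted Tanner graph to the edge $[dq',dx']$, and likewise for $[q',z']$. Equivalently, using the explicit boundary maps of Proposition~\ref{Proposition boundary maps of lifted code} with $H=\{1\}$, $\partial_1(q',g) = \sum_{x\in\operatorname{supp}(\partial_1 q)}(x, g\,\phi(\nu^T[q,x]))$ and the left action of $\gamma\in\pi_1$ is $\gamma\cdot(q',g) = (q', \gamma g)$; since the voltage factor $\phi(\nu^T[q,x])$ is multiplied \emph{on the right} of the group element while the action multiplies \emph{on the left}, the two operations commute, giving $\partial_1(\gamma\cdot(q',g)) = \gamma\cdot\partial_1(q',g)$, and similarly for $\partial_2$. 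Extending by linearity, $\partial_1$ and $\partial_2$ are $\mathbb{F}_2[\pi_1]$-module homomorphisms, so $\widetilde C$ is a chain complex in the category of left $\mathbb{F}_2[\pi_1]$-modules, as claimed.

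The main obstacle, though it is mostly bookkeeping, is making the identification between the geometric left action of $\operatorname{Deck}(p)$ on the Tanner cone-complex and the algebraic left-multiplication action on the group-ring coordinates precise: one must be careful about the left/right conventions (the paper uses right-derived graphs and right cosets, and the remark after Definition~\ref{definition lift of graph} notes the interplay of left and right actions) and about the fact that the universal cover generally has infinite-dimensional chain spaces, so "free module" must be read in the appropriate sense — a direct sum (not product) of copies of $\mathbb{F}_2[\pi_1]$, which is exactly what the orbit decomposition of the basis cells gives. I would also note explicitly that $\pi_1$ need not be finite, so $\mathbb{F}_2[\pi_1]$ is an infinite-dimensional algebra and $\widetilde C$ is not a code in the sense of Definition~\ref{Def_CSS stabilizer code}, consistent with the caveat stated just before the proposition.
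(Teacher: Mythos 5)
Your proof is correct and follows essentially the same route as the paper: the left action of $\operatorname{Deck}(p)\cong\pi_1$ on the universal cover permutes the lifted cells, making each chain space a (free) $\mathbb{F}_2[\pi_1]$-module, and the boundary maps are equivariant because deck transformations are cellular automorphisms. The only difference is cosmetic: the paper checks equivariance by factoring the code's boundary maps through the simplicial boundary and coboundary maps of $\widetilde{\mathcal{K}}$ (writing $\tilde\partial_2 z=(\tilde d_1\tilde d^0 z)|_Q$, each factor being a module homomorphism), whereas you argue it directly from adjacency preservation under deck transformations and from the commutation of the left deck action with the right voltage multiplication --- both are valid.
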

\begin{proof}
From the theory of covering map, the free transitive left action of $\pi_1$ on $\widetilde{\mathcal{K}}$ induces a similar action of $\pi_1$ on the $\mathbb F_2$-cellular chain complex of $\widetilde{\mathcal{K}}$ (seen as a 2D simplicial complex), $\widetilde S:=S(\widetilde{ \mathcal K},\mathbb F_2)$, endowing it with the structure of a  module over the group algebra $\mathbb{F}_2[\pi_1]$. The boundary maps of $\widetilde S$, noted $\widetilde{d}_{0\leq i\leq 2}$, and its coboundary maps $\widetilde{d}^{0\leq i\leq 2}$, become $\mathbb{F}_2[\pi_1]$-module homomorphisms. This action of $\pi_1$ on $\widetilde S$, in turn, induces an action of $\pi_1$ on each vector space $\widetilde{C}_{0\leq i\leq 2}$, defined by restricting the action of $\pi_1$ to the vertices of $\widetilde S_0$, interpreted as abstract cells of $\widetilde{C}$. This makes each space $\widetilde{C}_{0\leq i\leq 2}$ into a $\mathbb{F}_2[\pi_1]$-module. The boundary maps $\widetilde{\partial}_i: \widetilde{C}_i\to \widetilde{C}_{i-1}$ are moreover $\mathbb{F}_2[\pi_1]$-module homomorphisms. Indeed, they can be constructed from the boundary maps of $\widetilde S$ in the following way. For $r\in \mathbb{F}_2[\pi_1]$ and $z\in \widetilde Z$, we can interpret the cell $rz$ both as a cell of $\widetilde S$ or one of $\widetilde C_2$. Then the map $\tilde \partial_2$ satisfies $\tilde \partial_2 (r z)=(\tilde d_1 \tilde d^0  r z) |_Q= (r \tilde d_1 \tilde d^0 z)|_Q$ where $|_Q$ is the restriction map to the vertices representing cells (qubits) of $\widetilde C_1$, which is also a $\mathbb{F}_2[\pi_1]$-module homomorphism. Therefore $\tilde \partial_2 (r z)= r\tilde \partial_2  z$, making it a module homomorphism. The boundary map defined on basis vector by $\tilde \partial_1 q=(\tilde d_1 \tilde d^0 q) |_X $ has the same property. The coboundary maps of $\widetilde C$ can be obtained similarly.\qedhere
\end{proof}

Let $A$ be a discrete Abelian group over which the group $\pi_1$ acts on the left via a homomorphism $\rho:\pi_1\to \operatorname{Aut}(A)$. This endows $A$ with the structure of a left $\mathbb{Z}[\pi_1]$-module. Applying the tensor product with $ \mathbb F_2$, seen as a $\mathbb Z$-module, yield a $\mathbb{F}_2[\pi_1]$-module, noted $A_\rho$. For a cell complex $K$, a local coefficient system on $S:= S( K,\mathbb F_2)$ is defined as a complex of modules and morphisms $\widetilde S \otimes_{\mathbb{F}_2[\pi_1]} A_\rho:=(\widetilde{S}_\bullet\otimes_{\mathbb{F}_2[\pi_1]} A_\rho, \widetilde{d}_\bullet \otimes \mathbb{1})$. The tensor product $\otimes_{\mathbb{F}_2[\pi_1]}$ of a left and a right $\mathbb{F}_2[\pi_1]$-module has the effect of operating a change of ring. We now replace $K$ with $\mathcal K (C)$, and consider a similar construction for the lifts of $C$ giving them a balanced product formulation.

\begin{proposition}
 Let $C$ be a CSS code with Tanner cone-complex $\mathcal{K}$, $H$ be a subgroup of $\pi_1$ and $\Gamma=\pi_1/H$, the left cosets. Then we have an isomorphism of chain complexes,
\begin{equation}\label{equation local coeff}
C_{H} \cong \widetilde{C}\otimes_{\mathbb{F}_2[\pi_1]} \mathbb{F}_2[\Gamma].
\end{equation}
\end{proposition}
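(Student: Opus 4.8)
The plan is to exhibit the isomorphism degreewise on the two sides' natural bases and then check that it commutes with the differentials, exploiting the fact that $\mathcal{K}_H$ is the quotient of the universal cover $\widetilde{\mathcal{K}}$ by the restricted action of $H$, and that $-\otimes_{\mathbb F_2[\pi_1]}\mathbb F_2[\Gamma]$ implements precisely the change of ring $\mathbb F_2[\pi_1]\to\mathbb F_2[\Gamma]$, i.e. the substitution $g\rightsquigarrow gH$ in the voltage formulas of Proposition~\ref{Proposition boundary maps of lifted code}.

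First, by the Galois correspondence (Theorem~\ref{Theorem Galois correspond}) the cover $p_H\colon\mathcal{K}_H\to\mathcal{K}$ associated to $H\leq\pi_1$ is the cover $H\backslash\widetilde{\mathcal{K}}\to\mathcal{K}$, where $H$ acts by restriction of the free left deck action of $\pi_1$ on $\widetilde{\mathcal{K}}$; hence a cell of $\mathcal{K}_H$ lying over a cell $\sigma$ of $\mathcal{K}$ is an $H$-orbit of the fiber $p^{-1}(\sigma)\subset\widetilde{\mathcal{K}}$. Since $\pi_1$ acts freely and transitively on each fiber, fixing a lift $\widetilde\sigma$ of each $\sigma$ — concretely the vertices $(\sigma,1)$ of $\widetilde{\mathcal{K}}^1=D(\mathcal{K}^1,\nu^T,\{1\})$, together with the faces above them — identifies the cell sets $\widetilde X,\widetilde Q,\widetilde Z$ with free $\pi_1$-sets over $X,Q,Z$. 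By Proposition~\ref{proposition universal lift of code as module} each space $\widetilde C_i\in\{\mathbb F_2\widetilde X,\mathbb F_2\widetilde Q,\mathbb F_2\widetilde Z\}$ is then a free $\mathbb F_2[\pi_1]$-module on that basis, $\widetilde C_i\cong C_i\otimes_{\mathbb F_2}\mathbb F_2[\pi_1]$; applying $-\otimes_{\mathbb F_2[\pi_1]}\mathbb F_2[\Gamma]$ produces $C_i\otimes_{\mathbb F_2}\mathbb F_2[\Gamma]$, whose basis $\{\sigma\otimes gH\}$ matches cell-for-cell the basis $\{(\sigma,gH)\}$ of $C_{H,i}$ read off from the derived graph $D(\mathcal{K}^1,\nu^T,H)$ (with its faces glued on as in Section~\ref{section Lift of a CSS code: explicit construction}). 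This gives the $\mathbb F_2$-linear isomorphism in each degree.

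Next I would verify that it is a chain map. Under the above identification the differential of $\widetilde C\otimes_{\mathbb F_2[\pi_1]}\mathbb F_2[\Gamma]$ is $\widetilde\partial_i\otimes\mathbb{1}$, and by Proposition~\ref{proposition universal lift of code as module} the map $\widetilde\partial_i$ is the $\mathbb F_2[\pi_1]$-linear extension of $\widetilde\partial_2(z,1)=\sum_{q\in\operatorname{supp}(\partial_2 z)}(q,\phi(\nu^T[z,q]))$ and $\widetilde\partial_1(q,1)=\sum_{x\in\operatorname{supp}(\partial_1 q)}(x,\phi(\nu^T[q,x]))$ — these being simply the incidences of the chosen lifts in $D(\mathcal{K}^1,\nu^T,\{1\})$ pushed through the natural homomorphism $\phi$, which is exactly the $H=\{1\}$ instance of Proposition~\ref{Proposition boundary maps of lifted code}. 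Extending by linearity and tensoring down replaces $g$ by $gH$, so $(\widetilde\partial_i\otimes\mathbb{1})(\sigma\otimes gH)=\sum(\,\cdot\,,g\phi(\nu^T[\,\cdot\,])H)$, which is term-by-term the differential $\partial_{H,i}$ (and likewise for $\delta_{H,i}$) of Proposition~\ref{Proposition boundary maps of lifted code}. Hence the degreewise isomorphisms assemble into an isomorphism of chain complexes.

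The only genuinely delicate point — and the one I expect to cost the most care to state cleanly — is the left/right-module bookkeeping that the notation $\widetilde C\otimes_{\mathbb F_2[\pi_1]}\mathbb F_2[\Gamma]$ suppresses: $\widetilde C$ carries the left deck action of $\pi_1$, so to form this balanced product one must either pass to the associated right action via $g\mapsto g^{-1}$ (the Remark of Section~\ref{section geometrical lifts of linear code}) or, when $H$ is not normal, read $\Gamma$ as $H\backslash\pi_1$ rather than $\pi_1/H$; with either convention the operation is the change of ring $\mathbb F_2[\pi_1]\to\mathbb F_2[\Gamma]$ and the argument above applies verbatim. One should also fix the auxiliary data — the spanning tree $T$ and the lifts $\widetilde\sigma$ — once on both sides; a different choice alters the presentation but not the isomorphism class, which is all the proposition asserts.
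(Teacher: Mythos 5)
Your proof is correct, and it reaches the isomorphism by a noticeably more hands-on route than the paper. The paper's proof works through the ambient simplicial chain complex $\widetilde S=S(\widetilde{\mathcal K},\mathbb F_2)$: it cites the general local-coefficient-system isomorphism $\widetilde S\otimes_{\mathbb F_2[\pi_1]}A_\rho\cong S(\mathcal K_H,\mathbb F_2)$ from Hatcher (Ch.\ 3.H) with $A=\mathbb F_2[\Gamma]$, then identifies each $\widetilde C_i$ as the $\pi_1$-stable subspace of $\widetilde S_0$ spanned by fibers of cells of a fixed type, and obtains the boundary maps by restricting $\widetilde d_1\widetilde d^0$ and tensoring with the identity. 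You instead bypass $\widetilde S$ entirely: you use that the deck action of $\pi_1$ on the universal cover is free and transitive on fibers to see $\widetilde C_i$ as a free $\mathbb F_2[\pi_1]$-module on chosen lifts, so that $-\otimes_{\mathbb F_2[\pi_1]}\mathbb F_2[\Gamma]$ is literally the substitution $g\rightsquigarrow gH$ in the voltage formulas, and then you match the resulting differentials term-by-term against Proposition \ref{Proposition boundary maps of lifted code} (of which the universal lift is the $H=\{1\}$ instance). What your version buys is self-containment and an explicit basis-level description of the isomorphism; what the paper's version buys is independence from the particular spanning tree and voltage presentation, since it rests on a standard topological fact. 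You correctly isolate the one genuine subtlety — the left/right module conversion (equivalently $H\backslash\pi_1$ versus $\pi_1/H$) needed to make the balanced product well-defined — which the paper handles the same way, via the action $\sigma.g:=g^{-1}\sigma$. The only blemish is notational: $D(\mathcal K^1,\nu^T,\{1\})$ should be understood with $\{1\}\leq\pi_1(\mathcal K)$ and fiber parametrized through $\phi$, not with the trivial subgroup of $\pi_1(\mathcal K^1)$ (which would give the universal cover of the $1$-skeleton, a tree); your subsequent formulas make clear you intend the former, so this does not affect the argument.
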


\begin{proof}
We keep the notation as in the proof of Proposition \ref{proposition universal lift of code as module}. \par
In the description of local systems, we take $A=\mathbb{F}_2[\Gamma]$. This is an Abelian group, although $H$ is not necessarily a normal subgroup (in that case it corresponds to formal sums of coset elements). The ring $A$ can be endowed with the structure of a left $\mathbb{F}_2[\pi_1]$-module, with the natural action of the fundamental group on the coset space. From the Galois correspondence $H$ is associated to a covering space $p: \mathcal{K}_H\rightarrow \mathcal{K}$ and $H=p_\star\pi_1(\mathcal{K}_H)$. For the present choice of $A$, it can be shown that $\widetilde S \otimes_{\mathbb{F}_2[\pi_1]} A_\rho\cong S(\mathcal K_H,\mathbb F_2)$ (see \cite{HatcherTopo}, Chapter 3.H). \par
Moreover, the left modules $\widetilde{C}_{0\leq i\leq 2}$ can be seen as right modules with the action $\sigma . g:=g^{-1} \sigma $ on abstract chains $\sigma$. Therefore, the space $\widetilde{C}_i\otimes_{\mathbb{F}_2[\pi_1]} A_\rho$ is well-defined and can be both seen as a $\mathbb{F}_2[\pi_1]$-module and a $\mathbb{F}_2[\Gamma]$-module. \par
Each module of $\widetilde S \otimes_{\mathbb{F}_2[\pi_1]} A_\rho$, and in particular the $S_0(\mathcal K_H,\mathbb F_2)$ has a direct sum decomposition in terms of fibers $ p^{-1}(v)$ of the covering map. As shown in the proof of Proposition \ref{proposition universal lift of code as module}, $\widetilde C_i$ is constructed from $\widetilde S_0$ by interpreting the cells of $\widetilde S_0$ as abstract cells of $\widetilde C_i$, i.e. each fiber $ p^{-1}(v)$, where $v\in C_i$ is given a unique type ($Z$, $Q$ or $X$), depending on the type of $v$.  Therefore, we can also reconstruct each space $\widetilde{C}_i\otimes_{\mathbb{F}_2[\pi_1]} A_\rho$ by the span of the cells in $ p^{-1}(v)$, for $v$ interpreted as cells of $C_i$, making it a subspace of $\widetilde S_0 \otimes_{\mathbb{F}_2[\pi_1]} A_\rho$ stable under the action of $\mathbb{F}_2[\pi_1]$. Then we form the chain complex $(\widetilde{C}_\bullet\otimes_{\mathbb{F}_2[\pi_1]} A_\rho, \widetilde{\partial}_\bullet \otimes \mathbb{1})$ by defining the boundary maps as in the proof of Proposition \ref{proposition universal lift of code as module}, i.e.  $\tilde \partial_2 z=(\tilde d_1 \tilde d^0 z )|_Q $ and $\tilde \partial_1 q=(\tilde d_1 \tilde d^0 q) |_X $, which are also $\mathbb{F}_2[\pi_1]$-module homomorphism. Therefore, tensoring $\tilde \partial_i$ it with the identity map on $A_\rho$ yields a unique $\mathbb{F}_2[\Gamma]$-module homomorphism.\qedhere
\end{proof}
The cochain groups and maps can be defined by considering the dual spaces $\operatorname{Hom}_{\mathbb{F}_2[\pi_1]}(\widetilde{C}_i, A_\rho)$, the set of module homomorphisms from $\widetilde{C}$ to $A$.\par

\begin{remark}
Because the cellular chain complex $C_{H}$ is a right $\mathbb{F}_2[\Gamma]$-module, there is also a trivial balanced product relation, $C_{H}\otimes_{\mathbb{F}_2[\Gamma]} \mathbb{F}_2= C $, corresponding to the space of coinvariants under the action of $\Gamma$, see \cite{Breuckmann2021} Section IV.C.
\end{remark}

\section{Application to product codes}\label{section Applications}

In this section, we analyze our construction for specific instances of input quantum CSS codes: hypergraph product codes. Then we present how the lift can be used as a first step to generate balanced product of quantum CSS codes.

\subsection{Classification of lifts of HPC}\label{section classification of lifts of HPC}

Let $C$ be an HPC constructed from two classical codes $C_1$ and $C_2$. These codes can be represented by their Tanner graphs $\mathcal{T}_1$ and $\mathcal{T}_2$, which have free groups for fundamental group. Our first objective is to classify and analyze the codes that can obtained by lifts of HPC. The two main results of this section are Proposition \ref{theorem classification product} and \ref{Proposition Lift equivalence PK}. Proofs which do not appear in the present section are gathered in Appendix \ref{section proof of lift of HPC}. Here, the lower indices in $C_i$ do not point to the degree in the chain complex, but to a linear code labeled by $i$. \par
Given a Cartesian product of two groups $G=G_1\times G_2$, Goursat quintuples are sets of the form $\{G_{11}, G_{12}, G_{21}, G_{22}, \psi\}$, where $G_{i2}\trianglelefteq G_{i1}\leq G_i$ for $i=1,2$, and $\psi:G_{11}/G_{12}\rightarrow G_{21}/G_{22}$ is an isomorphism. They serve to classify subgroups of $G$, as described later.

\begin{proposition}\label{theorem classification product}
 Let $C=C_1\otimes C_2^*$ be an HPC with $C_1$ and $C_2$ linear codes represented by their Tanner graphs $\mathcal{T}_1$ and $\mathcal{T}_2$. Lifts of $C$ can be classified by the set of Goursat quintuples $\{G_{11}, G_{12}, G_{21}, G_{22}, \psi\}$ corresponding to finite index subgroups of $\pi_1(\mathcal{T}_1)\times \pi_1(\mathcal{T}_2)$, where $G_{12}\trianglelefteq G_{11}\leq \pi_1(\mathcal{T}_1)$ , $G_{22}\trianglelefteq G_{21}\leq \pi_1(\mathcal{T}_2)$, and $\psi:G_{11}/G_{12}\rightarrow G_{21}/G_{22}$ is an isomorphism.
\end{proposition}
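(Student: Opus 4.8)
The key is to combine the Galois correspondence (Theorem \ref{Theorem Galois correspond}) with a computation of the fundamental group of the Tanner cone-complex of an HPC, and then apply the classical Goursat lemma. The plan is as follows. First I would show that, for $C = C_1 \otimes C_2^*$, the fundamental group of the Tanner cone-complex $\mathcal{K}(C)$ is isomorphic to the direct product $\pi_1(\mathcal{T}_1) \times \pi_1(\mathcal{T}_2)$. This is the heart of the argument. The HPC chain complex is the homological product of the two length-2 complexes $C_1 = (C_{1,1} \to C_{1,0})$ and $C_2^* = (C_2^1 \leftarrow C_2^0)$, so its qubits and checks are indexed by products of cells of $\mathcal{T}_1$ and $\mathcal{T}_2$. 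I would argue, using Proposition \ref{Proposition Tanner cone-complex}, that $\mathcal{K}(C)$ is homotopy equivalent to (indeed built as) a product-like complex whose structure mirrors $\mathcal{T}_1 \times \mathcal{T}_2$: the cones attached over the induced subgraphs $\mathcal{T}_z$ (for $z \in Z$) and $\mathcal{T}_x$ (for $x \in X$) precisely fill in the "square" cells that a direct product of the two graphs would require, so that $\mathcal{K}(C)$ deformation retracts onto, or is homotopy equivalent to, $\mathcal{T}_1 \times \mathcal{T}_2$. By standard properties of the fundamental group of a product space, $\pi_1(\mathcal{T}_1 \times \mathcal{T}_2) \cong \pi_1(\mathcal{T}_1) \times \pi_1(\mathcal{T}_2)$.

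Granting this identification, the rest is bookkeeping. By Theorem \ref{Theorem Galois correspond} together with Proposition \ref{Proposition Lift is valid CSS code} and Definition \ref{definition lift of quantum CSS code}, connected lifts of $C$ are in bijection with (conjugacy classes of, or basepoint-preserving classes of) subgroups of $\pi_1(\mathcal{K}(C)) \cong \pi_1(\mathcal{T}_1) \times \pi_1(\mathcal{T}_2)$, and the finite-index subgroups are exactly those giving codes with finitely many qubits (the degree is the index, by Proposition \ref{Proposition Lift properties}). So I would then invoke the Goursat lemma: subgroups of a direct product $G_1 \times G_2$ are in bijection with quintuples $\{G_{11}, G_{12}, G_{21}, G_{22}, \psi\}$ with $G_{i2} \trianglelefteq G_{i1} \leq G_i$ and $\psi : G_{11}/G_{12} \xrightarrow{\sim} G_{21}/G_{22}$ an isomorphism, the subgroup being the fiber product $\{(g_1, g_2) : g_1 G_{12}, g_2 G_{22} \text{ correspond under } \psi\}$. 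Finally I would check that the finite-index condition on the subgroup of $G_1 \times G_2$ translates into the stated finite-index conditions on the components (e.g. finite index of the associated subgroup forces $G_{11}$ to have finite index in $\pi_1(\mathcal{T}_1)$, $G_{22}$ to have finite index in $G_{21}$, and symmetrically), which is a short index-multiplicativity computation.

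The main obstacle is the first step: establishing rigorously that $\pi_1(\mathcal{K}(C)) \cong \pi_1(\mathcal{T}_1) \times \pi_1(\mathcal{T}_2)$. One has to be careful because $\mathcal{K}(C)$ is not literally the product cell complex $\mathcal{T}_1 \times \mathcal{T}_2$ — the latter would have square 2-cells whereas $\mathcal{K}(C)$ has triangular faces and extra $X$–$Z$ edges, and moreover $\mathcal{T}_i$ being bipartite means cells of $\mathcal{T}_1 \times \mathcal{T}_2$ come in several types (bit–bit, bit–check, check–check) that must be matched to the qubits, $X$-checks and $Z$-checks of the HPC. I would handle this either by exhibiting an explicit homotopy equivalence $\mathcal{K}(C) \simeq \mathcal{T}_1 \times \mathcal{T}_2$ — using that each cone $\mathrm{C}\mathcal{T}_z$ and $\mathrm{C}\mathcal{T}_x$ is contractible, so collapsing them recovers a complex with the product structure — or by a van Kampen / Euler-characteristic-plus-simply-connected-cover argument: build the universal cover of $\mathcal{K}(C)$ directly as the corresponding product of trees $\widetilde{\mathcal{T}_1} \times \widetilde{\mathcal{T}_2}$ (each $\widetilde{\mathcal{T}_i}$ being the universal cover of a graph, hence a tree, hence contractible, so the product is contractible and simply connected), and observe that $\pi_1(\mathcal{T}_1) \times \pi_1(\mathcal{T}_2)$ acts freely on it with quotient $\mathcal{K}(C)$, whence the claim. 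The product-of-trees description is likely the cleanest route and also explains why the underlying structure "is similar to a left-right square complex," as remarked in the introduction. Everything after this identification is routine application of Goursat's lemma and index arithmetic.
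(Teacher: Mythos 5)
Your proposal is correct and follows essentially the same route as the paper: identify $\pi_1(\mathcal{K}(C))$ with $\pi_1(\mathcal{T}_1)\times\pi_1(\mathcal{T}_2)$, then apply the Galois correspondence and Goursat's lemma. The paper's Lemma \ref{lemma fundamental group of product} disposes of the step you flag as the main obstacle in one line — $\mathcal{K}(C)$ is literally the product complex $\mathcal{T}_1\times\mathcal{T}_2$ with each square face subdivided into two triangles by the $X$--$Z$ diagonal, hence homeomorphic to it — so your more elaborate collapsing or product-of-trees arguments, while valid, are not needed.
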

The consequence of this proposition is the following. Given an HPC $C$ as above, with $\mathcal{K}:=\mathcal{K}(C)$, and a covering map $p:\mathcal{K}_H\to \mathcal K$, associated to a subgroup $H\leq \pi_1(\mathcal{K})$, suppose $H$ cannot be decomposed as a Cartesian product of subgroups of the factors $\pi_1(\mathcal{T}_1) $ and $ \pi_1(\mathcal{T}_2)$. Then it is clear that $\mathcal{K}_H$ cannot either be decomposed as a Cartesian product of two spaces, since $\pi_1(\mathcal T_i)$ is a free group. We can associate a certain lifted code to each of these non-product subgroups. Although this result was essentially obtained in \cite{Panteleev2021}, we show how to do this lift in the language of covering maps. This section hence establishes that our general notion of lift, Definition \ref{definition lift of quantum CSS code}, agrees with the lifted product construction. All statements of this section related to fundamental group, covering spaces and group of deck transformations can be found in \cite{HatcherTopo}.\par

We first clarify how the Tanner cone complex of $C$ is related to the product of the Tanner graphs $\mathcal{T}_1$ and $\mathcal{T}_2$.

\begin{lemma}\label{lemma fundamental group of product}
    The Tanner cone-complex $\mathcal{K}$ of $C=C_1\otimes C_2^* $ has $\pi_1 (\mathcal{K})\cong\pi_1(\mathcal{T}_1)\times\pi_1(\mathcal{T}_2)$.
\end{lemma}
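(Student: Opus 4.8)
The plan is to identify the Tanner cone-complex $\mathcal{K}$ of the HPC $C=C_1\otimes C_2^*$ with the product cell complex $\mathcal{T}_1\times\mathcal{T}_2$ (up to homotopy equivalence), and then invoke the standard fact that $\pi_1(A\times B)\cong\pi_1(A)\times\pi_1(B)$ for well-behaved spaces. First I would unwind the definitions: the chain complex of $C$ has qubits $Q=(C_1)_1\otimes (C_2)^1\ \oplus\ (C_1)_0\otimes (C_2)^0$, $Z$-checks $(C_1)_1\otimes(C_2)^0$, and $X$-checks $(C_1)_0\otimes(C_2)^1$. Writing $V_i$ (resp. $W_i$) for the bit/check vertex sets of $\mathcal{T}_1$ (resp. $\mathcal{T}_2$), the vertex set of $\mathcal{T}_1\times\mathcal{T}_2$ is exactly $(V_1\cup V_0^{(1)})\times(W_1\cup W_0^{(2)})$, which matches the union $Z\sqcup Q\sqcup X$ once one checks the bookkeeping of which factor-pair lands in which of the three types. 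The key observation is that a vertex $(a,b)$ of the product is a $Z$-check iff both $a,b$ are check vertices, an $X$-check iff both are bit vertices, and a qubit iff exactly one of each — precisely the HPC assignment.

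Next I would match the cells. Edges of $\mathcal{T}_1\times\mathcal{T}_2$ are of the form $(e,w)$ with $e$ an edge of $\mathcal{T}_1$ and $w$ a vertex of $\mathcal{T}_2$, or $(v,f)$ symmetrically; squares are $(e,f)$ with $e,f$ edges. I would verify that under the identification above, the edges of the product connecting a qubit to an $X$- or $Z$-check are exactly the edges of the Tanner graph $\mathcal{T}(C)$ determined by the HPC boundary maps $\partial_1,\partial_2$ (this is the classical hypergraph-product incidence computation), and that the edges connecting an $X$-check $(a_0,b_1)$ to a $Z$-check $(a_1,b_0)$ — which exist in $\mathcal{K}$ by Definition \ref{definition Tanner cone-complex 1} whenever their supports overlap — correspond to the diagonals of the squares $(e,f)$ of the product with $e=[a_0,a_1]$, $f=[b_0,b_1]$. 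Since $\mathcal{K}$ is the clique complex of its $1$-skeleton (Remark after Proposition \ref{Proposition Tanner cone-complex}), triangulating each square of $\mathcal{T}_1\times\mathcal{T}_2$ by one such diagonal realizes $\mathcal{K}$ as a simplicial subdivision of $\mathcal{T}_1\times\mathcal{T}_2$; in particular the two spaces are homeomorphic, hence have isomorphic fundamental groups. Alternatively, and perhaps more cleanly, I would use the cone description (Proposition \ref{Proposition Tanner cone-complex}): $\mathcal{K}(C)\cong\mathcal{T}(C_X)\cup_{z}\operatorname{C}\mathcal{T}_z$, and note that $\mathcal{T}(C_X)$ together with these cones is exactly the mapping-cylinder/product model for $\mathcal{T}_1\times\mathcal{T}_2$, since each induced subgraph $\mathcal{T}_z$ is (a copy of) a neighborhood product and coning it off rebuilds a product cell.

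Finally, with $\mathcal{K}\simeq\mathcal{T}_1\times\mathcal{T}_2$ established, I would conclude $\pi_1(\mathcal{K})\cong\pi_1(\mathcal{T}_1)\times\pi_1(\mathcal{T}_2)$. I expect the main obstacle to be the combinatorial identification of cells — getting the three-way type assignment and all incidences to line up on the nose between the abstract HPC chain complex and the geometric product, especially handling the degenerate cases where $C_1$ or $C_2$ has trivial blocks (empty check or bit sets), which can make some factors of the product collapse. The fundamental-group step itself is then immediate from the product formula for $\pi_1$ of well-behaved spaces, cited from \cite{HatcherTopo}.
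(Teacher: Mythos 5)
Your proposal takes essentially the same route as the paper, whose entire proof is the observation that $\mathcal{K}(C)$ is obtained from the product complex $\mathcal{T}_1\times\mathcal{T}_2$ by cutting each square face into two triangles along the diagonal joining its $X$- and $Z$-corners, so the two spaces are homeomorphic and $\pi_1$ multiplies. (One minor bookkeeping slip: by your own identification $Z=(C_1)_1\otimes(C_2)^0$, the qubits are the bit--bit and check--check pairs and the checks are the mixed pairs, the opposite of what your ``key observation'' sentence asserts; this relabeling has no effect on the fundamental-group computation.)
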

\begin{proof}
  The Tanner cone-complex of $C$  is obtained from the 2D complex $P:=\mathcal{T}_1\times \mathcal{T}_2$ by dividing every square faces into 2 triangular cells by an edge connecting the $X$ and $Z$ vertices at the corner of the faces.
\end{proof}
The last theorem that we need to state before proving Proposition \ref{theorem classification product} is Goursat Lemma \cite{GOURSAT1889,Anderson2009}.

\begin{lemma}[Goursat]\label{Theorem Goursat}
Let $G_1$ and $G_2$ be groups. There is a one-to-one correspondence between subgroups $\{H\}$ of $G_1\times G_2$ and quintuples  $\{G_{11}, G_{12}, G_{21}, G_{22}, \psi\}$, where each $G_{i2}\trianglelefteq G_{i1}\leq G_i$, and $\psi:G_{11}/G_{12}\rightarrow G_{21}/G_{22}$ is an isomorphism. It is given by  $H=\{(g_1,g_2)\in G_{11}\times G_{21}|\psi (g_1 G_{12})=g_2 G_{22}\}$. \par 
Moreover, $H$ is a normal subgroup iff $G_{i1}$, $G_{i2}$ are normal in $G_i$ and $G_{i1}/G_{i2}\leq \operatorname{Z}(G_i/G_{i2}) $, the center of $G_i/G_{i2}$. 
\end{lemma}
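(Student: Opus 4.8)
The plan is to construct explicit maps between subgroups of $G_1\times G_2$ and quintuples and show they are mutually inverse bijections, then handle the normality clause separately. Write $\pi_i\colon G_1\times G_2\to G_i$ for the coordinate projections. From a subgroup $H\le G_1\times G_2$ I would produce the quintuple by setting $G_{i1}:=\pi_i(H)$, $G_{12}:=\{g_1\in G_1:(g_1,e)\in H\}$, and $G_{22}:=\{g_2\in G_2:(e,g_2)\in H\}$. The first routine checks are that $G_{i2}\trianglelefteq G_{i1}$: normality follows by conjugating $(h_1,e)\in H$ by an arbitrary lift $(g_1,g_2)\in H$ of $g_1\in G_{11}$, which stays in $H$ with trivial second coordinate. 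Then I would define $\psi\colon G_{11}/G_{12}\to G_{21}/G_{22}$ by $\psi(g_1G_{12})=g_2G_{22}$ whenever $(g_1,g_2)\in H$, and verify in turn: well-definedness (two lifts of the same $g_1$ differ in the second coordinate by an element of $G_{22}$; two representatives of a coset $g_1G_{12}$ admit lifts differing by some $(h_1,e)$, $h_1\in G_{12}$), that $\psi$ is a homomorphism (immediate from closure of $H$), surjectivity (each $g_2\in G_{21}$ has a lift), and injectivity (if $(g_1,g_2)\in H$ with $g_2\in G_{22}$, then $(g_1,e)\in H$, so $g_1\in G_{12}$). This yields the quintuple.

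Conversely, from a quintuple I would set $H:=\{(g_1,g_2)\in G_{11}\times G_{21}:\psi(g_1G_{12})=g_2G_{22}\}$; it is a subgroup precisely because $\psi$ is a homomorphism. Showing the two assignments are mutually inverse is then definition-chasing. Starting from $H$: the subgroup rebuilt from its quintuple obviously contains $H$, and conversely any $(g_1,g_2)$ in the rebuilt subgroup has a lift $(g_1,g_2')\in H$ with $g_2'G_{22}=\psi(g_1G_{12})=g_2G_{22}$, so $(e,(g_2')^{-1}g_2)\in H$ and hence $(g_1,g_2)\in H$. Starting from a quintuple: one recomputes $\pi_1(H)=G_{11}$ by picking any representative of $\psi(g_1G_{12})$; the set $\{g_1:(g_1,e)\in H\}$ equals $\ker\psi$, which is trivial, so it equals $G_{12}$ --- this is the one place where injectivity of $\psi$ is essential --- and the recovered isomorphism is $\psi$ by construction.

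For the normality clause I would plug special elements into the condition $(a_1,a_2)H(a_1,a_2)^{-1}=H$. Conjugating $(h_1,e)$, $h_1\in G_{12}$, by $(a_1,e)$ with $a_1\in G_1$ arbitrary forces $a_1h_1a_1^{-1}\in G_{12}$, i.e.\ $G_{12}\trianglelefteq G_1$, and symmetrically $G_{22}\trianglelefteq G_2$. Conjugating a general $(g_1,g_2)\in H$ by $(a_1,e)$ forces $a_1g_1a_1^{-1}\in\pi_1(H)=G_{11}$, so $G_{11}\trianglelefteq G_1$, and then applying $\psi$ and using injectivity forces $a_1g_1a_1^{-1}G_{12}=g_1G_{12}$ for all $a_1\in G_1$, which says exactly that the image of $g_1$ is central in $G_1/G_{12}$; letting $g_1$ range over $G_{11}$ gives $G_{11}/G_{12}\le Z(G_1/G_{12})$, and symmetrically on the other side. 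Conversely, assuming the four normalities and the two centrality conditions, a short direct check shows that conjugating any $(g_1,g_2)\in H$ by $(a_1,a_2)$ keeps both coordinates in the respective $G_{i1}$ and leaves the cosets $g_1G_{12}$, $g_2G_{22}$ unchanged (centrality), so the relation $\psi(g_1G_{12})=g_2G_{22}$ is preserved and $H$ is normal.

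The argument is not conceptually hard; the obstacle is purely organizational, namely keeping the quantifiers straight in the well-definedness of $\psi$ and in checking that the data recovered from an $H$ built from a quintuple really is the original quintuple. I would present the forward construction as a short list of numbered claims, each with a one-line verification, so the coset bookkeeping stays transparent, and I would flag explicitly where injectivity of $\psi$ is used, since that is the only non-formal ingredient in the whole proof.
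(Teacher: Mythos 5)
Your proof is correct and follows essentially the same route as the paper's, which only outlines the construction of the quintuple from $H$ and defers the details to Anderson--Camillo and Bauer--Sen--Zvengrowski. In fact you supply more than the paper does: the verification that the two assignments are mutually inverse and the entire normality/centrality clause (neither of which the paper proves) are worked out correctly, with injectivity of $\psi$ invoked at exactly the right places.
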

\begin{proof}
For the full proof of this correspondence, see \cite{Anderson2009} Theorem 4, or \cite{Bauer2015} Section 2. Here we only outline the main steps leading to the definition of the bijection.\par
Let $H$ be a subgroup of $G_1\times G_2$.  Define \begin{align}
    G_{11}&=\{g_1\in G_1 | \exists g_2\in G_2, (g_1,g_2)\in H\},\\
    G_{12}&=\{g_1\in G_1 | (g_1,e)\in H\},
\end{align}
and write $G_{21}$ and $G_{22}$ similarly; they satisfy $G_{i2}\trianglelefteq G_{i1} $. It can be proved that  $\psi:G_{11}/G_{12}\rightarrow G_{21}/G_{22}$, defined by $\psi(g_1G_{12})=g_2 G_{22}$ for $(g_1,g_2)\in H$, is an isomorphism. Therefore, we obtain a map sending $H$ to the quintuple $\{G_{11}, G_{12}, G_{21}, G_{22}, \psi\}$. Conversely, given the quintuple $\{G_{11}, G_{12}, G_{21}, G_{22}, \psi\}$, we can check that $H$ is a subgroup of $G_{11}\times  G_{21} $ and therefore of $G_1\times G_2$. The correspondence is established by noticing that the two constructions are inverse of each other.
\end{proof}

We are now ready for the proof of Proposition \ref{theorem classification product}.

\begin{proof}[Proof of Theorem \ref{theorem classification product}]
    Let $\mathcal{K}:=\mathcal{K}(C)$ be the Tanner cone-complex of $C=C_1\otimes C_2^*$, and $G_i=\pi_1(\mathcal{T}_i)$. Then we know by Lemma \ref{lemma fundamental group of product} that $\pi_1 (\mathcal{K})\cong\pi_1(\mathcal{T}_1)\times\pi_1(\mathcal{T}_2)$. By the Galois correspondence, connected covering of $\mathcal{K}$ are classified by subgroups of the fundamental group, which by Goursat Lemma \ref{Theorem Goursat} are classified by Goursat quintuples. Therefore, Goursat quintuples on  $\pi_1(\mathcal{T}_1)\times\pi_1(\mathcal{T}_2)$ yield a complete classification.
\end{proof}

The difficulty is to find families of codes related to these non product subgroups. In the following, we attempt to characterize and construct these subgroups. In our case, Goursat Lemma is applied for $G_1$ and $G_2$ free groups and therefore infinite. The next lemma implies that we can restrict our attention to the set of finite index subgroups and normal subgroups of free groups. \par

\begin{lemma}\label{Lemma index goursat}
Let $H$ be a subgroup of $G_1\times G_2$ corresponding to the quintuple $\{G_{11}, G_{12}, G_{21}, G_{22}, \psi\}$. Then $[G_1\times G_2:H]=[G_1:G_{11}]\cdot[G_2:G_{22}]=[G_1:G_{12}]\cdot[G_2:G_{21}]$.
\end{lemma}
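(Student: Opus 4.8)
The plan is to use the explicit description of $H$ from Goursat's Lemma, namely $H=\{(g_1,g_2)\in G_{11}\times G_{21}\mid \psi(g_1G_{12})=g_2G_{22}\}$, together with the multiplicativity of the index over a chain of subgroups, $[G_1\times G_2:H]=[G_1\times G_2:G_{11}\times G_{21}]\cdot[G_{11}\times G_{21}:H]$. The first factor is immediately $[G_1:G_{11}]\cdot[G_2:G_{21}]$, so the work is to show $[G_{11}\times G_{21}:H]=[G_2:G_{21}]^{-1}\cdot[G_2:G_{22}]\cdot\,(\text{something})$ — more precisely, to show directly that $[G_{11}\times G_{21}:H]=[G_{21}:G_{22}]=[G_{11}:G_{12}]$, the last equality being forced by the fact that $\psi$ is an isomorphism between the two quotients.

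First I would fix the projection $\pi:G_{11}\times G_{21}\to G_{21}$ onto the second factor and observe that $\pi$ restricted to $H$ is surjective: given $g_2\in G_{21}$, pick any $g_1\in G_{11}$ with $\psi(g_1G_{12})=g_2G_{22}$ (possible since $\psi$ is onto), then $(g_1,g_2)\in H$. Its kernel is $\{(g_1,e)\mid g_1\in G_{12}\}\cong G_{12}$. So $H/(G_{12}\times\{e\})\cong G_{21}$, while $(G_{11}\times G_{21})/(G_{12}\times\{e\})\cong (G_{11}/G_{12})\times G_{21}$. Taking indices, $[G_{11}\times G_{21}:H]=[(G_{11}/G_{12})\times G_{21} : G_{21}]=[G_{11}:G_{12}]$ (the index of the ``slice'' $\{e\}\times G_{21}$-image, i.e.\ the number of cosets indexed by $G_{11}/G_{12}$). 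Since $\psi$ is an isomorphism, $[G_{11}:G_{12}]=[G_{21}:G_{22}]$, and therefore
\[
[G_1\times G_2:H]=[G_1:G_{11}]\cdot[G_2:G_{21}]\cdot[G_{21}:G_{22}]=[G_1:G_{11}]\cdot[G_2:G_{22}],
\]
using multiplicativity $[G_2:G_{21}]\cdot[G_{21}:G_{22}]=[G_2:G_{22}]$. The symmetric argument, projecting onto the first factor instead, gives $[G_1\times G_2:H]=[G_1:G_{12}]\cdot[G_2:G_{21}]$, completing the proof.

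The only subtlety — the main ``obstacle'', though it is mild — is that the groups $G_i$ are infinite (free groups), so one must be a little careful that all the indices appearing are finite and that the multiplicativity $[A:C]=[A:B]\cdot[B:C]$ is being applied legitimately (it holds for arbitrary groups as a cardinal identity, so there is in fact no issue, but when $H$ has finite index one should note that each $G_{i1}$, $G_{i2}$ then automatically has finite index in $G_i$, which is what makes the statement content-free of cardinal pathologies). I would also make explicit that all cosets are taken consistently on one side so that the counting of $\pi(H)$-fibers is unambiguous. No deep input is needed beyond Goursat's Lemma \ref{Theorem Goursat} and elementary index arithmetic.
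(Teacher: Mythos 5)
Your proof is correct, and it takes a somewhat different route from the paper's. The paper's own argument is a short, informal direct count of $H$-cosets in $G_1\times G_2$: it observes (citing Crawford--Wallace for the finite case) that the cosets are parametrized by a choice of $G_{11}$-coset in $G_1$ together with a $G_{22}$-coset in $G_2$, i.e.\ it implicitly exhibits a transversal of the form $T_1\times T_2$. You instead factor through the intermediate subgroup, $[G_1\times G_2:H]=[G_1\times G_2:G_{11}\times G_{21}]\cdot[G_{11}\times G_{21}:H]$, and compute the second factor via the projection $H\twoheadrightarrow G_{21}$ with kernel $G_{12}\times\{e\}$, obtaining $[G_{11}\times G_{21}:H]=[G_{11}:G_{12}]=[G_{21}:G_{22}]$ before reassembling with tower multiplicativity. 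Your version is more rigorous about the bookkeeping in the infinite-index setting and isolates the reusable intermediate identity $[G_{11}\times G_{21}:H]=[G_{11}:G_{12}]$, which the paper's proof never states; the paper's version is shorter but leaves the transversal argument implicit. One small imprecision: the image $\bar H$ of $H$ in $(G_{11}/G_{12})\times G_{21}$ is not the slice $\{e\}\times G_{21}$ but the ``sheared'' subgroup $\{(\alpha,g_2):\psi(\alpha)=g_2G_{22}\}$; the index computation still gives $[G_{11}:G_{12}]$ because $(G_{11}/G_{12})\times\{e\}$ is a transversal for $\bar H$ (for each $(\beta,g_2)$ the equation $\alpha=\beta\,\psi^{-1}(g_2G_{22})^{-1}$ has a unique solution), so you should phrase that step in terms of this transversal rather than identifying $\bar H$ with the slice. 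No gap otherwise.
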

This is a classical result for finite groups, but as far the author knows, there is no standard reference treating the case of infinite groups. We therefore provide a proof in Appendix \ref{section proof of lift of HPC}.\par 

From now on, we do not need the graphs to be bipartite for results to hold, unless it is clear from context that they should be interpreted as Tanner graphs of some linear codes. We explain a method to produce Goursat quintuples corresponding to finite index subgroups of $G_1\times G_2$, with $G_i:=\pi_1(T_i)$ and arbitrary finite graphs $T_i$, for $i=1,2$ . Let $p_{11}: T_{11}\rightarrow T_1$ and $p_{21}: T_{21}\rightarrow T_2$ be two connected covering maps obtained from right derived graphs, generated by permutation voltage or (finite) group voltage. Therefore, $p_1$ and $p_2$ do not need to be regular, but they are finite $d_1$ and $d_2$ covers. Let $\Gamma$ be a finite group of order $d$ and $\nu_i:E^*(T_{i1})\rightarrow \Gamma $ for $i=1,2$,  be two voltage assignments. We note $T_{i2}:=D(T_{i1},\nu_i)$ the right $\Gamma$-lift of $T_{i1}$. The relations between these objects are summarized in the following diagram,
\begin{equation}\label{equation diagram covering}
\begin{tikzcd}
  &   T_{12}\arrow[r,hook]{} \arrow[d, "p_{12}"]
  & T_{12}\times T_{22}
  \arrow[d,"\eta"] 
  &T_{22} \arrow[l,hook'] \arrow[d,"p_{22}"]
\\
  & T_{11} \arrow[d,"p_{11}"]
  & (T_1\times T_2)_H \arrow[d,"p_H"]
  & T_{21} \arrow[d,"p_{21}"]
  \\
  &   T_{1}\arrow[r,hook]{}
  & T_{1}\times T_{2}
  & T_{2} \arrow[l,hook'].
\end{tikzcd}
\end{equation}
The next proposition indicates that Goursat quintuples of finite index subgroups can be obtained from the constituents of Equation \eqref{equation diagram covering}.
\begin{lemma}\label{proposition Goursat Lift}
Let $p_{i1}: T_{i1}\rightarrow T_i$ and $  p_{i2}: T_{i2}\rightarrow T_{i1}$, for $i=1,2$, be degree $d_i$ connected cover constructed as in Equation \eqref{equation diagram covering}.  
\begin{enumerate}
 \item Then, the quintuple $\{G_{11}, G_{12}, G_{21}, G_{22}, \psi\}$, with $G_{i1}=p_{i1*}\pi_1(T_{i1})$,  $G_{i2}=p_{i1*}p_{i2*}\pi_1(T_{i2})$, and $\psi:G_{11}/G_{12}\rightarrow G_{21}/G_{22}$ a choice of isomorphism, is a Goursat quintuple associated to a finite index subgroup $H$ of $\pi_1(T_1)\times \pi_1(T_2)$ of index $d_1.|\Gamma|.d_2$.
 \item Moreover, every Goursat quintuple can be obtained in this way.
 \item If $H$ is normal, then $p_{i1}$ and $p_{i1}\circ p_{i2}$ are normal covers and $p_{i2}$ is an Abelian cover, for $i=1,2$.
\end{enumerate}
\end{lemma}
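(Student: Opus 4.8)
The plan is to dispatch the three items in turn, using only the Galois correspondence (Theorem~\ref{Theorem Galois correspond}), Goursat's Lemma together with its normality criterion (Lemma~\ref{Theorem Goursat}), the index formula (Lemma~\ref{Lemma index goursat}), and the fact recorded earlier from \cite{GROSS1977273} that right $\Gamma$-lifts realise every connected regular $\Gamma$-cover of a graph. For item~1 I would first record the subgroup data. Since $p_{i1}$ is a covering map, $p_{i1*}$ is injective, so $G_{i1}=p_{i1*}\pi_1(T_{i1})$ has index $d_i$ in $G_i$. Because $T_{i2}=D(T_{i1},\nu_i)$ is a connected right $\Gamma$-lift, $p_{i2}$ is a regular cover with $\operatorname{Deck}(p_{i2})\cong\Gamma$ (Definition~\ref{definition lift of graph}), hence $p_{i2*}\pi_1(T_{i2})\trianglelefteq\pi_1(T_{i1})$ with quotient $\cong\Gamma$; pushing this forward along the injection $p_{i1*}$ gives $G_{i2}=p_{i1*}p_{i2*}\pi_1(T_{i2})\trianglelefteq G_{i1}$ with $G_{i1}/G_{i2}\cong\Gamma$. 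So $\{G_{11},G_{12},G_{21},G_{22},\psi\}$ is a bona fide Goursat quintuple for any chosen isomorphism $\psi$, and Lemma~\ref{Theorem Goursat} attaches to it a subgroup $H\leq G_1\times G_2$; its index is then computed by Lemma~\ref{Lemma index goursat} as $[G_1:G_{11}]\,[G_2:G_{22}]=d_1\cdot([G_2:G_{21}]\,[G_{21}:G_{22}])=d_1\,d_2\,|\Gamma|$.

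For item~2 I would run the previous construction backwards. Given an arbitrary Goursat quintuple attached to a finite-index subgroup $H$, Lemma~\ref{Lemma index goursat} forces all four groups $G_{i1},G_{i2}$ to have finite index in the free groups $G_i=\pi_1(T_i)$, so the Galois correspondence over the finite graph $T_i$ yields connected finite covers $p_{i1}:T_{i1}\to T_i$ with $p_{i1*}\pi_1(T_{i1})=G_{i1}$ and degree $d_i=[G_i:G_{i1}]$. I would then set $\Gamma:=G_{11}/G_{12}$ and use $\psi$ to identify $G_{21}/G_{22}$ with this same finite group $\Gamma$. The normal subgroup $G_{i2}\trianglelefteq G_{i1}\cong\pi_1(T_{i1})$ corresponds, via the Galois correspondence over $T_{i1}$, to a connected regular $\Gamma$-cover of $T_{i1}$, which by \cite{GROSS1977273} is isomorphic to a right $\Gamma$-lift $D(T_{i1},\nu_i)$ for a suitable voltage assignment $\nu_i:E^*(T_{i1})\to\Gamma$. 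Feeding $(p_{i1},\Gamma,\nu_i)$ into the construction of item~1 and taking the gluing isomorphism to be the given $\psi$ returns the original quintuple.

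For item~3 I would simply read off the normality criterion of Lemma~\ref{Theorem Goursat}: $H\trianglelefteq G_1\times G_2$ implies $G_{i1},G_{i2}\trianglelefteq G_i$ and $G_{i1}/G_{i2}\leq\operatorname{Z}(G_i/G_{i2})$ for $i=1,2$. Normality of $G_{i1}=p_{i1*}\pi_1(T_{i1})$ in $G_i$ says exactly that $p_{i1}$ is a normal cover, and normality of $G_{i2}=(p_{i1}\circ p_{i2})_*\pi_1(T_{i2})$ in $G_i$ says exactly that $p_{i1}\circ p_{i2}$ is a normal cover. Finally $G_{i1}/G_{i2}$, being contained in the centre $\operatorname{Z}(G_i/G_{i2})$, is Abelian, and since $\operatorname{Deck}(p_{i2})\cong G_{i1}/G_{i2}\cong\Gamma$ the cover $p_{i2}$ is a regular cover with Abelian deck group, i.e.\ an Abelian cover.

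The step I expect to be the main obstacle is item~2: it rests entirely on the borrowed fact that the voltage construction of Definition~\ref{definition lift of graph} genuinely realises \emph{every} connected regular $\Gamma$-cover of $T_{i1}$, and on pinning down the canonical identifications $\operatorname{Deck}(p_{i2})\cong G_{i1}/G_{i2}$ so that they are compatible with the prescribed isomorphism $\psi$ rather than only up to an unspecified automorphism of $\Gamma$ --- which is precisely why one fixes $\Gamma=G_{11}/G_{12}$ and transports along $\psi$ from the outset. Items~1 and~3 are then essentially formal consequences of Goursat's Lemma, the Galois correspondence, and the index formula of Lemma~\ref{Lemma index goursat}.
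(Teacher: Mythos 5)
Your proposal is correct and follows essentially the same route as the paper: item~1 via injectivity of $p_{i1*}$, regularity of the right $\Gamma$-lift $p_{i2}$ with $\operatorname{Deck}(p_{i2})\cong\Gamma$, and the index formula of Lemma~\ref{Lemma index goursat}; item~2 via the Galois correspondence together with the realisability of all (regular) covers by (permutation/group) voltage assignments from \cite{GROSS1977273}; item~3 by reading off the normality criterion in Goursat's Lemma. The only cosmetic difference is that the paper constructs $\psi$ explicitly as a composition of deck-group isomorphisms, whereas you simply note that any isomorphism of the two $\Gamma$-quotients will do, which is all the statement requires.
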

While this statement appears direct, we still detail a proof, later in this section, using induced maps on fundamental groups. The aim is to better understand the correspondence between the geometrical object, the Goursat quintuple and lifts of graphs, which are central to prove other statements.\par

Note that it is straightforward to obtain a non-trivial Goursat quintuple with this method, i.e. one which is not a product of subgroups. It suffices to take $\Gamma$ a non-trivial finite group, so that $p_{i1*} p_{i2*}\pi_1( T_{i2})$ is a proper subgroup of $p_{i1*} \pi_1( T_{i1})$.\par

We now make a parallel between our approach and the lifted product construction of Panteleev and Kalachev \cite{Panteleev2021}.

\begin{proposition} \label{Proposition Lift equivalence PK}
  Consider the subgroup $H$ associated to the quintuple $\{G_{1}, G_{2}, G_{21}, G_{22}, \psi\}$. Let $\Xi:=\pi_1/H$ and  let $p_{i1}: T_{i1}\rightarrow T_i$ and $  p_{i2}: T_{i2}\rightarrow \Gamma\setminus T_{i2}=T_{i1}$, for $i=1,2$, be connected cover constructed as above.  
  \begin{enumerate}
      \item The $H$-cover of the product complex $T_1\times T_2$ is normal iff $\Gamma$ is Abelian.
      \item The lifted code $C^\Xi:=(C_1\otimes C_2^*)^\Xi$, with $T_i:=\mathcal{T}(C_i)$ for $i=1,2$, is equal to the product code $\mathcal{C}:=C_1^\Gamma\otimes_\Gamma C_2^{\Gamma*}$ obtained by lifting the classical codes according to $p_{i2}$.
  \end{enumerate}
\end{proposition}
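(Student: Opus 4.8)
For item~1, the plan is to read off normality directly from Goursat's Lemma. In the quintuple attached to $H$ the first and third entries are the full groups $\pi_1(\mathcal T_1)$ and $\pi_1(\mathcal T_2)$ (so $p_{11},p_{21}$ are trivial and $T_{i1}=\mathcal T_i$), while $G_{12}\trianglelefteq\pi_1(\mathcal T_1)$ and $G_{22}\trianglelefteq\pi_1(\mathcal T_2)$ are the normal subgroups realised by the $\Gamma$-lifts $p_{12},p_{22}$, with $\pi_1(\mathcal T_i)/G_{i2}\cong\Gamma$. By the normality clause of Lemma~\ref{Theorem Goursat}, $H\trianglelefteq\pi_1(\mathcal T_1)\times\pi_1(\mathcal T_2)$ iff each $G_{i1},G_{i2}$ is normal in $\pi_1(\mathcal T_i)$ and $G_{i1}/G_{i2}\le\operatorname{Z}(\pi_1(\mathcal T_i)/G_{i2})$; here the first condition holds by construction, and the second reads $\Gamma\le\operatorname{Z}(\Gamma)$, i.e.\ $\Gamma$ abelian. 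Since by Lemma~\ref{lemma fundamental group of product} the fundamental group of the product complex is $\pi_1(\mathcal T_1)\times\pi_1(\mathcal T_2)$ and the cover in question is the one classified by $H$, which is normal exactly when $H$ is, item~1 follows (and recovers item~3 of Lemma~\ref{proposition Goursat Lift}).

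For item~2, I would argue geometrically and then translate to chain complexes. \emph{Step 1.} By the proof of Lemma~\ref{lemma fundamental group of product}, $\mathcal K:=\mathcal K(C_1\otimes C_2^*)$ is the product complex $P:=\mathcal T_1\times\mathcal T_2$ with each square face cut by a diagonal joining its $X$- and $Z$-corner; this leaves the underlying space unchanged, so the degree-$|\Xi|$ cover $\mathcal K_H\to\mathcal K$ classified by $H$ is, as a space carrying a refined cell structure, the cover $P_H\to P$ classified by $H\le\pi_1(P)=\pi_1(\mathcal T_1)\times\pi_1(\mathcal T_2)$. \emph{Step 2.} The product $T_{12}\times T_{22}$ is the cover of $P$ classified by $G_{12}\times G_{22}$, a normal subgroup with quotient the deck group $\Gamma\times\Gamma$ (one $\Gamma$ per factor). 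Since $H=\{(g_1,g_2):\psi(g_1G_{12})=g_2G_{22}\}$ contains $G_{12}\times G_{22}$, Lemma~\ref{Lemma covering of covering} gives a cover $T_{12}\times T_{22}\to P_H$ of degree $[H:G_{12}\times G_{22}]=|\Gamma|$ (using Lemma~\ref{Lemma index goursat} and the tower law); identifying intermediate covers with subgroups of $\Gamma\times\Gamma$ via the quotient map $q$, one checks $q^{-1}(\Delta_\psi)=H$ for the graph $\Delta_\psi:=\{(\gamma,\psi(\gamma)):\gamma\in\Gamma\}\le\Gamma\times\Gamma$, whence $P_H\cong(T_{12}\times T_{22})/\Delta_\psi$ with $\Delta_\psi$ acting diagonally by deck transformations and $\Xi\cong(\Gamma\times\Gamma)/\Delta_\psi$. \emph{Step 3.} Applying Lemma~\ref{lemma fundamental group of product} to the classical lifts, the Tanner cone-complex of the HPC $C_1^\Gamma\otimes C_2^{\Gamma*}$ is $T_{12}\times T_{22}$ with squares subdivided, equipped with the diagonal $\Gamma$-action; passing to the balanced product $\otimes_\Gamma$ (using the module description of a classical lift from Section~\ref{section geometrical lifts of linear code} and the boundary formula for balanced products from Section~\ref{section chain complexes}) quotients this by $\Delta_\psi$, producing exactly the subdivided $(T_{12}\times T_{22})/\Delta_\psi=\mathcal K_H$ with the same $X/Z/Q$-labelling of cells. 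As a CSS code is determined by its Tanner cone-complex together with this labelling, it then suffices to compare boundary maps: those obtained this way for $\mathcal C$ agree with the ones produced for $C^\Xi$ by Proposition~\ref{Proposition boundary maps of lifted code}, for the voltage assignment on $\mathcal K^1$ sending $\mathcal T_1$-edges into $\Gamma\times\{e\}$ via $\nu_1$, $\mathcal T_2$-edges into $\{e\}\times\Gamma$ via $\nu_2$, and each diagonal to the product of the two. Hence $C^\Xi=C_1^\Gamma\otimes_\Gamma C_2^{\Gamma*}$.

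The delicate point is Step~3: one must keep the handedness straight — deck transformations act on the left, the group-algebra module structure of a lift comes from right multiplication, dualizing swaps the two, and $\otimes_\Gamma$ needs compatible sides — and must check that the isomorphism $\psi$ in the Goursat data is precisely the identification used to form the balanced product, so that after the harmless relabelling reducing to $\psi=\operatorname{id}$ the untwisted diagonal quotient on the code side matches $P_H$. A slicker variant avoiding explicit voltages uses the universal-lift picture of Section~\ref{section relation to balanced product codes}: write $\widetilde{C_1\otimes C_2^*}\cong\widetilde C_1\otimes\widetilde C_2^{*}$ with $\pi_1=\pi_1(\mathcal T_1)\times\pi_1(\mathcal T_2)$ acting factorwise, so that $C^\Xi\cong(\widetilde C_1\otimes\widetilde C_2^{*})\otimes_{\mathbb F_2[\pi_1]}\mathbb F_2[\Xi]$ while $C_1^\Gamma\otimes_\Gamma C_2^{\Gamma*}\cong(\widetilde C_1\otimes_{\mathbb F_2[\pi_1(\mathcal T_1)]}\mathbb F_2[\Gamma])\otimes_{\mathbb F_2[\Gamma]}(\widetilde C_2^{*}\otimes_{\mathbb F_2[\pi_1(\mathcal T_2)]}\mathbb F_2[\Gamma])$, and then conclude by associativity of the tensor product and $\mathbb F_2[\Gamma]\otimes_{\mathbb F_2[\Gamma]}\mathbb F_2[\Gamma]\cong\mathbb F_2[\Gamma]$, after noting that $\mathbb F_2[\Xi]$, as an $\mathbb F_2[\pi_1]$-module, is $\mathbb F_2[\Gamma]$ viewed as an $(\mathbb F_2[\pi_1(\mathcal T_1)],\mathbb F_2[\pi_1(\mathcal T_2)])$-bimodule through the two quotient maps and $\psi$.
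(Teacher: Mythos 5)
Your proof is correct and follows essentially the same route as the paper's: item 1 via the normality clause of Goursat's Lemma specialized to $G_{i1}=G_i$, and item 2 by identifying the $H$-cover of $\mathcal T_1\times\mathcal T_2$ with the quotient of $T_{12}\times T_{22}$ by the diagonal subgroup $\{(\gamma,\psi(\gamma))\}$ and matching it, together with the qubit/check labelling, to the balanced-product complex. The only difference is that your Step 2 re-derives inline what the paper isolates as Lemma \ref{Proposition Antidiagonal action Goursat}, and your universal-lift variant is a nice algebraic alternative the paper does not pursue.
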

The first claim of Lemma \ref{Proposition Lift equivalence PK} is in accordance with a Remark 6 in \cite{Panteleev2021}. In order to prove the results, we need to characterize more precisely the covering associated to a given Goursat quintuple.

\begin{lemma}\label{Proposition Antidiagonal action Goursat}
Let $H$ be the subgroup of $\pi_1:=\pi_1(T_1)\times \pi_1 (T_2)$ associated to the quintuple  $\{G_{11}, G_{12}, G_{21}, G_{22}, \psi\}$ constructed as in Lemma \ref{proposition Goursat Lift}, where we identify directly $\Gamma=G_{11}/G_{12}$. Let $D$ be the set defined as follows $D:=\{(\gamma, \psi(\gamma)), \gamma \in \Gamma\}$. Then $D$ is a multiplicative group and $ \eta : T_{12}\times T_{22}\rightarrow (T_1\times T_2)_H $ is a covering map with $\operatorname{Deck}(\eta)\cong D.$
\end{lemma}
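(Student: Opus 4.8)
The plan is to deduce the statement from the Goursat correspondence (Lemma~\ref{Theorem Goursat}) and the Galois correspondence for covering spaces, with no genuinely new construction needed. I would first check that $D$ is a group: since $T_{i2}$ is a right $\Gamma$-lift of $T_{i1}$, the covering $p_{i2}$ is regular with $\operatorname{Deck}(p_{i2})\cong\Gamma$, so $G_{i1}/G_{i2}\cong\pi_1(T_{i1})/p_{i2*}\pi_1(T_{i2})\cong\Gamma$ for $i=1,2$ (using that $p_{i1*}$ is injective and sends $p_{i2*}\pi_1(T_{i2})$ to $G_{i2}$). Then $D=\{(\gamma,\psi(\gamma)):\gamma\in\Gamma\}$ is exactly the graph of the homomorphism $\psi$, hence a subgroup of $(G_{11}/G_{12})\times(G_{21}/G_{22})$, and $\gamma\mapsto(\gamma,\psi(\gamma))$ is an isomorphism $\Gamma\xrightarrow{\ \sim\ }D$.

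Next I would argue that $\eta$ is a covering map and that it is normal. By Lemma~\ref{Theorem Goursat}, $H=\{(g_1,g_2)\in G_{11}\times G_{21}:\psi(g_1G_{12})=g_2G_{22}\}$, which contains $G_{12}\times G_{22}$. The product covering $p_H\circ\eta\colon T_{12}\times T_{22}\to T_1\times T_2$ pushes $\pi_1(T_{12}\times T_{22})$ onto $G_{12}\times G_{22}$, while $(T_1\times T_2)_H$ is the cover associated to $H$; since $G_{12}\times G_{22}\leq H$, Lemma~\ref{Lemma covering of covering} shows $\eta$ is a covering map of degree $[H:G_{12}\times G_{22}]$. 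To see normality and compute this index, I would use the surjective homomorphism $q\colon H\to\Gamma$, $q(g_1,g_2)=g_1G_{12}$, whose kernel is exactly $G_{12}\times G_{22}$ (if $g_1\in G_{12}$ then $g_2G_{22}=\psi(G_{12})=G_{22}$). Hence $G_{12}\times G_{22}\trianglelefteq H$ with index $|\Gamma|$, and since $p_{H*}$ identifies $\pi_1((T_1\times T_2)_H)$ with $H$ and $\eta_*\pi_1(T_{12}\times T_{22})$ with $G_{12}\times G_{22}$, the covering $\eta$ is normal with $\operatorname{Deck}(\eta)\cong H/(G_{12}\times G_{22})$.

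It then remains to identify $H/(G_{12}\times G_{22})$ with $D$. The map $(g_1,g_2)(G_{12}\times G_{22})\mapsto(g_1G_{12},g_2G_{22})$ is well defined and a homomorphism (being induced by the two quotient maps $G_{i1}\to G_{i1}/G_{i2}$), is injective by the kernel computation above, and has image exactly $D$ because belonging to $H$ is precisely the relation $g_2G_{22}=\psi(g_1G_{12})$. Composing with the isomorphism of the first paragraph yields $\operatorname{Deck}(\eta)\cong D$, completing the argument.

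I expect the only real difficulty to be bookkeeping rather than anything deep: one must keep straight, through the injections $p_{11*}$, $p_{21*}$, $p_{H*}$, which subgroup of which fundamental group names which covering space, so that the two invocations of the Galois correspondence line up; the essential content is merely that $H$ is, by construction, the preimage of the graph of $\psi$ under $G_{11}\times G_{21}\to(G_{11}/G_{12})\times(G_{21}/G_{22})$. If a more geometric route is preferred, one could instead note that $p_{12}\times p_{22}\colon T_{12}\times T_{22}\to T_{11}\times T_{21}$ is regular with deck group $\Gamma\times\Gamma$ acting factorwise, observe that $(T_1\times T_2)_H$ lies between these two spaces, and check via an index count that it corresponds to the subgroup $D\leq\Gamma\times\Gamma$ — so that $\eta$ is the quotient map by $D$ and $\operatorname{Deck}(\eta)=D$; the obstacle there is the same index verification.
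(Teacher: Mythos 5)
Your proposal is correct and follows essentially the same route as the paper: establish that $\eta$ is a covering via Lemma \ref{Lemma covering of covering} applied to $G_{12}\times G_{22}\leq H$, identify $\operatorname{Deck}(\eta)$ with $H/(G_{12}\times G_{22})$ through the Galois correspondence, and then show this quotient is exactly $D$ by observing that membership in $H$ is the relation $\psi(g_1G_{12})=g_2G_{22}$. Your homomorphism $q\colon H\to\Gamma$ with kernel $G_{12}\times G_{22}$ is a slightly tidier packaging of the paper's argument with the restricted quotient map $Q|_H$, but the content is the same.
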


In other terms, Proposition \ref{Proposition Antidiagonal action Goursat} asserts the existence of a free transitive $D$-action on $T_{12}\times T_{22}$ and that $(T_1\times T_2)_H\cong D\setminus T_{12}\times T_{22}$. The action of $D$ is equivalent to a diagonal action of $\Gamma$ on each factor $(\gamma, \psi(\gamma))((c_1,\gamma_1),(c_2,\gamma_2))=((c_1,\gamma \gamma_1 ),(c_2,\psi(\gamma)\gamma_2))$, where the first coordinate represents the $\Gamma$-lift of a cell $c_1\in T_{11}$ and the second coordinate represents the $\Gamma$-lift of a cell $c_2\in T_{12}$. \par
Furthermore, let us consider the following application, $R:\pi_1/G_{12}\times G_{22}\rightarrow \pi_1/H$, which as kernel $\operatorname{ker}(R)=D$. Supposing $H$ is normal in $\pi_1$, $R$ is an epimorphism\footnote{Surjective homomorphism.} and by the first isomorphism theorem,
\[ (\pi_1/G_{12}\times G_{22})/D\cong \pi_1/H.\]
This provides a practical way to assign coordinates for the construction of $(T_1\times T_2)_H$. Let $c_i$ be a cell in $T_i$. Then a cell in $(T_1\times T_2)_H$ can be given by $((c_1,c_2),D(g_1G_{12},g_2 G_{22} ))$, where $g_i\in G_i$.\par 
Our last corollary is a simple application of Proposition \ref{Proposition Lift equivalence PK}. 
\begin{corollary}\label{Corollary linear parameters}
There exists a code $C$ (an HPC), and an indexed family $\{C_n\}_\mathbb{N}$ of LDPC codes with parameters $[[ n,\Theta(n),\Theta(n)]]$ generated by the lift operations (covering maps) on the Tanner cone-complex $\mathcal{K}(C)$.
\end{corollary}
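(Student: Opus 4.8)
The plan is to realise the asymptotically good quantum LDPC family of Panteleev and Kalachev \cite{Panteleev2021} as a family of lifts of one fixed hypergraph product code, using the equivalence between lifts of HPCs and lifted product codes proved above (Proposition \ref{Proposition Lift equivalence PK}), and then to read off the LDPC property from Proposition \ref{Proposition Lift properties}. Recall that \cite{Panteleev2021} fixes two local codes $h_1,h_2$ and an infinite family of finite groups $\{G_m\}$, each carrying a symmetric generating set of fixed size, and that the balanced product over $G_m$ of the two $G_m$-lifts of the corresponding Tanner seed codes forms an LDPC family with parameters $[[N_m,\Theta(N_m),\Theta(N_m)]]$.

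First I would package the two seeds as honest classical Tanner codes $C_1,C_2$ living on fixed finite bipartite base graphs $\mathcal{T}_1,\mathcal{T}_2$ (each a wedge of $O(1)$ subdivided circles carrying the local code $h_i$), chosen so that the $G_m$-lift of $C_i$ along the voltage assignment ``loop $\mapsto$ corresponding generator'' reproduces exactly the Tanner code on the Cayley graph of $G_m$. Set $C:=C_1\otimes C_2^{*}$, a single fixed HPC. By Lemma \ref{lemma fundamental group of product}, $\pi_1(\mathcal{K}(C))\cong\pi_1(\mathcal{T}_1)\times\pi_1(\mathcal{T}_2)$, with each factor a fixed finitely generated free group. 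For every $m$ the two surjections $\varphi_i:\pi_1(\mathcal{T}_i)\twoheadrightarrow G_m$ furnished by the generating sets assemble, via the recipe of Lemma \ref{proposition Goursat Lift} (take $\Gamma=G_m$, $G_{i1}=\pi_1(\mathcal{T}_i)$, $G_{i2}=\ker\varphi_i$, and $\psi$ induced by $\mathrm{id}_{G_m}$), into a Goursat quintuple, hence a finite-index subgroup $H_m\le\pi_1(\mathcal{K}(C))$ of index $|G_m|$ by Lemma \ref{Lemma index goursat}, hence --- by the Galois correspondence, Theorem \ref{Theorem Galois correspond} --- a connected covering $p_{H_m}:\mathcal{K}_{H_m}\to\mathcal{K}(C)$. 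Proposition \ref{Proposition Lift equivalence PK}(2), in the form that allows the HPC factors to be Tanner codes, then identifies the associated lifted code with $C_1^{G_m}\otimes_{G_m}C_2^{G_m*}$, i.e.\ with the Panteleev--Kalachev code $\mathcal{C}_m$. Since $G_m$ is non-abelian, $H_m$ is not normal and $p_{H_m}$ is a non-regular cover (Proposition \ref{Proposition Lift equivalence PK}(1)); this is precisely where the full generality of Definition \ref{definition lift of quantum CSS code} and the connected-lift construction of Definition \ref{definition modified lift of graph} are needed.

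To finish: each $\mathcal{C}_m$ is now a lift of the \emph{fixed} HPC $C$ along the covering $p_{H_m}$ of its Tanner cone-complex, so by Proposition \ref{Proposition Lift properties}(2) the row and column weights of $\mathcal{C}_m$ equal those of $C$, a constant independent of $m$; hence $\{\mathcal{C}_m\}$ is LDPC. Combined with the rate and distance bounds of \cite{Panteleev2021}, re-indexing by length $n=N_m$ yields the desired family $\{C_n\}_{\mathbb{N}}$ with $[[n,\Theta(n),\Theta(n)]]$, all generated by covering maps on the single complex $\mathcal{K}(C)$.

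I expect the main obstacle to be the identification in the second step: one must check that the left--right balanced product over the non-abelian group $G_m$ used in \cite{Panteleev2021} coincides with the covering classified by the above Goursat quintuple --- concretely, that the antidiagonal $D=\{(\gamma,\psi(\gamma)):\gamma\in\Gamma\}$-action of Proposition \ref{Proposition Antidiagonal action Goursat} on $\mathcal{T}_{12}\times\mathcal{T}_{22}$ reproduces their quotient construction --- and, crucially, that this persists once the base graphs carry the non-trivial local codes $h_i$, which is the Tanner-code strengthening of Proposition \ref{Proposition Lift equivalence PK} alluded to above. A secondary point of care is that \cite{Panteleev2021} provides explicit (not merely random) groups and local codes realising the good parameters, so that the existential statement of the corollary is genuinely witnessed; otherwise one would instead fix, for each $m$, any instance attaining the guaranteed parameters.
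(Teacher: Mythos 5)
Your proposal is correct and follows essentially the same route as the paper: realise the Panteleev--Kalachev good codes as balanced products of Tanner codes on Cayley graphs, exhibit each Cayley-graph Tanner code as a lift of a fixed Tanner code on a small base graph via a voltage assignment, and then invoke Proposition \ref{Proposition Lift equivalence PK} to identify the resulting lift of the fixed HPC with the balanced product, with the LDPC property coming from the local homeomorphism of coverings. The ``Tanner-code strengthening'' you flag as the main obstacle is precisely what the paper's proof supplies explicitly, by extending the voltage assignment $\nu$ on the two-vertex multigraph $D_w^i$ to an assignment $\hat\nu$ on the Tanner graph of the Tanner code so that the derived graph is exactly $\mathcal{T}(C(\bar X_i,h_i))$.
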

The proof simply consists in translating the constituent of a lifted code in \cite{Panteleev2021} into the present language of covering maps over Tanner graphs.\par

The procedure to obtain the boundary maps in the approach of Section \ref{section Lift of a CSS code: explicit construction} requires fixing a spanning tree in the Tanner cone-complex of $C=C_1\otimes C_2^*$. For completeness, we end this section by showing how to find an explicit spanning tree of a Cartesian product of graph by using the product structure.\par 
We can proceed as follows. Let $S_i=(V_i,E_i)$ be a spanning tree of $T_{i2}$, $i=1,2$ ($E_i$ is a subset of edges in $T_{i2}$ and $V_i$ contains all the vertices). We consider the 1-skeleton of the Cartesian product $(S_1\times S_2)^1=(V=V_1\times V_2, E=E_1\times V_2\cup V_1\times E_2) $ . Then the following graph $T$ is a spanning tree of the Cartesian product $T_{i1}\times T_{i2}$, 
\[T=(S_1\times S_2)^1- (V_1\times E_2- v_1\times E), \]
with $v_1$ an arbitrary choice of vertex in $V_1$. \par

\subsection{Balanced product of quantum CSS codes}\label{section balanced product of quantum CSS codes}

The algebraic complex defined by the balanced product of chain complexes was identified in \cite{Breuckmann2021} to be of major importance in the theory of quantum codes. The lifted product of \cite{Panteleev2020,Panteleev2021} is in fact a particular case of balanced product complex. One difficulty addressed in this work, was the development of a systematic way of generating abstract 3-term chain complexes with a left or a right group action, which is a necessary ingredient to apply the balanced product to quantum codes.\par 

Let $C$ and $D$ be two quantum CSS codes with Tanner cone-complex $\mathcal{K}_1:=\mathcal{K}(C)$ and $\mathcal{K}_2:=\mathcal{K}(D)$. Let $p_{1}: \mathcal{K}_{H_1}\rightarrow \mathcal{K}_1$ and $p_{2}:\mathcal{K}_{H_2}\rightarrow \mathcal{K}_2$ be normal covering maps such that $\pi_1(\mathcal{K}_1)/{H_1}\cong \pi_1(\mathcal{K}_2)/{H_2}$. The associated codes, which satisfy $\mathcal{K}_{H_1}=\mathcal{K}(C^{\Gamma})$ and $\mathcal{K}_{H_2}=\mathcal{K}(D^{\Gamma})$, inherit the left action of the group of deck transformations, which is isomorphic for the two codes, $\operatorname{Deck}(p_{i})=\Gamma$. The action of $\operatorname{Deck}(p_i)$ is properly discontinuous, and the covering are respectively equivalent to $\operatorname{Deck}(p_i)\setminus \mathcal{K}_{H_i}\cong \mathcal{K}_i $. We can also define a right group action of the group of deck transformations by multiplying on the left by the inverse of a group element, or equivalently by the action of the opposite group $\Gamma^{\operatorname{op}}$. \par

The new codes are the chain complexes over $\mathbb{F}_2$, $C ^{\Gamma}=C_2^\Gamma\xrightarrow[]{\partial^C_2} C_1^\Gamma\xrightarrow[]{\partial^C_1} C_0^\Gamma$ and $ D ^{\Gamma}=D_2^\Gamma\xrightarrow[]{\partial^D_2} D_1^\Gamma\xrightarrow[]{\partial^D_1} D_0^\Gamma$. We can construct several new complexes out of these two chain complexes by considering them respectively as right and left $\Gamma$-modules, following Section \ref{section relation to fiber bundle codes}. 
\begin{definition}\label{Definition balanced product of CSS codes}
The balanced product of quantum CSS codes is defined as the total complex of the tensor product double complex, \begin{align}
\mathcal{D}_{\bullet\bullet}:=&  C^{\Gamma}_\bullet \otimes_{\Gamma}  D^{\Gamma}_\bullet.
\end{align}
That is, it is the complex $\mathcal{D}:= C^{\Gamma} \otimes_{\Gamma}  D^{\Gamma}$, with spaces indexed as $\mathcal{D}_{n}= \bigoplus_{i+j=n} C^{\Gamma}_i \otimes_{\Gamma}  D^{\Gamma}_j.$
\end{definition}
We can extract the 3-term complex $\mathcal{D}_{3}\xrightarrow[]{\partial^\mathcal{D}_3}\mathcal{D}_{2}\xrightarrow[]{\partial^\mathcal{D}_2}\mathcal{D}_1$, for which the constituent vector spaces are
\begin{align}
\mathcal{D}_{1}&=(C_0^\Gamma\otimes_\Gamma D_1^\Gamma)\oplus (C^\Gamma_1\otimes_\Gamma D^\Gamma_0), \\
\mathcal{D}_{2}&=(C_2^\Gamma\otimes_\Gamma D_0^\Gamma)\oplus (C_1^\Gamma\otimes_\Gamma D_1^\Gamma)\oplus (C_0^\Gamma\otimes_\Gamma D_2^\Gamma),\\
\mathcal{D}_{3}&=(C_2^\Gamma\otimes_\Gamma D_1^\Gamma)\oplus (C_1^\Gamma\otimes_\Gamma D_2^\Gamma).
\end{align}
The boundary maps are defined on an element $x\otimes y \in C_i^\Gamma\otimes_\Gamma D_j^\Gamma$ by $\partial^\mathcal{D}_{i+j}(x\otimes y)=\partial_i^C x\otimes y+x\otimes \partial_j^D y$ and extended by linearity over chains of $\mathcal{D}_{i+j}$. 

\begin{example}
The easiest variant of this construction involves $C=D$, $H_1=H_2$. More precisely we consider the product of a right $G$-lift of $C$ and its dual  $C^{\Gamma*}$, which  inherits a right linear action of the group of deck transformations, $\mathcal{D}:=  C ^{\Gamma}\otimes_\Gamma C ^{\Gamma*}$.
\end{example}
\begin{remark}
    The balanced product of quantum CSS codes can be obtained from the left-right complex
\[P:=\mathcal{K}_{H_1}\times_\Gamma \mathcal{K}_{H_2}=\mathcal{K}_{H_1}\times \mathcal{K}_{H_2}/\left((x_1,\gamma^{-1}g_1),(x_2,g_2)\right)\sim \left((x_1,g_1) ,(x_2, \gamma g_2), \gamma\in \Gamma )\right).\]
for which we must give an interpretation to the sets of vertices as $X$-checks, qubits and $Z$-checks.
\end{remark}

\section{New constructions and lifts}\label{section New constructions and lifts}

\subsection{General method}

In this section and the next, we introduce non-topological and non-product CSS code constructions which can be lifted into codes with improved parameters. We first explain the general procedure to produce them and, subsequently, we detail explicit examples and compute numerically parameters of some moderate-length lifted codes.\par

For any group $G$, there exists at least one space which has $G$ for fundamental group \cite{HatcherTopo}. When given a presentation $G=\langle S |  R\rangle$, a possible construction of such space, called the\textit{ presentation complex}\footnote{This complex is also related to the Cayley complex of $G$. Let $M_G$ be the complex associated to a presentation $G=\langle S |  R\rangle$. Then $M_G=\operatorname{Cay}^2(G,S)/G$, where $\operatorname{Cay}^2(G,S)$ is the Cayley complex for this presentation. Moreover, $p:\operatorname{Cay}^2(G,S)\to M_G$ is a covering map, and $\operatorname{Cay}^2(G,S)$ is the universal cover of $M_G$. See \cite{HatcherTopo}, Section 1.3. } associated to this presentation, goes as follows.

\begin{lemma}[\cite{HatcherTopo}, Corollary 1.28. ]\label{lemma Presentation complex}
    For every group given with a presentation $G=\langle S |  R\rangle$, there exists a 2-dimensional cell complex $M_G$, the presentation complex, having 1 vertex, $|S|$ edges and $|R|$ 2-cells, with $\pi_1(M_G)=G$. 
\end{lemma}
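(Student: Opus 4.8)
The plan is to construct $M_G$ explicitly from the combinatorial data of the presentation and then verify the fundamental group computation via van Kampen's theorem. First I would build the $1$-skeleton: take a single vertex $v$, and for each generator $s \in S$ attach a $1$-cell (a loop) at $v$. This yields a wedge of $|S|$ circles, whose fundamental group is the free group $F_S$ on the generating set $S$; this is the standard computation of $\pi_1$ of a graph with one vertex, and it follows from the remark earlier in the excerpt that the fundamental group of a connected graph $T=(V,E)$ is free of rank $\#E - \#V + 1 = |S|$.

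Next, for each relator $r \in R$, viewed as a word $s_{i_1}^{\pm 1} s_{i_2}^{\pm 1} \cdots s_{i_\ell}^{\pm 1}$ in the generators, I would attach a $2$-cell $e_r$ along the loop in the $1$-skeleton that traverses the corresponding edges in order (with orientations dictated by the exponents). Concretely, the attaching map $\partial \mathbb{D}^2 \to M_G^{1}$ is the composite loop reading off $r$. The resulting space $M_G$ then has exactly $1$ vertex, $|S|$ edges, and $|R|$ two-cells, as required.

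It remains to identify $\pi_1(M_G)$. I would apply van Kampen's theorem in the form adapted to attaching $2$-cells: attaching a $2$-cell along a loop representing a word $w$ in $\pi_1$ of the space being built upon has the effect of quotienting by the normal closure of $w$. Attaching all the $2$-cells corresponding to $R$ therefore quotients $F_S$ by the normal closure of $\{r : r \in R\}$, which by definition of a group presentation is exactly $G = \langle S \mid R\rangle$. Hence $\pi_1(M_G) \cong G$. (One subtlety worth a sentence: van Kampen requires suitable open neighborhoods, so one takes slightly thickened open sets around the $1$-skeleton and around the interiors of the $2$-cells; this is routine since $M_G$ is a CW complex and each $2$-cell is contractible with the loop being attached deformation-retracting onto the relevant word.)

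The main obstacle — really the only non-routine point — is making the van Kampen bookkeeping precise when $|R| > 1$, since one must attach the $2$-cells one at a time (or use the multi-piece version of van Kampen) and check that the intermediate intersections are path-connected so that the theorem applies at each stage; this is handled by noting that the $1$-skeleton is path-connected and each successive intersection deformation-retracts onto it. Everything else is a direct translation of the presentation data into cell-attachment data, and the cell counts are immediate from the construction. Since the excerpt explicitly cites this as Corollary 1.28 of \cite{HatcherTopo}, the cleanest exposition simply carries out the construction above and invokes van Kampen, rather than reproving van Kampen itself.
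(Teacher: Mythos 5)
Your proposal is correct and follows essentially the same route as the paper: build the wedge of $|S|$ circles, attach one $2$-cell per relator along the loop spelling that word, and identify $\pi_1$ as the quotient of $F_S$ by the normal closure of $R$. The paper's own proof is just this construction stated more tersely (deferring the van Kampen verification to Hatcher), so your added detail on attaching $2$-cells one at a time is a harmless elaboration rather than a different argument.
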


\begin{proof}
Choose a presentation of $G$, $\langle S|R\rangle$ with $g_i\in S$ and $r_j\in R$. This is the quotient of a free group $F_{|S|}$ on the generators of $S$ by the normal closure of the group generated by $R$. The relations of $R$ are the generators of the kernel of the map $F_{|S|}\to G$. Now construct $M_G$ from a wedge of circles $\vee_i S_i^1$, by attaching one 2-cell, labeled $f_j^2$, along the loops specified by the word $r_j$.
\end{proof}

According to the construction given in the proof, if we suppose $G$ finitely presented, i.e. $G$ admits a presentation $\langle S|R\rangle$ with $S$ and $R$ finite sets, then the associated presentation complex is a finite CW complex. Conversely, if the presentation isn't finite, then the associated complex will have an infinite number of cells. Therefore, we focus on groups which are finitely presented. \par

Homotopy equivalent complexes can also be obtained easily. For example, by taking any other graph with fundamental group $F_{|S|}$, one can add iteratively 2-cells in such a way that relations are added to the fundamental group until it is equal to $\langle S|R\rangle$. Alternatively, it is possible to refine the cellulation of the initial presentation complex of $G$, in order to make it a regular CW complex. In each case, we can regard the obtained cellular chain complex $\mathbb{F}_2F\xrightarrow[]{\partial_2}\mathbb{F}_2E\xrightarrow[]{\partial_1}\mathbb{F}_2V$ as a quantum CSS code, in which the qubits are represented by the edges. \par

As we will see, a particularly interesting cellulation is one with 2-cells for which the boundary is a cycle of 4 edges. We call such a cellulation of $M_G$ a \textit{square cellulation} $M_G^\Box$, and we denote its 1-skeleton $M^{\Box,1}_G$. When an instance $M_G^\Box$ is given, it is sometimes possible to make it finer and define a new \textit{abstract} quantum code over that complex. By abstract, we mean a quantum code in which qubits and checks are not indexed by the dimension of cells\footnote{There is always the FH-construction \cite{Freedman2020CSS_Manifold} which gives as 11-dimensional realization of a code into a manifold cellulation.}. In this work, the codes are said to be of $\operatorname{E}$ or \textit{$\operatorname{V}$-type}, depending on whether qubits are placed on the edges or the vertices. An $\operatorname{E}$-type code simply refers to a topological code, i.e. a code equivalent to the $\mathbb{F}_2$-cellular chain complex of a regular 2-dimensional CW complex $M$. But our intention is to present explicit and non-trivial $\operatorname{V}$-type codes, which are not equivalent to any topological code over a $2$D complex.

\begin{lemma}\label{lemma V-type codes}
There exists an infinite set of group presentations $\{G_i\}$, with $G_i$ admitting a presentation complex $M_{G_i}$ which can be refined into a square cellulation $M_{G_i}^\Box$ and an assignment of checks and qubits to the vertices of $M_{G_i}^\Box$, such that the associated $\operatorname{V}$-type code is not equal to any $\operatorname{E}$-type code over a cellulation of $M_{G_i}$. Moreover, the Tanner cone-complex of this code is homeomorphic to $M_{G_i}^\Box$.
\end{lemma}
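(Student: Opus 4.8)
The plan is to realise, for each group $G$ in an explicitly chosen infinite family of finitely presented groups, a square-cellulated refinement $M_G^\Box$ of a presentation complex of $G$ together with a labelling of its vertices by qubits, $X$-checks and $Z$-checks that makes $M_G^\Box$ the Tanner cone-complex of a $\operatorname{V}$-type code, and then to derive a column-weight obstruction showing that such a code cannot be $\operatorname{E}$-type. The concrete families $\{G_i\}$ and the three labelling patterns are supplied by the explicit constructions of the following sections; here I outline the recipe and indicate where the difficulty lies.

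I would start from the presentation complex $M_G$ of Lemma~\ref{lemma Presentation complex} for a finite presentation of $G$ (passing to a simplicial refinement and then quadrangulating its triangles, so as to obtain a \emph{regular} square cellulation $M_G^\Box$ homeomorphic to $M_G$ whose $1$-skeleton $M^{\Box,1}_G$ is a simple bipartite graph). Declare one part of the bipartition to be the qubit set $Q$; then every side of a square joins a qubit to a check vertex, so the two check corners of each square are opposite, and it remains to $2$-color the check vertices by $X$ and $Z$ so that each square is bichromatic among its two check corners. Setting $H_X$, resp.\ $H_Z$, to be the $Q$-$X$, resp.\ $Q$-$Z$, incidence matrix of $M^{\Box,1}_G$, the common support $\operatorname{supp}(x)\cap\operatorname{supp}(z)$ of an $X$-check and a $Z$-check is their set of common neighbours, and each square carrying $x$ and $z$ on its check-diagonal contributes exactly its two qubit corners, so every such overlap has even size and $H_XH_Z^T=0$; by Proposition~\ref{proposition valid induced subgraph} this defines a valid CSS code $C=(\mathbb{F}_2Z\to\mathbb{F}_2Q\to\mathbb{F}_2X)$, which is $\operatorname{V}$-type by design. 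For the Tanner cone-complex, $\mathcal{T}(C)=M^{\Box,1}_G$ is triangle-free (being bipartite), and --- once one arranges that any overlapping $X$-$Z$ pair is a check-diagonal of a common square --- every $3$-clique of the $1$-skeleton of $\mathcal{K}(C)$ is a half-square triangle $\{x,q,z\}$ and there is no $4$-clique, since four pairwise-adjacent vertices cannot all be of distinct type in every triple; as $\mathcal{K}(C)$ is the clique complex of its $1$-skeleton (Definition~\ref{definition Tanner cone-complex 1}), it is the diagonal subdivision of the squares of $M_G^\Box$, hence homeomorphic to $M_G^\Box$.

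The non-$\operatorname{E}$-type claim I would get from the following crisp obstruction: if a $3$-term complex is equivalent to the $\mathbb{F}_2$-cellular chain complex of a regular $2$-dimensional CW complex, or to its reverse, then its qubits are the $1$-cells, and the boundary map carrying a $1$-cell to its two endpoints appears as $H_X$ or as $H_Z$, forcing all columns of that matrix to have weight $2$; hence every $\operatorname{E}$-type code satisfies $q_X=2$ or $q_Z=2$. It thus suffices to build the cellulation so that some qubit vertex is adjacent in $M^{\Box,1}_G$ to at least three $X$-vertices and to at least three $Z$-vertices (e.g.\ a corner of at least six squares, around which the two check colors necessarily alternate), giving $q_X>2$ and $q_Z>2$; then $C$ is not $\operatorname{E}$-type over any $2$-complex, in particular not over a cellulation of $M_G$. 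Letting $G$ run over the infinite family completes the proof.

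The step I expect to be the real obstacle is to satisfy all of these demands at once: orthogonality keeps the $X$-$Z$ overlaps small (of size $2$, arising from single squares) and fidelity of $\mathcal{K}(C)$ forbids the spurious overlaps that would add triangles, while the $\operatorname{E}$-type obstruction requires the qubit column weights to exceed $2$. A fan quadrangulation from face-centers, with qubits on edge-midpoints, satisfies the first two conditions but produces $q_X,q_Z\le 2$, i.e.\ merely an ordinary topological code; so the genuine content of the lemma is the construction of the more intricate square cellulations achieving $q_X,q_Z>2$, which is carried out explicitly for three labelling schemes and concrete infinite families in the sections that follow.
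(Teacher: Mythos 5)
Your proposal matches the paper's strategy: the paper likewise proves the lemma by exhibiting the explicit families $\operatorname{VL}$, $\operatorname{VR}$, $\operatorname{VJ}$ in the sections that follow, relying on the same column-weight obstruction (an $\operatorname{E}$-type code must have a qubit-column weight equal to $2$ in $H_X$ or $H_Z$, so $q_X,q_Z>2$ rules it out) and on the observation that the Tanner cone-complex is the diagonal subdivision of the square cellulation $M_G^\Box$. Your additional generic justifications (orthogonality coming from the two qubit corners of each square, the tripartite structure excluding $4$-cliques) are correct and supply details the paper leaves implicit.
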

More precisely, the Tanner cone-complex of this code is simply obtained from $M_G^\Box$ be subdividing each square 2-cell into 2 triangular cells in the right way, rendering the new complex simplicial.\par

The difficulty to construct such code lies in the fact that the $\operatorname{V}$-type code should not be equal to any $\operatorname{E}$-type code over $M_G$. It is indeed always possible to find a square cellulation of a presentation complex and an assignment of checks and qubits to subsets of vertices, producing a valid CSS code. However, in most cases, the constructed quantum code is equal to an $\operatorname{E}$-type code. The proof of Lemma \ref{lemma V-type codes} consists of presenting specific groups for which the conditions are satisfied. For each of our case study, we first present these groups, labeled $L$, $R$ and $J$, from which we describe the presentation complex. We then define the codes $\operatorname{EL}$, $\operatorname{ER}$, $\operatorname{EJ}$ and $\operatorname{VL}$, $\operatorname{VR}$, $\operatorname{VJ}$, and some specific instances of lift, obtained by covering of their Tanner cone-complex.\par
To compute subgroups and quotient groups, we use the GAP package LINS \cite{GAP4}. To compute an upper bound on the distances $d_X $ and $d_Z$, we use the GAP package $\mathtt{QDistRnd}$ \cite{Pryadko2022}, which operates with a probabilistic method based on information sets.\par
 The criteria that we select to compare the performance of our codes is the quantity $\frac{kd^2}{n}$. For an HPC, it behaves asymptotically as $\Theta (n)$ when $k=\Theta(n)$ and $d=\Theta(\sqrt{n})$, while for a good code, it increases as $\Theta(n^2)$. For the rotated planar code \cite{Bombin_2007}, which is among the most considered codes for practical implementation, it is constant, with $k=1$ and $d^2/n=2$. 

\subsection{Example 1: one-relator group}

\subsubsection{E-type code}
\begin{figure}[t]
  \centering
  \includegraphics[scale=1]{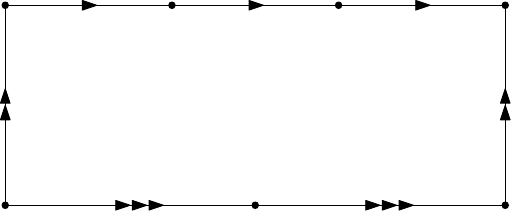}
  \caption{Presentation complex of $M_{2,3}$.}\label{Figure M_2,3}

\end{figure}

In this section, we consider a one-relator group given with presentation 
\[L(a,b):=\langle \: x,y \: |\:  x^a=y^b \:\rangle, \]
for $a,b\geq 2$. The associated presentation complex, labeled $M_{a,b}$, is displayed in Figure \ref{Figure M_2,3} for $a=2$ and $b=3$.  To be precise, instead of starting from the presentation complex (built from a bouquet of 2 circles) we use a complex where 2 circles are linked by an edge. This modification being minor, we still refer to it as the presentation complex.\par
This construction is discussed in Examples 1.24 and 1.35 of \cite{HatcherTopo}. Among its properties, we only mention that, when $a$, $b$ are coprime, $M_{a,b}$ is a deformation retract of the complement of the $(a,b)$-torus knot in $S^3$, and when $a$, $b$ are not coprime, $M_{a,b}$ cannot be embedded in ${\mathbb R}^3$.\par

We specify a regular cellulation of this topological space for each choice of $a,b$ and consider the code defined by its cellular chain complex.

\begin{figure}[t]
  \centering
  \includegraphics[scale=1]{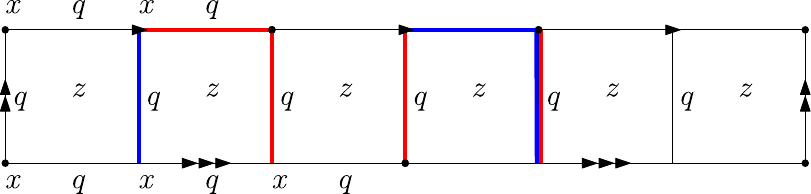}
  \caption{Square cellulation $M_{2,3}^\Box$ corresponding to $\operatorname{EL}(2,3,1)$. Qubits are on the edges, while $X$-checks on the vertices and $Z$-checks on the faces. Non-labeled cells can be inferred from the others. Red edges represent the support of a minimal weight logical $X$-operator, while blue edges represent the support of a minimal weight logical $Z$-operator.}\label{Figure EA(2,3)}

\end{figure}

\begin{definition}\label{Definition EL code}
    Let $2\leq a\leq b$ be integers, $g=\operatorname{gcd}(a,b)$, and $r\geq 1$ be an integer. We define the code $\operatorname{EL}(a,b,r)$ to be the cellular chain complex of the minimal cellulation $M^\Box_{a,b}$ with $r$ rows of $ab/g$ square faces if $g\neq a $, and $2b$ faces if $g=a$ (see Figure \ref{Figure EA(2,3)} for $r=1$). The corresponding chain complex is $
 \operatorname{EL}(a,b,r):=\mathbb{F}_2^{m_Z} \xrightarrow[]{\partial_2=H_Z^T}\mathbb{F}_2^{n} \xrightarrow[]{\partial_1=H_X} \mathbb{F}_2^{m_X}$, such that, if $g\neq a $ :
\begin{itemize}
    \item $n=rab/g+(r-1)ab/g+a/g+b/g$,
    \item $m_X=(r-1)ab/g+a/g+b/g$,
    \item $m_Z=rab/g$.

\end{itemize} 
If $g=a$ :
\begin{itemize}
    \item $n=2+2b/g+2rb+2(r-1)b$,
    \item $m_X=2+2b/g+2(r-1)b$,
    \item $m_Z=2rb$.

    \end{itemize}
In both cases:
\begin{itemize}
    \item     $w_X=b+2$,
    \item $w_Z=4$,
    \item $q_X=2$,
    \item $q_Z=b$.
    \end{itemize}
\end{definition}
In this work, as we only consider the case $r=1$, we simply omit this parameter and note the code $\operatorname{EL}(a,b)$. Its presence in Definition \ref{Definition EL code} is merely intended to understand how cells are counted.

\begin{lemma}\label{lemma dimension EL}The dimension of $\operatorname{EL}(a,b)$ is $k=2$ when $\operatorname{gcd}(a,b)$ is even, and $k=1$ otherwise.
\end{lemma}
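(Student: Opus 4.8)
The plan is to compute the dimension $k = \dim H_1(\operatorname{EL}(a,b))$ directly from the cellular chain complex of the square cellulation $M^\Box_{a,b}$, exploiting the fact that $M_{a,b}$ is homotopy equivalent to the presentation complex of $L(a,b) = \langle x,y \mid x^a = y^b\rangle$. Since dimension is a homotopy (indeed homology) invariant and passes through the refinement to a regular square cellulation unchanged, I would first replace $\operatorname{EL}(a,b)$ by the two-cell presentation complex $M_{a,b}$ itself: one vertex (or two vertices joined by an edge, which is homotopy equivalent to one vertex), the generating edges $x, y$, and a single $2$-cell attached along the word $x^a y^{-b}$. Over $\mathbb{F}_2$ the attaching word becomes $x^a y^b$, and orientations are irrelevant.

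The key computation is then the $\mathbb{F}_2$-cellular chain complex $\mathbb{F}_2 \xrightarrow{\partial_2} \mathbb{F}_2^2 \xrightarrow{\partial_1} \mathbb{F}_2$ (after collapsing the connecting edge, so $C_0 = \mathbb{F}_2$, $C_1 = \mathbb{F}_2\langle x,y\rangle$, $C_2 = \mathbb{F}_2\langle f\rangle$). The map $\partial_1$ is zero since every loop is a cycle at the single vertex, so $Z_1 = \mathbb{F}_2^2$. The map $\partial_2$ sends the face $f$ to the mod-$2$ abelianization of its boundary word $x^a y^b$, i.e. $\partial_2 f = (a \bmod 2)\, x + (b \bmod 2)\, y$. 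Hence $\operatorname{Im}\partial_2$ is $0$ if both $a$ and $b$ are even, and $1$-dimensional otherwise. Therefore $k = \dim H_1 = 2 - \dim\operatorname{Im}\partial_2$ equals $2$ when $a,b$ are both even and $1$ when at least one of $a,b$ is odd. Finally I would observe that $\gcd(a,b)$ is even precisely when both $a$ and $b$ are even, and odd precisely when at least one of them is odd, which rephrases the case split exactly as stated: $k = 2$ when $\gcd(a,b)$ is even, $k=1$ otherwise. (One also checks $H_0 = \mathbb{F}_2$, consistent with connectedness, and that $H_2 = 0$ when the relator word is nontrivial mod $2$, so the CSS-code interpretation via Proposition \ref{Proposition 3term complex} gives the same $k$.)

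The one point requiring a word of care — and the main thing to get right rather than the main obstacle — is that the refinement from $M_{a,b}$ to the regular square cellulation $M^\Box_{a,b}$ in Definition \ref{Definition EL code} does not change the homology: it is a subdivision of the same topological space (with the two cases $g = a$ versus $g \neq a$ in the definition corresponding to different but homotopy-equivalent triangulation conventions), and cellular homology is a topological invariant, so $H_1(\operatorname{EL}(a,b)) \cong H_1(M_{a,b}) \cong L(a,b)^{\mathrm{ab}} \otimes \mathbb{F}_2$. An alternative, fully self-contained route that avoids invoking invariance of cellular homology is to compute $\operatorname{rank} H_X$ and $\operatorname{rank} H_Z$ directly for the explicit matrices of $M^\Box_{a,b}$ and use $k = n - \operatorname{rank} H_X - \operatorname{rank} H_Z$ from Definition \ref{Def_CSS stabilizer code}; this is a routine rank count using the cell counts listed in Definition \ref{Definition EL code}, and it yields the same answer. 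I expect the homotopy-invariance argument to be the cleanest to write, with the explicit rank count available as a sanity check.
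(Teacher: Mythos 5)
Your proof is correct, and it reaches the stated dichotomy by a genuinely more elementary route than the paper. The paper computes the integral homology first: it applies the Hurewicz theorem to get $H_1(M_{a,b},\mathbb{Z})\cong L(a,b)^{\mathrm{ab}}\cong \mathbb{Z}\oplus\mathbb{Z}_g$ (by simplifying the abelianized presentation with the substitution $t=x^{a/g}y^{-b/g}$, $s=x$), and then invokes the universal coefficient theorem, checking that the Tor term vanishes, to conclude $H_1(M_{a,b},\mathbb{F}_2)\cong(\mathbb{Z}\oplus\mathbb{Z}_g)\otimes_{\mathbb{Z}}\mathbb{F}_2$, which is $\mathbb{F}_2^2$ or $\mathbb{F}_2$ according to the parity of $g$. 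You instead work directly over $\mathbb{F}_2$ with the two-cell presentation complex, where $\partial_1=0$ and $\partial_2 f=(a\bmod 2)\,x+(b\bmod 2)\,y$, so the whole lemma reduces to the observation that $\gcd(a,b)$ is even iff $a$ and $b$ are both even. Your version is shorter and avoids Hurewicz and the universal coefficient theorem entirely; what the paper's route buys is the integral answer $\mathbb{Z}\oplus\mathbb{Z}_g$ (hence the torsion information) and, more importantly, a template that the paper immediately reuses to compute the dimension of the lifted codes $\operatorname{EL}(a,b)_H$ via $H_1(M_{a,b,H},\mathbb{Z})\cong H/[H,H]$ followed by the same coefficient change — a situation where your direct chain computation would be much less convenient. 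Both arguments share the same implicit step, which you rightly flag: the cellular homology of the square cellulation $M^\Box_{a,b}$ agrees with that of the presentation complex because subdivision does not change the homeomorphism type.
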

\begin{proof}
The highest nontrivial homology group is $H_1(M_{a,b},\mathbb{Z})={\mathbb Z} \oplus {\mathbb Z}_g$, with $g=\operatorname{gcd}(a,b)$. This comes from the Hurewicz theorem, Theorem \ref{theorem Hurewicz}. Indeed, a presentation of the abelianization of $L(a,b)$ is $L^{\operatorname{Ab}}(a,b)=\langle \:x,y \:|\: x^a=y^b,xy=yx\:\rangle$, which can be shown, in a few steps, to be equivalent to $\langle \:s,t \:|\: t^g=1,st=ts\:\rangle$, by specifying $t=x^{a/g}y^{-b/g}$ and $s=x$.\par

The dimension $k$ corresponds to the rank of $H_1(M_{a,b},\mathbb{F}_2)$. We compute this homology, starting from $H_1(M_{a,b},\mathbb{Z})$. The universal coefficient theorem states the existence of the short exact sequence \[0\to H_1(M_{a,b},\mathbb{Z})\otimes_\mathbb{Z} \mathbb{F}_2\to H_1(M_{a,b},\mathbb{F}_2)\to \operatorname{Tor}_1(H_0(M_{a,b},\mathbb{Z}),\mathbb F_2)\to 0.\]The Torsion term is defined as $H_1(E \otimes_\mathbb Z \mathbb{F}_2)$, where $E_\bullet \xrightarrow{\epsilon} H_0(M_{a,b},\mathbb{Z})$ is a free (left) resolution of $H_0(M_{a,b},\mathbb{Z})=\mathbb Z$, the number of connected components of $M_{a,b}$. This 0-th homology group being free, the resolution can be taken to be $0\to E_0\xrightarrow{\epsilon}\mathbb Z\to 0$, with $E_0=\mathbb Z$ and $\epsilon$ is an isomorphism, implying that $E \otimes \mathbb F _2\cong \mathbb F _2$. This chain complex has only trivial maps and therefore no homology, rendering the torsion term trivial, which implies $H_1(M_{a,b},\mathbb{Z})\otimes_\mathbb Z \mathbb F_2 \cong H_1(M_{a,b},\mathbb{F} _2)$. Tensor product being distributive over direct sum of modules, and given that for any two integer $i,j$, $\mathbb Z_i \otimes_\mathbb Z \mathbb Z_j\cong \mathbb Z_{\operatorname{gcd}(i,j)}$, we have $H_1(M_{a,b},\mathbb{F}_2)=\mathbb F_2$ if $g$ odd, and $H_1(M_{a,b},\mathbb{F}_2)=\mathbb F_2^2$ if $g$ even.
\end{proof}

Let $H$ be a finite index subgroup of $L(a,b)$. We define the code $\operatorname{EL}(a,b)_H$ as the $H$-lift of $\operatorname{EL}(a,b)$. Its dimension can be determined by applying Theorem \ref{theorem Hurewicz}, namely we start from $H_1(M_{a,b,H},\mathbb Z)\cong H/[H,H]$, with $M_{a,b,H}$ the $H$-cover of $M_{a,b}$. Then, as in the proof of Lemma \ref{lemma dimension EL}, the universal coefficient theorem yields $H_1(M_{a,b,H},\mathbb{Z})\otimes_\mathbb Z \mathbb F_2 \cong H_1(M_{a,b,H},\mathbb{F} _2)$.\par

There isn't, to our knowledge, a specific mean of calculating the distance of a lifted code using geometric properties of the presentation complex. For that reason, we adopt a numerical approach and generate all lifts of degree 1 to 59 for $3\leq a\leq b\leq 7$. Table \ref{Table EL} gathers a selection of the lifts with the highest values of $kd^2/n$, although not exceeding 1.8. For $a=b=6$, the set of lifts that we report has constant dimension but distance increasing with the length of the code. Most of the other lifts that we haven't mentioned here had highly unbalanced $X$ and $Z$ distances, leading to a poor $kd^2/n$. 

\begin{table}[]
\centering 
\begin{tabular}{l|l|l|l|l|l|l}

$a$& $b$ &$W$& $[\pi_1(\mathcal K ):H]$& $\pi_1(\mathcal K )/H$& $[[n,k,(d_X,d_Z)]]$ & $kd^2/n$ \\ \hline \hline  
  3&   3 &5&    1     &   $\{e\}$&                     $[[10,1,(2,2)]]$&          0.4\\
  &    &&         12&       $A_4$ &                     $[[120,6,(6,6)]$&          1.8\\
 &  && 21& $\mathbb{Z}_7\rtimes\mathbb{Z}_3$& $[[210, 9, (6,6)]]$&1.5\\  
  &    &&         24&       $SL(2,3)$&                     $[[240, 10, (6,6)]]$&          1.5\\ \hline  
  3&   4 &6&         1&       $\{e\}$&                     $[[19,1,(7,3)]]$&          0.5\\ 
  &    &&         36&       $(\mathbb{Z}_3 \times \mathbb{Z}_3)\rtimes\mathbb{Z}_4$&                     $[[684,17,(12,8)]]$&          1.6\\ 
  &    &&         48&       $A_4\rtimes\mathbb{Z}_4$&                     $[[912,22,(12,8)]]$&          1.5\\ \hline  
   6&   6 &8&         1&       $\{e\}$&                     $[[16,2,(2,2)]]$&          0.5\\   
  &    &&         8&       $Q_8$&                     $[[128,2,(8,8)]]$&          1\\
 &  && 12& $\mathbb{Z}_3\rtimes\mathbb{Z}_4$& $[[192,2,(12,10)]]$&1\\ 
  &    &&         32&       $\mathbb{Z}_4.D_8$&                     $[[512,2,(16,16)]]$&          1\\  
  &    &&         40&       $\mathbb{Z}_5\rtimes \mathbb{Z}_8$&                     $[[640,2,(20,18)]]$&          1\\ 
\end{tabular}
\caption{Parameters of lifted codes $\operatorname{VL}(a,b)_H$, for multiple choices of $a,b$. We note $W:=\max (w_X,w_Z,q_X,q_Z)$. A group written $G'.G/G'$ refers to a non split extension of $G/G'$ by $G'$. Numerical results of distance correspond to upper bounds obtained using the GAP package $\mathtt{QDistRnd}$ \cite{Pryadko2022}.}\label{Table EL}
\end{table}

\subsubsection{V-type code}

In this section, we modify the topological construction $\operatorname{EL}(a,b)$ into a new abstract CSS code, i.e. one for which checks and qubits are no longer indexed by the dimension of topological cells.

\begin{definition}\label{Definition VL code}
    Let $2\leq a \leq b$ be integers, $g=\operatorname{gcd}(a,b)$ and $r\geq 1$ be an integer. Consider a square cellulation $M^\Box_{a,b}$ with $2r+1$ rows of 2-cells, and number of vertices $|V|=(2a +2b +4rab)/g$ if $g\neq a$, and $|V|=4+4b/a +8rb$ if $g=a$. We define the code $\operatorname{VL}(a,b,r)$ by assigning both qubits and checks to the vertices of $M^\Box_{a,b}$ in an alternating way, as shown in Figure \ref{Figure VA(2,3)},  together with the identification $\mathcal{T}(\operatorname{VL}(a,b))= M^{\Box,1}_{a,b}$. The corresponding chain complex is $
 \operatorname{VL}(a,b,r):=\mathbb{F}_2^{m_Z} \xrightarrow[]{H_Z^T}\mathbb{F}_2^{n} \xrightarrow[]{H_X} \mathbb{F}_2^{m_X}$,  such that, if $g\neq a$ :
\begin{itemize}
    \item $n=(a+b+2rab)/g$,
    \item $m_X=(b+rab)/g$,
    \item $m_Z=(a+rab)/g$.
\end{itemize}
If $g=a$ :
\begin{itemize}
    \item $n=2+2b/a+4rb$,
    \item $m_X=2b/a+2rb$,
    \item $m_Z=2+2rb$.
\end{itemize}
In both cases :
\begin{itemize}
    \item $w_X=2+a$,
    \item $w_Z=2+b$,
    \item $q_X=b$,
    \item $q_Z=a$.
\end{itemize}
\end{definition}
\begin{figure}[t]
  \centering
  \includegraphics[scale=1]{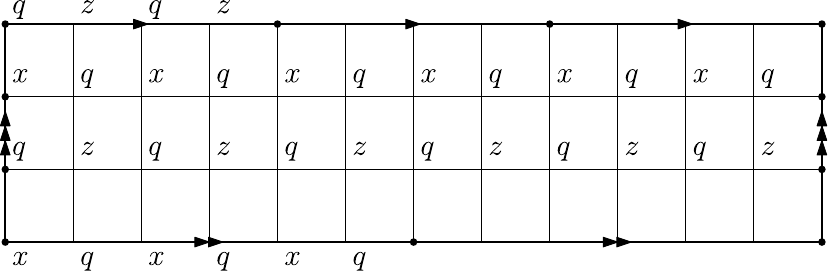}
  \caption{Square cellulation $M^\Box_{2,3}$ corresponding to $\operatorname{VL}(2,3,1)$. Qubits are on the vertices, alternating with $X$-checks and $Z$-checks. Each vertex labeled $q,x,z$ represents a different basis vector. Non labeled vertices can be inferred from labeled ones.}\label{Figure VA(2,3)}
\end{figure}
For $a=2$, respectively $b=2$, the code is topological and $X$-checks, respectively $Z$-checks, can be viewed as faces. Apart from these choices, there are, a priori, no 2D cellular chain complex isomorphic to this code, since the (co)-boundary of a qubit has a weight greater than 2.\par

We emphasize the existence of a code having the following duality property.

\begin{lemma}
    For $a=b$, there is an isomorphism $\operatorname{VL}(a,a,r)\cong\operatorname{VL}^*(a,a,r)$.
\end{lemma}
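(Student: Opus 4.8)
The plan is to exhibit an explicit isomorphism of chain complexes by swapping the roles of $X$- and $Z$-checks, using the fact that for $a=b$ the combinatorial data defining $\operatorname{VL}(a,a,r)$ is symmetric under the interchange $x\leftrightarrow z$. Recall that $\operatorname{VL}(a,a,r)$ is the chain complex $\mathbb{F}_2^{m_Z}\xrightarrow{H_Z^T}\mathbb{F}_2^{n}\xrightarrow{H_X}\mathbb{F}_2^{m_X}$, and by Definition \ref{Definition VL code}, when $a=b$ we have $m_X=m_Z=(a+ra^2)/g$ (here $g=\gcd(a,a)=a$, so in fact $m_X=m_Z=1+ra$ from the $g=a$ branch, and in either reading the two numbers coincide). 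The transposed complex $\operatorname{VL}^*(a,a,r)$ is, by the convention set in Section \ref{section quantum CSS codes}, the dual chain complex $\mathbb{F}_2^{m_X}\xrightarrow{H_X^T}\mathbb{F}_2^{n}\xrightarrow{H_Z}\mathbb{F}_2^{m_Z}$, i.e.\ the same code with the two boundary maps $\partial_1,\partial_2$ interchanged (up to transpose). So the claim is equivalent to producing a basis-permutation that conjugates $(H_Z^T,H_X)$ into $(H_X^T,H_Z)$.

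First I would make the symmetry of the square cellulation $M^\Box_{a,a}$ explicit: since $a=b$, the cellulation in Figure \ref{Figure VA(2,3)} admits a combinatorial automorphism $\sigma$ (a reflection, or a glide, of the strip of $2r+1$ rows of squares) that exchanges the set of vertices carrying $X$-checks with the set carrying $Z$-checks, sends the set of qubit-vertices to itself, and preserves the adjacency (1-skeleton) of the complex. Concretely one checks that the alternating assignment $q,x,z$ along rows is invariant under the half-turn (or the appropriate reflection) precisely because the two "arms" of the presentation complex of $L(a,b)$ have the same length $a$. I would then record $\sigma$ as a triple of bijections $\sigma_X: X\to Z$, $\sigma_Q:Q\to Q$, $\sigma_Z:Z\to X$ compatible with the incidence relations $q\in\operatorname{supp}(x)\iff \sigma_Q(q)\in\operatorname{supp}(\sigma_X(x))$, and similarly for $Z$-checks. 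These compatibilities say exactly that the permutation matrices $P_X,P_Q,P_Z$ of $\sigma_X,\sigma_Q,\sigma_Z$ satisfy $P_Z H_X P_Q^{-1}=H_Z$ and $P_Q H_Z^T P_X^{-1}=H_X^T$ (with $\mathbb{F}_2$ coefficients $P^{-1}=P^T$). This is precisely a chain isomorphism $\operatorname{VL}(a,a,r)\xrightarrow{\sim}\operatorname{VL}^*(a,a,r)$.

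The cleanest way to organize this is via the Tanner cone-complex: by Lemma \ref{lemma V-type codes}, $\mathcal{K}(\operatorname{VL}(a,a,r))$ is homeomorphic to $M^\Box_{a,a}$ with each square subdivided into two triangles, and the role of the extra $E_{XZ}$-edges and triangular faces is symmetric in $X$ and $Z$ by Definition \ref{definition Tanner cone-complex 1}. A simplicial automorphism of $\mathcal{K}$ that swaps the $X$- and $Z$-vertex classes and fixes the $Q$-class therefore induces the desired code isomorphism, and by Proposition \ref{Proposition Tanner cone-complex} (which builds $\mathcal{K}$ symmetrically from $\mathcal{T}(C_X)$ and $\mathcal{T}(C_Z)$) it suffices to produce such an automorphism of the underlying $M^\Box_{a,a}$ — which is exactly the reflection/half-turn of the strip that exists once the two arms have equal length. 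I expect the main obstacle to be purely bookkeeping: one must choose the indexing of the $2r+1$ rows and of the vertices within each row so that the alternation of the labels $q,x,z$ is genuinely carried to $q,z,x$ by the geometric symmetry, and verify that the two "cap" cells closing off the strip (the $+2$ and $+2b/a$ contributions in the $g=a$ vertex count) are also exchanged correctly. I would handle this by writing the vertex set as an explicit product-like index set and checking the incidence relations coordinate-wise, but would not grind through it in full here.
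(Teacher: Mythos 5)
Your proposal is correct and follows essentially the same route as the paper: the paper's proof also exhibits a combinatorial symmetry of the square cellulation — explicitly, a horizontal translation of the vertex labels by one position composed with a reflection about the central horizontal line — which swaps the $x$ and $z$ labels while permuting the qubit vertices, giving permutation-equivalence with the dual complex. Your "glide" guess is the right one (a pure reflection alone does not respect the $q,x,z$ alternation, hence the extra shift), and the paper leaves the incidence bookkeeping at the same level of detail as you do.
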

\begin{proof}
The code has the same number of qubits, $X$ and $Z$-checks. The isomorphism can be obtained by starting from $\operatorname{VL}(a,a,r)$, by translating all the vertex labels ($q,x,$ and $z$) horizontally by one position on the grid, and reflecting them with respect to the central horizontal line (which splits the sets of qubits into two sets of equal size). The obtained code is permutation-equivalent to the dual complex.
\end{proof}
In this work, as we only consider the case $r=1$, we simply omit this parameter and note the code $\operatorname{VL}(a,b)$. It is specifically design to verify the following lemma.
\begin{lemma}
There is an isomorphism $\pi_1(\mathcal{K}(\text{VL}(a,b)))\cong L(a,b)$. 
\end{lemma}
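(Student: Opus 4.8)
The plan is to identify $\mathcal{K}(\operatorname{VL}(a,b))$ with the presentation complex of $L(a,b)$ up to homeomorphism, and then invoke Lemma \ref{lemma Presentation complex}. By Definition \ref{Definition VL code} the Tanner graph of $\operatorname{VL}(a,b)$ is the $1$-skeleton $M^{\Box,1}_{a,b}$ of a square cellulation of $M_{a,b}$, with qubits, $X$-checks and $Z$-checks sitting on its vertices in the alternating pattern of Figure \ref{Figure VA(2,3)}. As recorded after Lemma \ref{lemma V-type codes}, for such a $\operatorname{V}$-type code the Tanner cone-complex is obtained from $M^\Box_{a,b}$ by cutting each square $2$-cell into two triangles along the edge joining the $X$-check and the $Z$-check at opposite corners of that square. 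First I would verify the three incidence statements that make this precise against the cellulation of Definition \ref{Definition VL code}: that in the alternating pattern two checks $x\in X$, $z\in Z$ share a qubit in their support exactly when they are opposite corners of a common square face (so the set $E_{XZ}$ of Definition \ref{definition Tanner cone-complex 1} is exactly the set of diagonals, with no spurious $X$--$Z$ edges), and that the triangular faces this produces are exactly the elements of the face set $F$. This bookkeeping — ruling out extra $X$--$Z$ edges and checking that every square receives exactly one diagonal — is the content of ``in the right way'' in Lemma \ref{lemma V-type codes} and is the only genuinely non-automatic step.

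Granting this, $\mathcal{K}(\operatorname{VL}(a,b))$ is a simplicial subdivision of the square cellulation $M^\Box_{a,b}$; in particular its underlying topological space coincides with that of $M^\Box_{a,b}$, which is $M_{a,b}$. Since passing to a subdivision does not change the homotopy type, $\pi_1(\mathcal{K}(\operatorname{VL}(a,b)))\cong\pi_1(M_{a,b})$. By Lemma \ref{lemma Presentation complex} the presentation complex of $L(a,b)=\langle x,y\mid x^a=y^b\rangle$ has fundamental group $L(a,b)$; the minor modification used throughout this section — linking the two circles by an edge rather than taking their wedge — is a homotopy equivalence (contract that edge) and leaves $\pi_1$ unchanged. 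Composing the two isomorphisms gives $\pi_1(\mathcal{K}(\operatorname{VL}(a,b)))\cong L(a,b)$.

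If one prefers an argument not routed through the ambient space $M_{a,b}$, the same conclusion can be reached combinatorially: $M^{\Box,1}_{a,b}$ is a graph, so $\pi_1(\mathcal{K}(\operatorname{VL}(a,b)))$ is the free group of rank $\#E-\#V+1$ on its edges modulo the normal closure of the loops bounding the triangular faces (two triangles glued along a diagonal attach exactly as the single square $2$-cell, so the square relations and the triangle relations define the same quotient). Collapsing a maximal spanning tree leaves exactly the two generators $x,y$ reading off the two ``circles'' of the complex, and the $a$ (resp.\ $b$) square faces stacked around those loops telescope to the single relation $x^a=y^b$, by the face counts built into Definition \ref{Definition VL code}. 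Either route reduces the proof to that same combinatorial check, which I expect to be the main obstacle; everything else is standard covering-space/van Kampen material already available in the excerpt.
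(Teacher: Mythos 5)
Your proposal is correct and takes essentially the same route as the paper, which simply observes that the Tanner cone-complex is a triangulation of $M^\Box_{a,b}$ and hence has the fundamental group of the presentation complex, namely $L(a,b)$. Your additional bookkeeping (checking that $E_{XZ}$ consists exactly of one diagonal per square face, with no spurious $X$--$Z$ edges) is precisely the detail the paper's ``this is direct'' elides, and the van Kampen variant is an equivalent reformulation rather than a different argument.
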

\begin{proof}
    This is direct: the Tanner cone-complex corresponds to a triangulation of $M^\Box_{a,b}$.
\end{proof}
Because this code is not a cellular chain complex, the lift presented in Section \ref{section Lift of a quantum CSS code} is a particularly effective method to build codes of increasing size from a single input code. \par 
As before, for $H$ a (finite index) subgroup of $\pi_1(\mathcal{K})$, we define the code $\operatorname{VL}(a,b)_H$ as the $H$-lift of $\operatorname{VL}(a,b)$. We generate numerically the lifts of degree 1 to 59 for $3\leq a\leq b \leq 7$ and report parameters of input codes and lifted codes in Table \ref{Table VL}. Notice that the dimension of the input codes behaves differently from the one of the $\operatorname{EL}$-codes, which always have $k>0$ as shown in Lemma \ref{lemma dimension EL}. Moreover, although the dimension of the input codes can vanish, this can change in the lifted codes\footnote{Recall that our convention is to set $d=0$ when $k=0$.}. Overall, the parameters of the lifts here are greater than the ones of the $\operatorname{EL}$-codes, which had $kd^2/n$ not exceeding 1.8, and for several input codes we observe a noteworthy increase of this quantity.

\begin{table}[]
\centering 
\begin{tabular}{l|l|l|l|l|l|l}

$a$& $b$&  $W$&$[\pi_1(\mathcal K ):H]$& $\pi_1(\mathcal K ):/H$& $[[n,k,(d_X,d_Z)]]$ & $kd^2/n$ \\ \hline \hline  
  3&   3&     5&1     &   $\{e\}$&                     $[[16,0,0]]$&          0\\
  &   &          &9&       $\mathbb{Z}_3\times\mathbb{Z}_3$&                     $[[144,4,(6,6)]$&          1\\  
  &   &          &27&       $\mathbb{Z}_9\rtimes\mathbb{Z}_3$&                     $[[432,4,(18,18)]]$&          3\\ \hline  
  3&   4&          6&1&       $\{e\}$&                     $[[31,1,(3,7)]]$&          0.3\\ 
  &   &          &48&       $SL(2,3).\mathbb{Z}_2$&                     $[[1488,6,(24,24)]]$&          2.3\\ 
  &   &          &48&       $GL(2,3)$&                     $[[1488,6,(24,24)]]$&          2.3\\ \hline  
   4&   4&          6&1&       $\{e\}$&                     $[[20,2,(2,2)]]$&          0.4\\   
  &   &          &32&       $(\mathbb{Z}_2\times \mathbb{Z}_2). (\mathbb{Z}_4\times\mathbb{Z}_2)$&                     $[[640,6,(16,16)]]$&          2.4\\
 & &  &32& $(\mathbb{Z}_8\rtimes\mathbb{Z}_2)\rtimes \mathbb{Z}_2$& $[[640,6,(16,16)]]$&2.4 \\\hline   
  4&   5&          7&1&       $\{e\}$&                     $[[49,1,(7,5)]]$&          0.5\\  
  &   &          &20&       $\mathbb{Z}_5\rtimes \mathbb{Z}_4$&                     $[[980,7,(20,22)]]$&          2.9\\  
  &   &          &40&       $\mathbb{Z}_5\rtimes \mathbb{Z}_8$&                     $[[1960,7,(40,42)]]$&          5.7\\ \hline
 4& 6&  8&1& $\{e\}$& $[[29,2,(2,3)]]$&0.3\\
 & &  &48& $(\mathbb{Z}_3\times Q_8)\rtimes \mathbb{Z}_2$& $[[1392,8,(20,24)]]$&2.3\\ \hline  
   6&   6&          8&1&       $\{e\}$&                     $[[28,2,(2,2)]]$&          0.3\\  
  &   &          &24&       $\mathbb{Z}_3\times Q_8$&                     $[[672,2,(24,26)]]$&          1.7\\ \hline
 6& 7&  9&1& $\{e\}$& $[[97,1,(10,7)]]$&0.5\\ 
 & &  &42& $\mathbb{Z}_7\rtimes \mathbb{Z}_6$& $[[4074, 11 , (50,50)]]$&6.8\\ 
\end{tabular}
\caption{Parameters of lifted codes $\operatorname{VL}(a,b)_H$, for multiple choices of $a,b$. We note $W:=\max (w_X,w_Z,q_X,q_Z)$. A group written $G'.G/G'$ refers to a non split extension of $G/G'$ by $G'$. Numerical results of distance correspond to upper bounds obtained using the GAP package $\mathtt{QDistRnd}$ \cite{Pryadko2022}.}\label{Table VL}
\end{table}

\subsection{Example 2: group presentation with more relations}\label{Example 2: group presentation with more relations}

\subsubsection{E-type code}\label{section E-type code Example 2}

In this section, we consider a generalization of the previous codes. This is associated to the group given with presentation
\[
R(a,b,c,d):=\langle x_1,\dots,x_{c},y_1,\dots y_{d}\:|\: x_1^a=\dots=x_{c}^a=y_1^b=\dots=y_{d}^b\:\rangle,\]
for integers $a,b\geq 2$ and $c,d\geq 1$. For $c=d=1$, we have $R(a,b,1,1)=L(a,b)$. The presentation complex $M_{a,b,c,d}$, for the code $R(3,2,2,3)$, is depicted in Figure \ref{Figure PCJ(2,3)}.
\begin{figure}[t]
  \centering
  \includegraphics[scale=1]{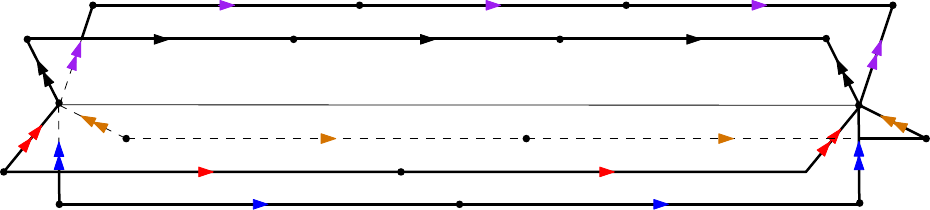}
  \caption{Presentation complex of $R(3,2,2,3)$. Sides decorated with arrows of the same color and type are identified.}\label{Figure PCJ(2,3)}

\end{figure}
Using a square cellulation for this complex, we define the following topological code.
\begin{definition}\label{Definition ER code}
Let $2\leq a \leq b$, $1\leq c,d$ be integers, $g=\operatorname{gcd}(a,b)$. Consider the square cellulation $M^\Box_{a,b,c,d}$, for which each 2-cell of $M_{a,b,c,d}$ is subdivided into a row of $ab/g$ square faces if $g\neq a$, and $2b$ square faces if $g=a$. We define the code $\operatorname{ER}(a,b,c,d)$ to be the cellular chain complex of $M^\Box_{a,b,c,d}$. The corresponding chain complex is 
$\operatorname{ER}(a,b,c,d):=\mathbb{F}_2^{m_Z} \xrightarrow[]{\partial_2=H_Z^T}\mathbb{F}_2^{n} \xrightarrow[]{\partial_1=H_X} \mathbb{F}_2^{m_X}$, such that, if $g\neq a $ :
\begin{itemize}
    \item $n=ab/g+ad/g+bc/g+(c+d)ab/g$,
    \item $m_X=ab/g+ad/g+bc/g$,
    \item $m_Z=(c+d)ab/g$.

\end{itemize} 
If $g=a$ :
\begin{itemize}
    \item $n=2d+2bc/g+2b(c+d)$,
    \item $m_X=2d+2bc/g$,
    \item $m_Z=2b(c+d)$.

    \end{itemize}
In both cases:
\begin{itemize}
    \item     $w_X=\max (b+2,c+d+2)$,
    \item $w_Z=4$,
    \item $q_X=2$,
    \item $q_Z=\max (b, c+d)$.
\end{itemize}
\end{definition}
Note that here, in contrast to $\operatorname{EL}(a,b)$, we only consider the case $r=1$ (number of rows of 2-cells). This is because otherwise the number of qubits would increase linearly in both $r$ and $c+d$.

\begin{remark}
We could consider even more general groups, for which each generator $x_i,y_j$ has its own independent exponent, $a_i,b_j$. Both for simplicity and economy of qubits, we reduce the possible choices by setting $a_i=a$ and $b_j=b$. Indeed, otherwise the number of qubits would scale as $\text{lcm}(a_1,\dots,a_c,b_1,\dots,b_d)$.
\end{remark}

\subsubsection{V-type code: symmetric version} \label{section V-type code: symmetric version}

There are two ways to define an abstract code from the presentation complex of $
R(a,b,c,d)$. In this section, we consider a version that we refer to as symmetric, because under specific choices of integers $a,b,c,d$, the chain complex of the code can be made isomorphic to its dual.

\begin{definition}\label{Definition VR code}
Let $2\leq a \leq b$, $1\leq c,d$ be integers and $g=\operatorname{gcd}(a,b)$. Consider the square cellulation $M^\Box_{a,b,c,d}$, constructed by starting with the one of Definition \ref{Definition ER code}, to which we add a row of square 2-cells in between the faces coming from relations $x_i^a=1$ and the one coming from relations $y_j^b=1$. We define the code $\operatorname{VR}(a,b,c,d)$ by assigning both qubits and checks to the vertices of $M^\Box_{a,b,c,d}$ in an alternating way as shown in Figure \ref{Figure VR(3,3)}, together with the identification $\mathcal{T}(\operatorname{VR}(a,b,c,d))= M^{\Box,1}_{a,b,c,d}$.
The corresponding chain complex is $ \operatorname{VR}(a,b,c,d):=\mathbb{F}_2^{m_Z} \xrightarrow[]{H_Z^T}\mathbb{F}_2^{n} \xrightarrow[]{H_X} \mathbb{F}_2^{m_X}$, such that, if $g\neq a$:
\begin{itemize}
    \item $n=(ad+bc+2ab)/g$,
    \item $m_X=(bc+ab)/g$,
    \item $m_Z=(ad+ab)/g$.
\end{itemize}
If $g=a$:
\begin{itemize}
    \item $n=2d+2bc/a+4b$,
    \item $m_X=2bc/a+2b$,
    \item $m_Z=2d+2b$.
\end{itemize}
In both cases:
\begin{itemize}
    \item $w_X=\max(a+2,d+3)$,
    \item $w_Z=\max(b+2,c+3)$,
    \item $q_X=\max(b,c+1)$,
    \item $q_Z=\max(a,d+1)$.
\end{itemize}
\end{definition}

\begin{figure}[t]
  \centering
  \includegraphics[scale=1]{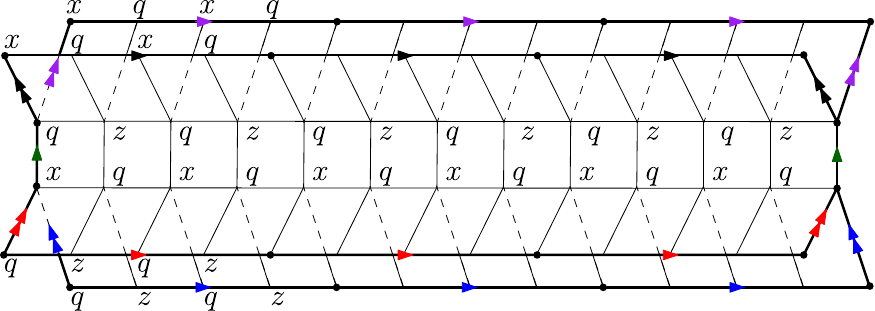}
  \caption{Square cellulation of $M_{2,2,3,3}$ corresponding to $\operatorname{VR}(2,2,3,3)$.  Sides decorated with arrows of the same color and type are identified.}\label{Figure VR(3,3)}

\end{figure}
Among these constructions, we emphasize the existence of the codes $\operatorname{VR}(a,b,c,bc/a)$ when $a|bc$, such as $\operatorname{VR}(a,b,a,b)$, which have the same number of $X$ and $Z$-checks. Another code of interest has the following duality property.

\begin{lemma}
For $a=b$ and $c=d$, we have $\operatorname{VR}(a,a,c,c)\cong \operatorname{VR}^*(a,a,c,c) $.
\end{lemma}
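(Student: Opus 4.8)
The plan is to exhibit an explicit permutation of the vertex set of the square cellulation $M^\Box_{a,a,c,c}$ that intertwines the roles of qubits, $X$-checks and $Z$-checks, in the same spirit as the duality proofs for $\operatorname{VL}(a,a,r)$ and $\operatorname{VR}(a,b,a,b)$ above. First I would observe that when $a=b$ and $c=d$, Definition \ref{Definition VR code} gives $n = (a c + a c + 2 a^2)/a = 2c + 2a$ for $g=a$-type counting, and crucially $m_X = 2bc/a + 2b = 2c + 2a$-adjusted so that $m_X = m_Z$: the relator system $x_1^a = \cdots = x_c^a = y_1^a = \cdots = y_c^a$ is now symmetric under swapping the $x$-block with the $y$-block, and this symmetry is realized geometrically by a reflection of the complex $M^\Box_{a,a,c,c}$ that exchanges the two halves (the faces coming from the $x_i^a$ relations with those coming from the $y_j^a$ relations). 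The key step is to check that this reflection, possibly composed with the ``translate labels horizontally by one position'' move used in the $\operatorname{VL}$ duality lemma, carries the alternating $q/x/z$ labelling of the vertices to the labelling with $x$ and $z$ interchanged while fixing the $q$-labels setwise.

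Concretely, I would set up the combinatorial isomorphism as follows. The 1-skeleton $M^{\Box,1}_{a,a,c,c}$ is identified with $\mathcal{T}(\operatorname{VR}(a,a,c,c))$; I would write down the involution $\sigma$ on its vertex set given by the reflection across the central row of square 2-cells (the row inserted in Definition \ref{Definition VR code} precisely to make room for this symmetry), combined with the unit horizontal shift. One checks that $\sigma$ maps $Q$ to $Q$, $X$ to $Z$ and $Z$ to $X$, and that it is an automorphism of the underlying graph (adjacency is preserved because $\sigma$ comes from a cellular homeomorphism of $M^\Box_{a,a,c,c}$). Since a $\operatorname{V}$-type code is completely determined by its Tanner graph together with the tripartition of the vertices into $Q$, $X$, $Z$, and since $\sigma$ is a graph automorphism realizing the tripartition-swap $(Q,X,Z)\mapsto(Q,Z,X)$, it induces a permutation-equivalence between $\operatorname{VR}(a,a,c,c)$ and the code obtained by swapping $H_X$ and $H_Z$, i.e.\ exactly $\operatorname{VR}^*(a,a,c,c)$. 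The parity-check identity $H_X H_Z^T = 0$ for the swapped code is automatic from $H_Z H_X^T = 0$ by transposition, so no new orthogonality needs checking.

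The main obstacle I anticipate is not conceptual but bookkeeping: verifying that the particular alternating assignment in Figure \ref{Figure VR(3,3)} is genuinely invariant (up to the label-swap) under the chosen reflection-plus-shift, rather than merely plausibly so. The assignment must be specified carefully enough at the boundary identifications of $M^\Box_{a,a,c,c}$ (the sides decorated with arrows) that the reflection respects the gluing; a sign error there would break the graph automorphism. I would handle this by fixing coordinates on the grid of vertices — say rows indexed by the ``height'' along a relator and columns indexed by which relator, with the central inserted row at height $0$ — writing $\sigma(\text{row }h,\text{col }k) = (\text{row }{-h},\text{col }\tau(k))$ where $\tau$ is the transposition of the $x$-column-block with the $y$-column-block, then checking case by case (interior vertices, vertices on the central row, vertices on an identified side) that $\sigma$ preserves edges and carries the $q/x/z$ pattern to the $q/z/x$ pattern. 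Once that verification is in place the conclusion is immediate, exactly as in the proof of the analogous lemma for $\operatorname{VL}(a,a,r)$.
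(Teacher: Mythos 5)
Your proposal is correct and follows essentially the same route as the paper's own (much terser) proof: the isomorphism is realized by translating the vertex labels horizontally by one position and reflecting across the central horizontal line, which swaps the $X$- and $Z$-check sets while preserving the qubit set, giving permutation-equivalence with the dual complex. The only blemish is the garbled count of $n$ in your first paragraph (for $a=b$, $c=d$ one gets $n=4a+4c$, not $2a+2c$), but the fact you actually need, $m_X=m_Z=2a+2c$, is right.
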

\begin{proof}
    The code has the same number of $X$ and $Z$-checks. The isomorphism can be obtained by starting from  $\operatorname{VR}(a,a,c,c)$, translating all the vertex labels horizontally by one position, and reflecting them with respect to the central horizontal line which splits the sets of qubits into two sets of equal size. This is permutation-equivalent to the dual complex.
\end{proof}

By construction, the Tanner cone-complex of these code satisfies the following lemma.
\begin{lemma}
    There is an isomorphism $\pi_1(\mathcal{K}(\text{VR}(a,b,c,d)))\cong R(a,b,c,d)$. 
\end{lemma}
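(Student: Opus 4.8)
The plan is to reduce the statement, exactly as for the $\operatorname{VL}$ codes (cf.\ the remark following Lemma~\ref{lemma V-type codes}), to the computation of the fundamental group of a presentation complex, Lemma~\ref{lemma Presentation complex}. Concretely, I would show that $\mathcal{K}(\operatorname{VR}(a,b,c,d))$ is, up to a homeomorphism of underlying spaces, a cellular refinement of the presentation complex $M_{a,b,c,d}$ of $R(a,b,c,d)$, and then conclude that the two complexes have isomorphic fundamental groups.

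First I would use the description of the Tanner cone-complex of a $\operatorname{V}$-type code from the remark following Lemma~\ref{lemma V-type codes}: by Definition~\ref{Definition VR code} the Tanner graph of $\operatorname{VR}(a,b,c,d)$ equals the $1$-skeleton $M^{\Box,1}_{a,b,c,d}$ of a square cellulation of $M_{a,b,c,d}$, and $\mathcal{K}(\operatorname{VR}(a,b,c,d))$ is obtained from $M^\Box_{a,b,c,d}$ by cutting each square $2$-cell into two triangles along a diagonal joining its $X$- and $Z$-corner. This being a simplicial subdivision, it is a homeomorphism on underlying spaces, so $\pi_1(\mathcal{K}(\operatorname{VR}(a,b,c,d)))\cong\pi_1(M^\Box_{a,b,c,d})$. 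Next I would check that $M^\Box_{a,b,c,d}$ has the same homotopy type as the presentation complex $M_{a,b,c,d}$: the cellulation of Definition~\ref{Definition ER code} only subdivides each $2$-cell of $M_{a,b,c,d}$ into a row of $ab/g$ (resp.\ $2b$) square faces, a cellular subdivision; and the extra row of square $2$-cells inserted in Definition~\ref{Definition VR code} between the $x$-faces and the $y$-faces amounts to gluing a strip $P\times I$ of squares along an edge path $P$ that already bounds the adjacent $2$-cell(s) and re-attaching those cells along the opposite side of the strip, an operation that leaves the homotopy type unchanged. Hence $\pi_1(M^\Box_{a,b,c,d})\cong\pi_1(M_{a,b,c,d})$, which equals $R(a,b,c,d)$ by Lemma~\ref{lemma Presentation complex}; composing the isomorphisms yields the claim.

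The one step deserving care is the second one: one must verify, from Figure~\ref{Figure VR(3,3)} together with the explicit cell counts in Definition~\ref{Definition VR code}, that the inserted row of $2$-cells is attached as a collar along an already-present boundary path, so that it neither adds a non-contractible loop (a new generator) nor fills in a previously non-bounding loop (killing a relation). The matching of the stated numbers $n$, $m_X$, $m_Z$ already pins down this combinatorial picture, which is why — just as for the $\operatorname{VL}$ lemma — the argument ultimately collapses to the single observation that $\mathcal{K}(\operatorname{VR}(a,b,c,d))$ is a triangulation of $M^\Box_{a,b,c,d}$, itself a refinement of $M_{a,b,c,d}$. The remaining ingredients, invariance of $\pi_1$ under cellular subdivision and the presentation-complex computation, are standard and were already set up in Section~\ref{section Preliminaries}.
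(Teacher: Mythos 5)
Your proposal is correct and takes essentially the same route as the paper, whose proof is the one-line observation that the Tanner cone-complex is a triangulation of $M_{a,b,c,d}$; you simply spell out the intermediate steps (triangulation of $M^\Box_{a,b,c,d}$, subdivision invariance of $\pi_1$, and the collar argument for the extra row of squares) that the paper leaves implicit. The care you take with the inserted strip of $2$-cells is a worthwhile addition, but it does not change the underlying argument.
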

\begin{proof}
    This is direct: the Tanner cone-complex corresponds to a triangulation of $M_{a,b,c,d}$.
\end{proof}
For  $b>2$ or $c>2$, this code cannot be associated to any cellular chain complex of a $2$D complex because of the value taken by $q_X$ and $q_Z$. Therefore, the lift of Definition \ref{definition lift of quantum CSS code} is a pertinent method to generate new codes. For $H$ a (finite index) subgroup of $\pi_1(K)$, we define the code $\operatorname{VR}(a,b,c,d)_H$ as the $H$-lift of $\operatorname{VR}(a,b,c,d)$. We report parameters of 2 selected lifted codes in Table \ref{Table VR}, due to the enormous amount of subgroups of $R(a,b,c,d)$, that we could not to investigate in its entirety. Observe that, while the relative distance of the $\mathbb Z_3 \times \mathbb Z_ 3$-lift is lower than the one we had for the $\mathbb Z_3 \times \mathbb Z_ 3$-lift of the $\operatorname{VL}$-code,  the quantity $kd^2/2$ is increased by a factor of 2 due to the change of $k$. This suggests that, by adding few qubits in the input code, as controlled by the value of $(c,d)$, the quantity of logical qubits stored in a lifted code can increase drastically.
\begin{table}[]
\centering
\begin{tabular}{l|l|l|l|l|l|l|l|l}
\hline
$a$& $b$ & $c$&$d$ &$W$& $[\pi_1(\mathcal K) :H]$& $\pi_1(\mathcal K)/H$& $[[n,k,(d_X,d_Z)]]$ & $kd^2/n$ \\ \hline\hline
  3&   3 & 2&2 &5&    1     &   $\{e\}$&                     $[[20,2,(2,2)]]$&          0.4\\ 
  &    & & &&         9&       $\mathbb{Z}_3\times\mathbb{Z}_3$&                     $[[180,10,(6,6)]]$&          2\\ \hline 
  4&    4& 2&2 &6&         1&       $\{e\}$&                     $[[24,4,(2,2)]]$&          0.7\\ 
  &    & & &&         12&       $\mathbb{Z}_{12}$&                     $[[288,6,(10,10)]]$&          2.1\\

\end{tabular}
\caption{Parameters of selected lifted codes $\operatorname{VR}(a,b,c,d)_H$, with $W:=\max (w_X,w_Z,q_X,q_Z)$. A group written $G'.G/G'$ refers to a non split extension of $G/G'$ by $G'$. Numerical results of distance correspond to upper bounds obtained using the GAP package $\mathtt{QDistRnd}$ \cite{Pryadko2022}.}\label{Table VR}
\end{table}

\subsubsection{V-type code: asymmetric version}

In this section, we consider a restriction of the group $R(a,b,c,d)$ and a variation of the previous code. For clarity, we define a new group given with the presentation
\[
J(a,b)=\langle x_1,\dots,x_{a}\:|\: x_1^b=\dots=x_{a}^b\:\rangle.\]
In particular, we have $J(a,b)= R(a,a,b-1,1)$. Therefore, the presentation complex for this code is constructed similarly to the previous one. Using a square cellulation for this complex, we define a new $\operatorname{V}$-type code.
\begin{definition}\label{Definition VJ code}
    Let $2\leq a, b$ be integers. Consider the cellulation $M^\Box_{a,a,b-1,1}$, constructed by subdividing each 2-cell of $M_{a,a,b-1,1}$  into $2b$ square faces. We define the code $\operatorname{VJ}(a,b)$ by assigning both qubits and checks to the vertices of $M^\Box_{a,a,b-1,1}$, such that $Z$-checks are assigned to vertices of the central circle, alternating with qubits as shown in Figure \ref{Figure VJ(3,3)}, other qubits and $X$-checks being assigned accordingly, together with the identification $\mathcal{T}(\operatorname{VJ}(a,b))= M^{\Box,1}_{a,a,b-1,1}$. The corresponding chain complex is $ \operatorname{VJ}(a,b):=\mathbb{F}_2^{m_Z} \xrightarrow[]{H_Z^T}\mathbb{F}_2^{n} \xrightarrow[]{H_X} \mathbb{F}_2^{m_X}$, with number of qubits and checks:
\begin{itemize}
    \item $n=2a+2b$,
    \item $m_X=2a$,
    \item $m_Z=2b$.
\end{itemize}
The weights of parity-check matrices are:
\begin{itemize}
    \item $w_X=b+2$,
    \item $w_Z=a+2$,
    \item $q_X=a$,
    \item $q_Z=b$.
\end{itemize}
\end{definition}
\begin{figure}[t]
  \centering
  \includegraphics[scale=1]{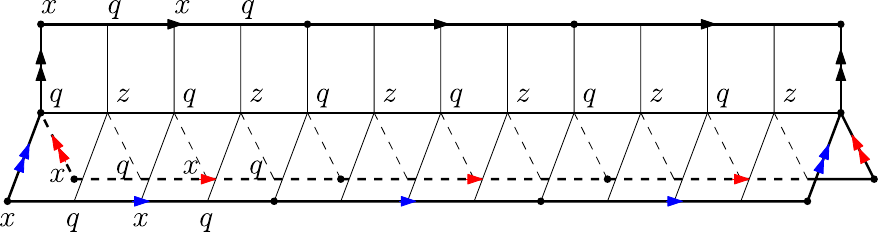}
  \caption{Square cellulation of $M_{3,3}$ corresponding to $\operatorname{VJ}(3,3)$. Sides decorated with arrows of the same color and type are identified.}\label{Figure VJ(3,3)}

\end{figure}
For $a=2$, respectively $b=2$, this code is topological and $Z$-checks, respectively $X$-checks, can be viewed as faces. More precisely, for $a=2$, this code is $\operatorname{EL}(b,b)$. Aside from these choices, there are no 2D cellular chain complex isomorphic to this code. Nevertheless, its Tanner cone-complex still satisfies the following lemma.
\begin{lemma}
   There is an isomorphism $\pi_1(\mathcal{K}(\operatorname{VJ}(a,b)))\cong J(a,b)$.
\end{lemma}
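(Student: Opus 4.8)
The plan is to reuse, almost verbatim, the argument already given for the $\operatorname{VL}$ and $\operatorname{VR}$ codes: show that $\mathcal{K}(\operatorname{VJ}(a,b))$ is homeomorphic to the presentation complex of $J(a,b)$, and then read off the fundamental group from Lemma \ref{lemma Presentation complex}.

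First I would unwind Definition \ref{Definition VJ code}. By construction the Tanner graph $\mathcal{T}(\operatorname{VJ}(a,b))$ is declared to be the $1$-skeleton $M^{\Box,1}_{a,a,b-1,1}$ of the square cellulation $M^\Box_{a,a,b-1,1}$ of the presentation complex $M_{a,a,b-1,1}$. Invoking Lemma \ref{lemma V-type codes} (whose proof singles out the groups $L$, $R$, $J$ precisely because they meet its hypotheses), together with the comment immediately following it, the associated Tanner cone-complex $\mathcal{K}(\operatorname{VJ}(a,b))$ is obtained from $M^\Box_{a,a,b-1,1}$ by subdividing each square $2$-cell along a diagonal into two triangles. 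Subdividing cells is a homeomorphism of the underlying space, and a square cellulation is by definition merely a refinement of the cellulation of the presentation complex; hence $\mathcal{K}(\operatorname{VJ}(a,b))$, $M^\Box_{a,a,b-1,1}$ and $M_{a,a,b-1,1}$ are all homeomorphic topological spaces.

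Next I would apply Lemma \ref{lemma Presentation complex} together with the identity $J(a,b)=R(a,a,b-1,1)$ noted in the text: the presentation complex $M_{a,a,b-1,1}$ satisfies $\pi_1\big(M_{a,a,b-1,1}\big)\cong R(a,a,b-1,1)=J(a,b)$. Since the fundamental group is a homotopy — in particular homeomorphism — invariant, the chain of homeomorphisms above yields $\pi_1\big(\mathcal{K}(\operatorname{VJ}(a,b))\big)\cong \pi_1\big(M_{a,a,b-1,1}\big)\cong J(a,b)$.

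The only genuinely non-formal point — and where I expect the work to lie — is checking that the qubit/check assignment of Definition \ref{Definition VJ code} really does realize the cone-complex $\mathcal{K}$ of Definition \ref{definition Tanner cone-complex 1} as exactly this triangulated $M^\Box_{a,a,b-1,1}$: one must verify that an edge of $E_{XZ}$ between an $X$-vertex and a $Z$-vertex occurs precisely along the diagonals of the squares (equivalently, that two checks of opposite type share a qubit in their support iff they are opposite corners of a common square face), and that the triangular faces $F=\{\{x,q,z\}\}$ are exactly the two halves of each square. This is a finite incidence verification read off from Figure \ref{Figure VJ(3,3)}; once it is settled, the homeomorphism — and hence the isomorphism of fundamental groups — is immediate.
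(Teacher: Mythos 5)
Your proposal is correct and follows essentially the same route as the paper: the paper's proof is the one-line observation that $\mathcal{K}(\operatorname{VJ}(a,b))$ is the triangulation obtained from $M_{a,a,b-1,1}$ by subdividing each face with a diagonal edge, after which the identification of the fundamental group with $J(a,b)=R(a,a,b-1,1)$ is immediate from the presentation-complex lemma. Your added remark about verifying that the $E_{XZ}$ edges and triangular faces of the cone-complex coincide exactly with the diagonals and half-squares is the same finite incidence check the paper leaves implicit.
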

\begin{proof}
   The Tanner cone-complex is the triangulation obtain from $M_{a,a,b-1,1}$ by subdividing each face by a diagonal edge.
\end{proof}

Let $H$ be a (finite index) subgroup of $\pi_1(\mathcal{K})$. We define the code $\operatorname{VJ}(a,b)_H$ as the $H$-lift of $\operatorname{VJ}(a,b)$ and report parameters of a selection of lifted codes in Table \ref{Table VJ}. We only mention a few due to the enormous set of subgroups of $J(a,b)$, that we were not able to investigate in its entirety. Notice that the values of $kd^2/n$ (and $d/n$) for the $\mathbb{Z}_3\times\mathbb{Z}_3$ and  $\mathbb{Z}_9\rtimes\mathbb{Z}_3$-lifts of $\operatorname{VJ}(3,3)$ are slightly increased compared with the corresponding lifts of $\operatorname{VL}(3,3)$.

\begin{table}[]
\centering 
\begin{tabular}{l|l|l|l|l|l|l}
\hline
$a$& $b$ &$W$& $[\pi_1(\mathcal K):H]$& $\pi_1(\mathcal K)/H$& $[[n,k,(d_X,d_Z)]]$ & $kd^2/n$ \\ \hline\hline
  3&   3 &5&    1     &   $\{e\}$&                     $[[12,2,(2,2)]]$&          0.7\\ 
  &    &&         9&       $\mathbb{Z}_3\times\mathbb{Z}_3$&                     $[[108,8,(6,6)]]$&          2.7\\ 
  &    &&         27&       $\mathbb{Z}_9\rtimes\mathbb{Z}_3$&                     $[[324,4,(18,18)]]$&          4\\ \hline 
  4&   3 &6&         1&       $\{e\}$&                     $[[14,3,(2,2)]]$&          0.9\\
 &  && 42& $\mathbb{Z}_2\times ( \mathbb{Z}_7\rtimes \mathbb{Z}_3)$& $[[588,11,(12,12)]]$&2.7\\
\end{tabular}
\caption{Parameters of selected lifted codes $\operatorname{VJ}(a,b)_H$, with $W:=\max (w_X,w_Z,q_X,q_Z)$. A group written $G'.G/G'$ refers to a non split extension of $G/G'$ by $G'$. Numerical results of distance correspond to upper bounds obtained using the GAP package $\mathtt{QDistRnd}$ \cite{Pryadko2022}.}\label{Table VJ}
\end{table}

\section{Discussion and perspectives}

This work started the investigation of lifted codes beyond LPCs. In Section \ref{section Lift of a quantum CSS code}, we have given a definition of the lift based on the idea of covering of the Tanner cone-complex and in Section \ref{section New constructions and lifts}, we introduced a method to construct liftable CSS codes based on the presentation complex of certain groups. Let us now give a few comments and new directions.\par

Firstly, there might exist alternative approaches to lift a CSS code, possibly generalizing Definition \ref{definition lift of quantum CSS code}. There are, undoubtedly, coverings  of the Tanner graph of a given CSS code, which represent valid codes and cannot be obtained as the lifts of the Tanner cone-complex. We believe that such notions might rely more on the use of an algebraic structure rather than geometrical ones.\par

Secondly, we have only introduced examples of non-topological codes associated to a specific instances of the group presentation $R(a,b,c,d)$ of Section \ref{Example 2: group presentation with more relations}. At the time of the writing, we are aware of other group presentations yielding non-topological liftable codes, but these codes,  being of much greater length, are challenging to investigate numerically. It is therefore necessary to understand the behavior of the dimension and distance of a given lifted code analytically. Nevertheless, a less ambitious goal is to devise new examples of moderate length codes, based on certain group presentations, in order to improve on the codes presented here. The parameters exposed in Tables \ref{Table VL}, \ref{Table VR} and \ref{Table VJ} are indeed very far from the ones of the recently introduced group algebra codes \cite{PryadkoGAC2024}, or even Abelian lifted product codes \cite{Panteleev2021degeneratequantum}. Notice, however, that we have only dealt with regular coverings of the Tanner cone-complex, while, as shown in Section \ref{section Applications}, the asymptotically good codes are obtained as non-regular coverings of HPCs.\par

Finally, while it seems that the V-type codes $\operatorname{VR}(a,b,c,d)_H$ and $\operatorname{VJ}(a,b)_H$ are likely to reach interesting parameters at moderate length, we haven't been able to analyze all the lifts, due to the very large amount of them. There is therefore a potential need to generate them by other means. For example, a simple method to produce a subset of them is as follows: given $b$ copies of $\operatorname{EL}(a,a)_H$, we can create a lift of $\operatorname{VJ}(a,b)$ by gluing them along the inverse image (by the covering map) of the central circle in $\mathcal{T}(\operatorname{EL}(a,a))$. We can do a similar construction of certain lifts of $\operatorname{VR}(a,b,c,d)$ , starting with copies of $\operatorname{VL}(a,b)_{H'}$. We leave a more rigorous description and analysis of these codes for future work.

\section*{Acknowledgement}
The author would like to thank Benjamin Audoux and Anthony Leverrier for valuable discussions throughout this work, as well as for carefully reading and commenting the manuscript. The author would also like to thank Allen Hatcher for answering a question on a forum. We acknowledge the Plan France 2030 through the project NISQ2LSQ ANR-22-PETQ-0006.

\bibliographystyle{alpha}
\newcommand{\etalchar}[1]{$^{#1}$}

\newpage
\appendix

\section{Further restrictions on covering maps}
\begin{figure}[t]
  \centering
  \includegraphics[scale=0.9]{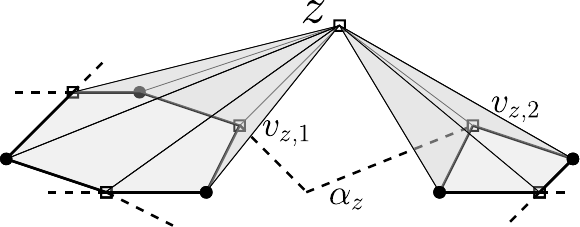}
  \caption{Cone on an induced subgraph $\mathcal{T}_z$ with two disconnected components, linked by a path $\alpha_z$ in $\mathcal{T}(C_X)$.}\label{figure cone-Tz-disconnected}

\end{figure}
In general, for a code $C$ with $\mathcal{T}:=\mathcal{T}(C)$ and $\mathcal{K}:=\mathcal{K}(C)$, some of the induced subgraphs $\mathcal{T}_{z}$ for $z\in Z$, or $\mathcal{T}_{x}$ for $x\in X$, and even the Tanner graphs $\mathcal{T}(C_X)$, $\mathcal{T}(C_Z)$ can be disconnected. This happens, for instance, in Shor's $9$-qubit code \cite{Shor1995}. In this case, when we create the Tanner cone-complex following Proposition \ref{Proposition Tanner cone-complex}, taking the cone on a disconnected subgraph produces new generators in the fundamental group of the complex. 
\begin{lemma}
    Suppose $\mathcal{T}_z$ has $p$ disjoint components, and $\pi_1(\mathcal{T}_z)\cong G$. There exists a subgroup $H\trianglelefteq G$, such that \[\pi_1(\mathcal{T}\cup \operatorname{C}\mathcal{T}_z)\cong G/H * F_{p-1}\]
\end{lemma}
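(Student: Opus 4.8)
The plan is to analyze the effect on the fundamental group of attaching a single cone $\operatorname{C}\mathcal{T}_z$ to $\mathcal{T}:=\mathcal{T}(C_X)$ along the inclusion $i_z:\mathcal{T}_z\hookrightarrow\mathcal{T}$, using van Kampen's theorem. Since $\operatorname{C}\mathcal{T}_z$ is contractible (it is a cone), the relevant computation is that of the pushout $\pi_1(\mathcal{T})*_{\pi_1(\mathcal{T}_z)}\pi_1(\operatorname{C}\mathcal{T}_z)$, but with the crucial subtlety that $\mathcal{T}_z$ has $p$ connected components, so van Kampen must be applied in its groupoid form, or equivalently with a careful bookkeeping of basepoints. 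First I would fix a spanning tree in $\mathcal{T}$ and choose, inside it, for each component $\mathcal{T}_z^{(1)},\dots,\mathcal{T}_z^{(p)}$ of $\mathcal{T}_z$, a basepoint $v_j$ together with a tree-path $\gamma_j$ from the global basepoint $v_1=q_0$ to $v_j$. Attaching the cone on $\mathcal{T}_z$ is homotopy equivalent to first coning off each component $\mathcal{T}_z^{(j)}$ separately (adding one cone-point per component) and then, for $j=2,\dots,p$, identifying the $p$ cone points to a single vertex $z$; each such identification of two points in a path-connected space is homotopy equivalent to wedging on a circle, contributing one free generator. This already suggests the shape of the answer: killing $\pi_1$ of each component contributes the quotient, and gluing the $p$ cone-tips together contributes $F_{p-1}$.

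The key steps, in order, would be: (1) Identify $H\trianglelefteq G=\pi_1(\mathcal{T}_z)$ as the normal closure in $G$ of the images $(i_z)_*\pi_1(\mathcal{T}_z^{(j)},v_j)$ conjugated by the chosen tree-paths — wait, more precisely, $H$ should be the normal closure in $\pi_1(\mathcal{T})$, restricted appropriately; I would set up the computation so that $H$ is exactly the normal subgroup of $G:=\pi_1(\mathcal{T})$ generated by the loops in $\mathcal{T}$ that become null-homotopic once the cone is attached, namely those of the form $\gamma_j\,\ell\,\gamma_j^{-1}$ for $\ell$ a loop in $\mathcal{T}_z^{(j)}$. (2) Apply van Kampen with the open cover $U=$ (a neighborhood of $\operatorname{C}\mathcal{T}_z$ deformation-retracting to the cone, hence simply connected) and $V=$ (a neighborhood deformation-retracting to $\mathcal{T}$), whose intersection deformation-retracts to $\mathcal{T}_z$ with its $p$ components. (3) Handle the disconnected intersection: here I would use the standard device of choosing, within $U\cap V$, the $p$ basepoints and paths as above, so that the groupoid van Kampen theorem reduces to the statement that $\pi_1(\mathcal{T}\cup\operatorname{C}\mathcal{T}_z)$ is the quotient of $\pi_1(\mathcal{T})*F_{p-1}$ — where the free factor is generated by the classes $\gamma_j\,\gamma_1^{-1}$ read through the cone for $j=2,\dots,p$ — by the normal closure of the $(i_z)_*\pi_1(\mathcal{T}_z^{(j)})$. (4) Recognize that, after the dust settles, the quotient of $\pi_1(\mathcal{T})$ by $H$ times the free part splits as a free product $G/H * F_{p-1}$, since the $F_{p-1}$ generators are not touched by any of the imposed relations (each relation lies in the $\pi_1(\mathcal{T})$ factor).

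The main obstacle will be step (3): getting the bookkeeping of basepoints and connecting paths exactly right so that the free factor $F_{p-1}$ is correctly isolated and seen to be a free complement rather than getting entangled with the relations. The fundamental group of a space built by attaching a cone on a \emph{disconnected} subspace is genuinely more delicate than the connected case — one must either invoke the fundamental groupoid version of van Kampen or decompose the construction into the "cone each component" step plus the "wedge on $p-1$ circles" step and argue the two contributions are independent. I would favor the decomposition approach, since it makes the free product structure transparent: coning each $\mathcal{T}_z^{(j)}$ gives $\pi_1\cong G/H$ (ordinary van Kampen, connected intersection), and then identifying $p$ points in this connected space is, up to homotopy equivalence, wedging with $\bigvee_{p-1}S^1$, whose effect on $\pi_1$ is precisely free product with $F_{p-1}$. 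The only remaining care is to verify that the subgroup $H$ produced by coning the components individually coincides with the $H$ in the statement and is indeed normal in $G$, which is immediate from the definition of $H$ as a normal closure.
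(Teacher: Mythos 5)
The paper states this lemma in its appendix without any proof, so there is nothing to compare your argument against; it must stand on its own, and it does. Your decomposition — writing $\mathcal{T}\cup\operatorname{C}\mathcal{T}_z$ as the result of first coning off each connected component $\mathcal{T}_z^{(j)}$ separately and then identifying the $p$ apexes — is the cleanest way to deal with the disconnected intersection. The first step is an iterated application of ordinary van Kampen with connected intersections (each cone is contractible), and yields $\pi_1(\mathcal{T})/H$ with $H$ the normal closure of the tree-path-conjugated images of the $\pi_1(\mathcal{T}_z^{(j)},v_j)$; the second step, identifying $p$ points in a path-connected complex, is homotopy equivalent to wedging on $p-1$ circles (attach $p-1$ arcs and collapse them), and hence contributes the free factor $F_{p-1}$ without touching the relations, which settles the worry you raise in step (3). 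One point you flag in passing deserves to be stated plainly: as literally written the lemma takes $G\cong\pi_1(\mathcal{T}_z)$, which cannot be the intended meaning — if $\mathcal{T}_z$ were a disjoint union of trees, the claim would force $\pi_1(\mathcal{T}\cup\operatorname{C}\mathcal{T}_z)\cong F_{p-1}$, which fails whenever $\mathcal{T}$ has cycles away from $\mathcal{T}_z$. Your reading, with $G=\pi_1(\mathcal{T})$ and $H\trianglelefteq G$ the normal closure just described, is the statement the paper actually needs, and your proof establishes it.
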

This has no impact on the definition of the lift given in Definition \ref{definition lift of quantum CSS code}, but when doing numerical studies, it can be desirable to produce coverings of $\mathcal{K}$ according to subgroups of $\pi_1(\mathcal{K})$ not involving these generators. To this end, we can trivialize the corresponding generators of the fundamental group by adding new faces in $\mathcal{K}$. In the following procedure, these faces have usually more than 3 edges in their boundaries. Let $z$ be such that $\mathcal{T}_z$ is disconnected, as illustrated in Figure \ref{figure cone-Tz-disconnected}, then
\begin{enumerate}
    \item   Choose a basepoint for each connected component of $\mathcal{T}_{z}$.
    \item For each connected component of $\mathcal{T}(C_X)$, fix an embedded simply-connected path $\alpha_z$ in $\mathcal{T}(C_X)$, going from subgraphs to subgraphs, through each basepoint. For a specific component of $\mathcal{T}(C_X)$ containing $r$ connected components of $\mathcal{T}_z$, label the basepoints according to this path $v_{z,1\leq i\leq r}$, by specifying the starting and end points.
    \item For each reduced subpath of $\alpha_z$ between $v_i$ and $v_{i+1}$, add a (non-triangular) face along the loop formed by $[z,v_{z,i}]$, the subpath from $v_{z,i}$ to $v_{z,i+1}$ and $[v_{z,i+1},z]$.
    \item Repeat steps 1. to 3. for all connected component of $\mathcal{T}(C_X)$ and $\mathcal{T}_z$ and for all $z\in Z$ with disconnected induced subgraphs.
\end{enumerate}
We call the result $\mathcal{K}(C)_{\operatorname{Filled}}$. According to Lemma \ref{lemma restriction covering map}, any covering of  $\mathcal{K}(C)_{\operatorname{Filled}}$ defines one for  $\mathcal{K}(C)$, but not the other way around. Then, everything in Section \ref{section Lift of a CSS code: explicit construction} remains true if we replace $\mathcal{K}(C)$ with  $\mathcal{K}(C)_{\operatorname{Filled}}$.

\section{Proofs for Section \ref{section classification of lifts of HPC}  }\label{section proof of lift of HPC}

We now proceed with the proof of each statement of Section \ref{section classification of lifts of HPC}.

\begin{proof}[Proof (of Lemma \ref{Lemma index goursat}):]
See \cite{Crawford1975} Corollary 4, when the groups are finite. From Lemma \ref{Theorem Goursat}, $H$ has its elements from the set $\{(g_1,g_2)\in G_{11}\times G_{21} | \psi(g_1 G_{12})=g_2 G_{22}\}$. The index $[G_1\times G_2:H]$ is the number of $H$-cosets in $G_1\times G_2/H$. \par
Let $(g_1,g_2)H$ be an element in a $H$-coset. We change coset if we multiply by an element not in $H$. According to the definition, fixing $g_1\in G_{11}$, we change $H$-coset every time we change $G_{22}$-coset, and there are $[G_2:G_{22}] $ of them for each $G_{11}$-coset. Therefore, there are  $[G_1:G_{11}]\cdot[G_2:G_{22}]$ $H$-cosets. 
The second equality is obtained by considering the symmetric definition of the set of $H$, since $\psi$ is an isomorphism.\qedhere
\end{proof}

\begin{proof}[Proof (of Lemma \ref{proposition Goursat Lift})]
\begin{enumerate}
    \item The covering space $ T_{i2}$ being a right $\Gamma$-lift of $  T_{i1}$, it has a discontinuous and free left $\Gamma$-action. Therefore, $p_{i2}$ is the finite regular cover $p_{i2}: T_{i2}\rightarrow  \Gamma \setminus T_{i2}$ with $\operatorname{Deck}(p_{i2})\cong\Gamma$. Moreover, this cover is associated to the subgroup $ p_{i2*}\pi_1( T_{i2})$ of $\pi_1( T_{i1})$ and $\operatorname{Deck}(p_{i2})\cong \pi_1( T_{i1})/ p_{i2*}\pi_1( T_{i2})$. We hence have a first isomorphism $i:\pi_1(T_{11})/ p_{12*}\pi_1( T_{12})\rightarrow \pi_1( T_{12})/ p_{22*}\pi_1( T_{22})$.\par
From the theory of covering maps, the cover $p_{i1}$ being finite connected, it is associated to a finite index subgroup of $\pi_1(T_i)$ and there is an isomorphism $p_{i1*}$ from $\pi_1( T_{i1})$ onto this subgroup, identified with $p_{i1*} \pi_{i}( T_{i1})$. Choosing the normal subgroup $p_{i2*}\pi_1( T_{i2})$ of $\pi_1( T_{i1})$, this induces another isomorphism $\rho_{i}: \pi_1( T_{i1})/ p_{i2*}\pi_1( T_{i2})\rightarrow p_{i1*}\pi_1(T_i)/p_{i1*} p_{i2*}\pi_1( T_{i2})$. Composing this isomorphism with $i$, we have an isomorphism
\[\psi: p_{11*}\pi_1(T_1)/p_{11*} p_{12*}\pi_1( T_{12})\rightarrow p_{21*}\pi_1(T_2)/p_{21*} p_{22*}\pi_1( T_{22}), \]
defined as $\psi:=\rho_2\circ i\circ\rho_1^{-1}$.\par
The covering map $p_i$ being also finite connected, the composition $p_{i1} \circ p_{i2}$ is a well-defined degree $d_i.|\Gamma|$ covering map and induces a homomorphism $(p_{i1} \circ p_{i2})_*=p_{i1*} p_{i2*}$. This chain of isomorphisms defines triples, $p_{i1*} p_{i2*}\pi_1( T_{i2})\trianglelefteq p_{i1*} \pi_1( T_{i1}) \leq \pi_1(T_i) $, as in the proposition. All these subgroups $G_{ij}$ being associated to finite covers, they have finite index in $\pi_1(T_i)$ and are free groups of finite rank given by\[ [\pi_1(T_i):G_{ij}]=\frac{\operatorname{rank}(G_{ij})-1}{\operatorname{rank}(\pi_1(T_i))-1}.\]
Therefore $\psi$ is a valid isomorphism which defines a Goursat quintuple as claimed.\par
It is straightforward to determine the index of the subgroup $H$ associated to this quintuple. Indeed $[\pi_1(T_i):G_{i1}]=d_i$ and $[G_{i1}:G_{i2}]=|\Gamma|$. From, Lemma \ref{Lemma index goursat}, the associated subgroup is of index $[\pi_1(T_1)\times \pi_1(T_2):H]=d_1.|\Gamma|.d_2$.

\item Let  $\{G_{11}, G_{12}, G_{21}, G_{22}, \psi\}$ be any Goursat quintuples of the product of free groups, $G_1\times G_2$ and let $T_i$ be a graph with $\pi_1(T_i)=G_i$. We know from \cite{GROSS1977273} that every subgroup of $\pi_1(T_i) $ can be obtained as an element $p_{i1*} \pi_1( T_{i1})$ for some permutation voltage $T_{i1}$ of $T_i$ such that the associated cover is connected. Moreover, we have access to all normal subgroups of $\pi_1(T_{i1})$ by creating normal connected covers with group voltage $T_{i2}$ of $T_{i1}$. Therefore, we can obtain all subgroups of $p_{i1*} \pi_1( T_{i1})$ in the form of $p_{i2*} \pi_1( T_{i2})$, for some $T_{i2}$. It only remains to fix an isomorphism equal to $\psi$.

\item To obtain a normal subgroup $H$, it is also necessary for $p_{i1}$ and $p_{i1}\circ p_{i2}$ to be connected regular covers. This is direct from the correspondence established in the proof of the 1st claim and Lemma \ref{Theorem Goursat}. Therefore, $p_{i1}$ can also be obtained from a finite connected $\Lambda_i$-lift of $ T_i$ for some group $\Lambda_i$. Moreover, since we must have $G_{i1}/G_{i2}\leq \operatorname{Z}(G_i/G_{i2})$, then $\Gamma$ needs to be Abelian. Notice however that it is not a sufficient condition.\qedhere

\end{enumerate}
\end{proof}

\begin{proof}[Proof (of Proposition \ref{Proposition Antidiagonal action Goursat})]
First, it is easily verified that $G_{12}\times G_{22} \trianglelefteq H $ but in general $G_{12}\times G_{22}$ is not a normal subgroup of $\pi_1$ unless $H$ is a normal subgroup. According to Lemma \ref{Lemma covering of covering}, $\eta: T_{12}\times T_{22}\rightarrow (T_1\times T_2)_H$ is a normal cover with group of deck transformations $\operatorname{Deck}(\eta)\cong\pi((T_1\times T_2)_H)/\eta_*\pi_1( T_{12}\times T_{22})$ and therefore $\operatorname{Deck}(\eta)\setminus(T_{12}\times T_{22})\cong (T_1\times T_2)_H$. \par

 It is straightforward to see, from the definition of $H$, that $D=H/G_{12}\times G_{22}$ but we give several steps for completeness. Indeed, let $Q: \pi_1 \rightarrow \pi_1/G_{12}\times G_{22}$, then $Q$ is not a homomorphism, but its restriction to $H$ is a surjective one. $Q$ can be decomposed as $Q_1\times Q_2$ with $Q_i:G_i\rightarrow G_{i}/G_{i2}$, which are also not homeomorphism. For an element $\gamma\in \Gamma$, $(\gamma,\psi(\gamma))\in G_{11}/G_{12} \times G_{21}/G_{22}$, we show that any element in the preimage $(Q_1^{-1}(\gamma), Q_2^{-1}((\psi(\gamma)))$ is in $H$. Indeed, let $a\in Q_1^{-1}(\gamma)$ and $b\in Q_2^{-1}(\psi(\gamma))$ then $(a,b)\in G_{11}\times G_{21}$. Moreover $\psi (a G_{12})=\psi (\gamma)$ and $Q_2^{-1}(\psi(\gamma))G_{22}=\psi (\gamma)$, hence $(a,b)\in H$ and $Q^{-1}(D)\subseteq H$. Moreover, let $(a,b)\in H$, Then, $\psi(a G_{12})=bG_{22}$ and let us call $\gamma:=a G_{12}$. Since $\psi$ is an isomorphism, we have the unique identification $bG_{22}=\psi(\gamma)$. Hence, each element $(a,b)\in H$ is mapped surjectively by $Q$ onto an element $(\gamma,\psi(\gamma))$, and we showed that  $Q^{-1}(D)= H$. Therefore $D=H/G_{12}\times G_{22}$.\par

We proceed to show that $\operatorname{Deck}(\eta)\cong D$. Indeed, the main covering $p_H$ gives the isomorphism $\pi_1((T_1\times T_2)_H)\cong p_{H*}\pi_1((T_1\times T_2)_H) = H$ and a chain of covering $p_H \circ \eta $ yields the isomorphism $\eta_*\pi_1(T_{12}\times T_{22})\cong p_{H*}\eta_*\pi_1(T_{12}\times T_{22})=G_{12}\times G_{22}$, and $\eta_*\pi_1(T_{12}\times T_{22})\trianglelefteq \pi_1((T_1\times T_2)_H)$. We look at the homomorphism $P:=Q|_H\circ p_{H*}$ where $Q|_H$ is the restriction of $Q$ to $H$. The map $p_{H*}$ is an injective homomorphism, but its restriction to its image is an isomorphism onto $H$, and $Q|_H$ is surjective. Therefore, $P:\pi_1((T_1\times T_2)_H\rightarrow H/G_{12}\times G_{22}$ is an epimorphism and it is direct to compute its kernel. Indeed, $\operatorname{ker} Q=G_{12}\times G_{22}$ yields $\operatorname{ker} P= p_{H*}^{-1}(G_{12}\times G_{22})=\eta_*\pi_1(T_{12}\times T_{22})$. Therefore, by the first isomorphism theorem, 
\[ \pi_1((T_1\times T_2)_H)/\eta_*\pi_1(T_{12}\times T_{22})\cong H/G_{12}\times G_{22}. \]
This finishes the proof by reminding that the l.h.s is isomorphic to $\operatorname{Deck}(\eta)$ and the r.h.s is $D$.  \qedhere
\end{proof}

It is now possible to prove the equivalence between the lifted product code and the lift of a quantum code stated in Proposition \ref{Proposition Lift equivalence PK}

\begin{proof}[Proof (of Proposition \ref{Proposition Lift equivalence PK})]
    \begin{enumerate}
        \item In this setting the subgroups satisfy $G_{i1}=G_i$, hence the covering maps  $p_{i1}: T_{i1}\rightarrow T_i$ are trivial and in particular normal.  For the cover of $T_1\times T_2$ to be normal, we require $H\trianglelefteq \pi_1$, which in this case is reduced to $G_{i}/G_{i2}\leq \operatorname{Z}(G_i/G_{i2})$, and therefore $\Gamma \leq \operatorname{Z}(\Gamma)$.
        \item There is an obvious similarity between the geometrical complexes which represent the code $C^\Xi$ and $\mathcal{C}$. The Tanner graph of $C^\Xi$ is the 1-skeleton of the complex $D\setminus T_{12}\times T_{22}$ while the Tanner graph of $\mathcal{C}$ is the 1-skeleton of the complex $T_{12}\times_\Gamma T_{22}$. These graphs are the same because $D\setminus T_{12}\times T_{22}=T_{12}\times T_{22}/\sim$, with $\sim $ the relation \[((c_1,\gamma_1 ),(c_2,\gamma_2)) \sim ((c_1,\gamma \gamma_1 ),(c_2,\psi(\gamma)\gamma_2)) \Leftrightarrow ((c_1,\gamma^{-1}\gamma_1 ),(c_2,\gamma_2)) \sim ((c_1,\gamma_1 ),(c_2,\psi(\gamma)\gamma_2)),  \]and the equivalence on the r.h.s is the one defining $T_{12}\times_\Gamma T_{22}$. The CSS codes represented by these Tanner graphs are equal if the interpretation that we make of subsets of vertices as qubits, $X$-checks and $Z$-checks are the same. Let $A_i$ and $B_i$ denote the sets of respectively check nodes and bit nodes in $T_{i2}$. The condition is satisfied as for both graphs, the set of qubits is identified with the set $V_1:=B_1\times B_2 \cup A_1\times A_2/\sim$, and the set of checks is identified with the set $V_2:=B_1\times A_2\cup A_1\times B_2/\sim$. \qedhere
    \end{enumerate}
\end{proof}

\begin{proof}[Proof (of Corollary \ref{Corollary linear parameters})]
Let $D_{w}^i$, $i=1,2$, be the bipartite graph with 2 vertices, noted $v_L$ and $v_R$, and $w$ edges $\{e_{1\leq i\leq w}\}$ connecting them. This is the graph considered in \cite{Panteleev2021}, in which bit variables are assigned to the edges, and parity-checks to the vertices. Let $\widehat{D}_{w}^i$ denote the Tanner graphs of these codes $C_i$, where each edge in $D_w^i$ has been split in two by a bit vertex $\{v_{1\leq i\leq w}\}$. For each bit vertex we call \textit{left} and \textit{right} the edges with endpoint respectively the vertex $v_L$ and $v_R$. The code $C$ corresponds to $C_1\otimes C_2^*$.\par
Let $h_1\in \mathbb{F}_2^{r_1\times w}$, $h_2\in \mathbb{F}_2^{r_2\times w}$ be two parity-check matrices verifying the condition in the proof of Proposition 1 of \cite{Panteleev2021} and $T_i:=\mathcal{T}(C(D_w^i,h_i))$ be the Tanner graph of the Tanner code \cite{Sipser1996} defined by the graph $D_w^i$ and the local codes $\operatorname{ker}h_1 $, $\operatorname{ker}h_2$. We can infer a covering of $T_i$ from one of $D_w^i$, reproducing the Tanner code $C(\bar{X}_i,h_i)$ over the Cayley graph $\bar X_i$ with Ramanujan properties \cite{Lubotzky1988} used in \cite{Panteleev2021}. For this, we create a covering of $T_i$ from a specific voltage assignment. \par
Notice that in the Tanner code construction, each edge $e_{i,L}=\{v_i,v_{L}\}$ or $e_{i,R}=\{v_i,v_{R}\}$ in $\widehat{D}_{w}^i$ is replaced by a group of left or right edges $E_{i,L}$ or $E_{i,R}$. Let $\nu$ denotes the voltage assignment on $D_w$ giving rise to the derived graph $\bar X$ and $\hat{\nu}$, the one in $T_i$. If $\nu$ assigns element $\gamma$ to $e_i$ running from $v_L$ to $v_R$, then $\hat{\nu}$  assigns the same element $\gamma$ to all edges of $E_{i,L}^*$ running from $v_L$ to $v_i$ and the identity element to edges of $E_{i,R}^*$. With this choice of $\hat{\nu}$, the derived graph $T_{i2}$ of the cover is equal to the Tanner graph of  $C(\bar{X},h)$.\par
We can then use the covering $T_{i1}\times T_{i2}\to D\setminus T_{i1}\times T_{i2}$ as described in Proposition \ref{Proposition Lift equivalence PK} to create the code $\mathcal{C}=C(\bar{X},h_1)\otimes_\Gamma C^*(\bar{X},h_2)$.\qedhere
\end{proof}

\end{document}